\documentclass[11pt]{article}
\usepackage{graphicx}
\usepackage{subcaption}
\usepackage{todonotes}
\usepackage{mathtools}
\usepackage{soul}
\usepackage{amsthm}
\usepackage{amsmath,amssymb}
\usepackage{systeme}

\usepackage{gensymb}
\usepackage[ruled,vlined,linesnumbered]{algorithm2e}
\SetKwComment{Comment}{/* }{ */}
\SetKwInOut{Input}{input}\SetKwInOut{Output}{output}

\graphicspath{{./figs/}}
\usepackage[absolute]{textpos}

\usepackage{lineno}

\usepackage{fullpage,cite}
\usepackage[top=0.8in, bottom=0.9in, left=0.9in, right=0.9in]{geometry}

\usepackage[pdftex, plainpages = false, pdfpagelabels,
                 bookmarks=false,
                 bookmarksopen = true,
                 bookmarksnumbered = true,
                 breaklinks = true,
                 linktocpage,
                 pagebackref,
                 colorlinks = true,  
                 linkcolor = blue,
                 urlcolor  = cyan,
                 citecolor = red,
                 anchorcolor = green,
                 hyperindex = true,
                 hyperfigures
                 ]{hyperref}

\def\calD{\mathcal{D}}
\def\calC{\mathcal{C}}
\def\calA{\mathcal{A}}

\def\calG{\mathcal{G}}
\def\poly{\text{poly}}
\newtheorem{lemma}{Lemma}
\newtheorem{corollary}{Corollary}
\newtheorem{definition}{Definition}

\newtheorem{theorem}{Theorem}

\title{Deterministic Distance Selection in $O(n^{4/3})$ Time
}
\author{Haitao Wang\thanks{Kahlert School of Computing,
University of Utah, Salt Lake City, UT 84112, USA. {\tt haitao.wang@utah.edu}}
}
\date{}

\begin{document}


\maketitle

\vspace{-0.3in}

\begin{abstract}
Let \(P\) be a set of \(n\) points in the plane. Given an integer
\(K\) with \(1\le K\le {n\choose 2}\), the distance selection problem
asks for the \(K\)-th smallest distance among all pairwise distances
of the points of \(P\). The problem has been studied
extensively over the past four decades. Recently, Chan and Zheng
(2023) gave a randomized algorithm with \(O(n^{4/3})\) expected
running time, while the best known deterministic algorithm runs in \(O(n^{4/3}\log n)\) time.
In this paper, we present a deterministic
\(O(n^{4/3})\)-time algorithm for the problem,
matching the best known randomized complexity. 


We also consider the more general bichromatic version of the problem,
where two point sets \(A\) and \(B\) are given, with
\(m=|A|\) and \(n=|B|\), and the goal is to find the \(K\)-th
smallest distance among the \(mn\) distances between points of \(A\)
and points of \(B\). For this problem, the best previously known
algorithm runs in
\(O((m\log n+n\log m+m^{2/3}n^{2/3})\log(m+n))\) time. We present a
new deterministic algorithm with running time
\(O(m\log^2 n+n\log^2 m+m^{2/3}n^{2/3})\).
\end{abstract}

{\em Keywords:} Distance selection, geometric optimization, distance counting, hierarchical cuttings, biclique partitions, range searching

\section{Introduction}
\label{sec:intro}

Let \(P\) be a set of \(n\) points in the plane. Given an
integer \(K\) with \(1\le K\le {n\choose 2}\), the \emph{distance
selection problem} asks for the \(K\)-th smallest distance among all pairwise distances of the points of \(P\).

Distance selection is a fundamental problem
in computational geometry. Besides its intrinsic interest, it serves as
a prototype for a broad class of geometric optimization problems whose
optimal objective values are realized by interpoint distances~\cite{ref:AgarwalOn24,ref:AvrahamTh15,ref:WangRe23,ref:ChanFa25,ref:KaplanTh23,ref:KatzNe24}. One may
even regard it as a geometric analogue of the classical selection
problem~\cite{ref:BlumTi73}. The problem can be solved trivially in
\(O(n^2)\) time. The first subquadratic algorithm was given by
Chazelle~\cite{ref:ChazelleNe85}. Later, Agarwal, Aronov, Sharir, and
Suri~\cite{ref:AgarwalSe93} obtained an
\(O(n^{3/2}\log^{5/2}n)\)-time algorithm, which was subsequently
improved by Goodrich~\cite{ref:GoodrichGe93} to
\(O(n^{4/3}\log^{8/3}n)\) time. Katz and
Sharir~\cite{ref:KatzAn97} further improved the running time to
\(O(n^{4/3}\log^2 n)\). Wang and
Zhao recently reduced it to \(O(n^{4/3}\log n)\) in ESA 2023~\cite{ref:WangIm25}.

All of the above algorithms are deterministic. Several randomized
algorithms are also known. The randomized algorithm of
Agarwal, Aronov, Sharir, and Suri~\cite{ref:AgarwalSe93} runs in
\(O(n^{4/3}\log^{8/3}n)\) expected time. Matou\v{s}ek~\cite{ref:MatousekRa91} later improved the bound to
\(O(n^{4/3}\log^{2/3}n)\) expected time. Most recently, Chan and
Zheng~\cite{ref:ChanHo23} obtained a randomized algorithm of
\(O(n^{4/3})\) expected time. Their algorithm uses random
sampling and is reminiscent of the randomized linear-time selection
algorithm of Floyd and Rivest~\cite{ref:FloydEx75}. The complexity can
also be expressed as a function of \(K\). In particular, combining
Chan's randomized optimization framework~\cite{ref:ChanOn01} with the
result of Wang~\cite{ref:WangUn23} yields an
\(O(n\log n+n^{2/3}K^{1/3}\log^{5/3}n)\)-time randomized algorithm,
which is particularly attractive when \(K\) is relatively small.

Like Hopcroft's problem and several related geometric range searching
problems~\cite{ref:ChanHo23,ref:EricksonNe96}, it is possible that distance selection has complexity \(\Theta(n^{4/3})\), although no
matching lower bound is currently known. In particular, many problems
whose best known upper bounds are \(O(n^{4/3})\) admit
\(\Omega(n^{4/3})\) lower bounds, at least in a restricted partition-based computational model~\cite{ref:EricksonNe96}, such as Hopcroft's problem~\cite{ref:ChanHo23}, segment intersection counting~\cite{ref:ChanHo23}, computing closest points for segments~\cite{ref:WangAl26,ref:BespamyatnikhCo03}, computing many faces in line arrangement~\cite{ref:WangAn26,ref:AgarwalPa902,ref:AgarwalCo98,ref:EdelsbrunnerTh90,ref:ClarksonCo90,ref:WangCo23}, etc. 
If a similar lower bound could be established for
distance selection, then the randomized algorithm of Chan and
Zheng~\cite{ref:ChanHo23} would be optimal. On the deterministic side,
however, the best known complexity remained
\(O(n^{4/3}\log n)\)~\cite{ref:WangIm25}, leaving a logarithmic gap
between deterministic and randomized algorithms.

\paragraph{Our result.}

In this paper, we close the gap between the best known deterministic
and randomized complexities by presenting a deterministic
\(O(n^{4/3})\)-time algorithm for distance selection.

We consider a more general bichromatic version of the problem. Let
\(A\) and \(B\) be two sets of points in the plane with
\(m=|A|\) and \(n=|B|\), respectively. Let \(S\) denote the set of
all pairwise distances between points of \(A\) and points of \(B\).
Given an integer \(K\) with \(1\le K\le mn\), the goal is to compute
the \(K\)-th smallest distance in \(S\). The original monochromatic
problem can be easily reduced to this bichromatic problem.

We first consider the symmetric case \(m=n\). Previous work typically
solved the bichromatic problem first (with the same time complexities as discussed above), with the monochromatic problem
obtained as a consequence. We follow the same approach. Our main result
is a deterministic \(O(n^{4/3})\)-time algorithm for the symmetric
bichromatic distance selection problem.

We also study the asymmetric case \(m\neq n\). For this problem, the
best previously known algorithm runs in
\(
O(
(m\log n+n\log m+m^{2/3}n^{2/3})
\log(m+n)
)
\)
time~\cite{ref:WangIm25}. By extending the framework developed for the
symmetric case, we obtain a deterministic algorithm with running time
\(
O(m\log^2 n+n\log^2 m+m^{2/3}n^{2/3}).
\)
Note that if \(m=O(1)\), then the running time becomes \(O(n)\),
matching the complexity of the classical linear-time selection
algorithm. On the other hand, if \(m=n\), then the bound becomes
\(O(n^{4/3})\), matching the complexity of our symmetric case
algorithm. 
Thus, the asymmetric bound naturally interpolates between linear time and the symmetric \(O(n^{4/3})\) time.

\subsection{Our techniques}

We give an overview of our algorithm on solving the symmetric case bichromatic problem. 

Like the previous deterministic algorithms~\cite{ref:KatzAn97,ref:WangIm25}, our algorithm follows a
prune-and-search framework. Let \(\delta^*\) denote the sought
\(K\)-th smallest distance in \(S\). The algorithm maintains an
interval \(I\) containing \(\delta^*\). In each iteration, the goal is
to shrink \(I\) to a smaller interval \(I'\subseteq I\) such that
\(\delta^*\in I'\) and
\(
|S\cap I'|
\le
\rho\cdot |S\cap I|
\)
for some constant \(\rho<1\). Repeating this process for $O(\log n)$ iterations reduces
the number of candidate distances to a manageable size. To implement
the framework efficiently, two bottlenecks must be overcome.

\paragraph{Two bottlenecks.}

The first is a \emph{decision bottleneck}. Given a value \(\delta\), we
need to determine whether \(\delta<\delta^*\). 
Observe that $\delta< \delta^*$ if and only if $K_{\delta}< K$, where $K_{\delta}$ is the number of distances of $S$ at most  $\delta$. Hence, we need a {\em decision algorithm} to compute $K_{\delta}$. 

The second is a \emph{representation bottleneck}. Given an interval
\(I\), we need a compact representation of the distances in
\(S\cap I\) so that we can select a small set of representative
distances. These representative distances are then used to shrink
\(I\) to a smaller interval \(I'\). Conceptually, this step plays the
role of a deterministic sampling procedure.

In the algorithm of Katz and Sharir~\cite{ref:KatzAn97}, both
bottlenecks were handled in \(O(n^{4/3}\log n)\) time, leading to an
overall running time of \(O(n^{4/3}\log^2 n)\).

Wang and Zhao~\cite{ref:WangIm25} improved the framework by using a
recent counting algorithm of Wang~\cite{ref:WangUn23}, which computes $K_{\delta}$ in \(O(n^{4/3})\) time for any given $\delta$.
This matches an \(\Omega(n^{4/3})\) lower bound in a restricted model~\cite{ref:ChanHo23,ref:EricksonNe96}, and therefore essentially
settles the decision bottleneck. To address the representation
bottleneck, Wang and Zhao introduced a new partial biclique partition
technique (called \emph{partial batched range searching} in~\cite{ref:WangIm25}). Their technique produces a compact
representation of most, but not all, distances in \(S\cap I\), while
keeping the representation error sufficiently small for the
prune-and-search framework to remain correct. As a result, each
iteration runs in \(O(n^{4/3})\) time and the overall running time
becomes \(O(n^{4/3}\log n)\).

\paragraph{Our solution to the bottlenecks.}

To obtain an \(O(n^{4/3})\)-time algorithm, we address the two
bottlenecks in different ways.

For the decision bottleneck, the lower bound for exact counting
essentially blocks any further improvement. Since a prune-and-search
algorithm requires \(O(\log n)\) iterations and each iteration appears
to require at least one decision query, it is difficult to avoid the
\(O(n^{4/3}\log n)\) barrier using exact counting. In contrast, the
randomized algorithm of Chan and Zheng~\cite{ref:ChanHo23} uses random
sampling and therefore needs only \(O(1)\) counting queries, leading to
an \(O(n^{4/3})\) expected-time algorithm.

The key observation behind our deterministic algorithm is that the
prune-and-search framework does not require the exact number of
distances in \([0,\delta]\). An approximation is sufficient, provided
that the approximation error is carefully controlled. We prove a new
\emph{approximate counting theorem}
(Theorem~\ref{thm:approx-count}) that computes an approximate count for $K_{\delta}$ in
\(o(n^{4/3})\) time. Moreover, the running time decreases as the
allowed error increases. The theorem immediately yields a
\emph{quasi-decision} procedure
(Theorem~\ref{thm:quasidecision}), which replaces the exact decision
algorithm of Wang~\cite{ref:WangUn23}, and the procedure runs in $o(n^{4/3})$ time.
This does not contradict
the $\Omega(n^2)$ lower bound for exact counting, because the quasi-decision
procedure is allowed to return ``unsure'' when the threshold is too
close to the true count.

For the representation bottleneck, we further enhance the partial
biclique partition technique of Wang and Zhao~\cite{ref:WangIm25} and
make it significantly more efficient. Two new ingredients are
introduced. First, we use the \(\Gamma\)-framework of Chan and Zheng~\cite{ref:ChanHo23} to accelerate point location operations once the
subproblem sizes become sufficiently small. Second, we introduce a new
notion called \emph{capture cells} in hierarchical cuttings. Capture
cells allow us to represent only a carefully selected subset of the distances while still guaranteeing that a constant fraction of
all relevant distances are recorded. The idea of using capture cells is crucial to the success of our approach and is one of the most interesting components of this paper. This leads to a new {\em distance
representation theorem} (Theorem~\ref{thm:distance-representation}) that runs in \(o(n^{4/3})\) time.

\paragraph{Putting the pieces together.}

For a positive integer $i$, let $\log^{(i)} n$ denote the $i$-fold iterated logarithm of $n$, i.e.,
$\log^{(1)}n=\log n$, and $\log^{(i)}n=\log(\log^{(i-1)}n)$ for $i\ge 2$.

Our algorithm proceeds in \(O(\log^* n)\) phases and maintains an interval \(I\) containing \(\delta^*\).
For \(i=1,2,\ldots,O(\log^*n)\), the goal of the \(i\)-th phase is to
transform an interval \(I\) satisfying
\[
|S\cap I|
=
O\!\left(
n^{4/3}(\log^{(i-1)}n)^4
\right)
\]
(and \(O(n^2)\) when \(i=1\)) into a smaller interval \(I'\subseteq I\) satisfying
\[
|S\cap I'|
=
O\!\left(
n^{4/3}(\log^{(i)}n)^4
\right),
\]
while maintaining \(\delta^*\in I'\). Initially $i=1$ and
\(I=(0,+\infty)\), and hence \(|S\cap I|=O(n^2)\).

The \(i\)-th phase consists of \(O(\log^{(i)}n)\) iterations. Each
iteration, which reduces the remaining candidate distances by a constant fraction,
takes
\(O(n^{4/3}/\log^{(i)}n)\) time for \(i=1,2\), and
\(O(n^{4/3}/(\log^{(i)}n)^2)\) time for \(i\ge3\).
Consequently, the first two phases each require
\(O(n^{4/3})\) time, while the total time of all remaining phases is
\[
O\!\left(
n^{4/3}
\sum_{i=3}^{O(\log^*n)}
\frac{1}{\log^{(i)}n}
\right)
=
O(n^{4/3}).
\]

After \(O(\log^*n)\) phases, we obtain an interval \(I\) satisfying
\(
|S\cap I|=O(n^{4/3})
\)
and containing \(\delta^*\). A final enumeration step then computes
\(\delta^*\) from \(S\cap I\) in additional \(O(n^{4/3})\) time.
To achieve this, we prove a more general result: given any interval
\(I\), all distances in \(S\cap I\) can be enumerated in
\(
O(n^{4/3}+|S\cap I|)
\)
time. The result follows from an {\em implicit biclique partition theorem}
(Theorem~\ref{thm:implicitbiclique}), which computes a compact
representation of all distances in \(S\cap I\) in an implicit manner
within \(O(n^{4/3})\) time. We believe this theorem may be of
independent interest.
For comparison, the algorithms of Katz and Sharir~\cite{ref:KatzAn97} and Wang and Zhao~\cite{ref:WangIm25} both rely on
explicit representations of the relevant distances. Their techniques
require \(O(n^{4/3}\log n)\) and
\(n^{4/3}2^{O(\log^*n)}\) time, respectively, to construct these
representations. In contrast, our representation is maintained
implicitly, which is crucial for acheiving the \(O(n^{4/3})\) running
time.

\paragraph{The asymmetric case.}

To solve the asymmetric case \(m\neq n\), we seek to reduce the
problem to the symmetric case. For several related geometric problems,
such as Hopcroft's problem~\cite{ref:ChanHo23}, the problem of
computing \(K_\delta\)~\cite{ref:WangUn23}, and computing many faces in line arrangements~\cite{ref:WangAn26},
such a reduction is
relatively straightforward, and the symmetric case algorithm can be
used essentially as a black box. For the distance selection problem, however, the
situation appears to be more subtle. In particular, we cannot simply
invoke the symmetric case algorithm as a black box and obtain the
desired running time. Instead, additional ideas are needed to combine
the symmetric subproblems efficiently.

One possible reason is that distance selection lacks the type of
decomposability enjoyed by the above problems. Solving a subproblem on
two subsets \(A'\subseteq A\) and \(B'\subseteq B\) does not
immediately provide useful information for solving the original
problem on \(A\) and \(B\). Consequently, extra care is needed in
order to exploit the symmetric case machinery while acheiving the
desired time complexity.

Besides distance selection, the techniques developed in this paper may
also be useful for other geometric optimization problems involving
interpoint distances.

\paragraph{Outline.}
The remainder of the paper is organized as follows.
Section~\ref{sec:preliminaries} introduces notation and reviews the
partial biclique partition technique from \cite{ref:WangIm25}.
The next three sections focus on the symmetric case \(m=n\).
Section~\ref{sec:quasidecision} develops the approximate counting
theorem (Theorem~\ref{thm:approx-count}) and the resulting quasi-decision algorithm (Theorem~\ref{thm:quasidecision}).
Section~\ref{sec:representation} proves the distance representation
theorem (Theorem~\ref{thm:distance-representation}). Section~\ref{sec:selection} combines these ingredients to
obtain a deterministic \(O(n^{4/3})\)-time distance selection
algorithm.
Section~\ref{sec:asymmetric} extends the framework to the asymmetric
case \(m\neq n\) and presents the
\(O(m\log^2 n+n\log^2 m+m^{2/3}n^{2/3})\)-time algorithm.

\section{Preliminaries}
\label{sec:preliminaries}

For any two points \(a,b\) in the plane, we use \(\|ab\|\) to denote
their Euclidean distance.
Let \(A\) and \(B\) be two sets of points in the plane. Define
\(
S(A,B)=\{\|ab\|\mid a\in A,\ b\in B\}
\)
to be the set of pairwise distances between points of \(A\) and points
of \(B\). We write \(S\) for \(S(A,B)\) when the point sets are clear
from the context.
We assume that no point of \(A\) is co-located with a point of \(B\).
Consequently, \(0\notin S\).

For an interval \(I\), we use \(S\cap I\) to denote the subset of
distances in \(S\) whose values lie in \(I\). Unless stated otherwise,
all intervals considered in this paper are of the form
\((\alpha,\beta]\), that is, they are open on the left endpoint and
closed on the right endpoint.

For any region \(\sigma\) in the plane, we use \(A(\sigma)\) to denote
the subset of points of \(A\) in \(\sigma\), i.e.,
\(A(\sigma)=A\cap\sigma\). We define \(B(\sigma)\) similarly.

\begin{definition}[Biclique]
Let \(A_t\subseteq A\) and \(B_t\subseteq B\).
A \emph{biclique} is a complete bipartite graph with bipartition
\((A_t,B_t)\), i.e., the graph containing all edges between every point
of \(A_t\) and every point of \(B_t\).
We denote the biclique by \(A_t\times B_t\).
The \emph{size} of the biclique is defined as 
\(
|A_t|+|B_t|.
\)
\end{definition}

A biclique \(A_t\times B_t\) compactly represents all distance pairs
between points of \(A_t\) and points of \(B_t\). We therefore often say
that the biclique {\em records} these pairs or their corresponding distances.
For a point \(a\in A_t\) and a point \(b\in B_t\), we call \((a,b)\) a
\emph{point pair} (or simply a \emph{pair}) of the biclique.
Two bicliques \(A_t\times B_t\) and \(A_{t'}\times B_{t'}\) are
\emph{edge-disjoint} if they do not contain the same point pair.

\subsection{Partial biclique partitions}
\label{subsec:pbs}

The main tool used throughout the paper is a partial biclique partition used by Wang and Zhao~\cite{ref:WangIm25} (called
\emph{partial batched range searching} in
\cite{ref:WangIm25}). Since both the approximate counting algorithm and
the distance representation theorem rely heavily on this construction,
we review it and derive the recurrence relations that will be
used later.

Let \(I=(\alpha,\beta]\) be an interval. For every point \(a\in A\),
define \(D_a(I)\) to be the annulus centered at \(a\) with inner radius
\(\alpha\) and outer radius \(\beta\). When the interval \(I\) is clear
from the context, we simply write \(D_a\). We assume that the outer
boundary belongs to \(D_a\) while the inner boundary does not. Hence,
for any point \(b\in B\),
\(
b\in D_a\) if and only if \(
\|ab\|\in(\alpha,\beta].
\)

Let
\(
\mathcal D=\{D_a\mid a\in A\}
\)
be the set of all annuli centered at the points of \(A\).

The purpose of a partial biclique partition is to represent all
point pairs whose distances lie in a given interval by a small
collection of bicliques while allowing a controlled collection of
unresolved pairs.

\begin{definition}[Partial Biclique Partition]
\label{prob:BRS}
Given an interval \(I\), a \emph{partial biclique partition} for
\(A\) and \(B\) consists of two collections of edge-disjoint bicliques
\(
\Gamma
=
\{A_t\times B_t
\mid
A_t\subseteq A,\,
B_t\subseteq B\}
\)
and
\(
\Pi
=
\{A'_s\times B'_s
\mid
A'_s\subseteq A,\,
B'_s\subseteq B\}
\)
such that the following conditions hold.

\begin{enumerate}
\item
For every pair \((a,b)\in A_t\times B_t\in\Gamma\),
$\|ab\|\in I$.

\item
For every pair of points \(a\in A\) and \(b\in B\) with
\(\|ab\|\in I\), there exists a biclique in either
\(\Gamma\) or \(\Pi\) that contains \((a,b)\).
\end{enumerate}
\end{definition}

In other words, the two collections \(\Gamma\) and \(\Pi\) together
record all pairs \((a,b)\) with distance in \(I\). Every pair recorded
in a biclique of \(\Gamma\) is guaranteed to have distance in \(I\),
while this may not be true for the pairs recorded in \(\Pi\). For this
reason, we call the pairs recorded in \(\Pi\) \emph{uncertain pairs},
and the pairs recorded in \(\Gamma\) \emph{certified pairs}.

The following lemma is a useful tool inherited from \cite{ref:WangIm25}. 

\begin{lemma}[Lemma~1 of \cite{ref:WangIm25}]
\label{lem:partial-brs}
Let \(m=|A|\) and \(n=|B|\). For any parameter
$1\le r\le \min\{m^{1/3},n^{1/3}\}$,
one can compute a partial biclique partition for \(A\) and \(B\) with
collections \(\Gamma\) and \(\Pi\) in
$O(mr\log r+nr)$ time such that:
\begin{enumerate}
\item
$\sum_t(|A_t|+|B_t|)=O(mr\log r+nr)$.

\item
For every biclique \(A'_s\times B'_s\in\Pi\),
$|A'_s| \le {m}/{r^3}$ and $|B'_s|\le {n}/{r^3}$.

\item
The total number of uncertain pairs recorded in \(\Pi\) is $O(mn/r^2)$.
\end{enumerate}
\end{lemma}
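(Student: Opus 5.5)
We prove the lemma with a hierarchical cutting on the boundary circles of the annuli, processed top-down. Each annulus $D_a$ is bounded by two circles of radii $\alpha$ and $\beta$ centered at $a$, so the $2m$ circles form a family of constant-complexity curves. We build a hierarchical $(1/r)$-cutting for them in the style of Chazelle (using the pseudo-trapezoid version for algebraic arcs). It has $O(\log r)$ levels. Each cell at level $j$ has $O(\rho^{2j})$ children for a constant $\rho$, and is crossed by at most $2m/\rho^j$ circles. Building it takes $O(mr)$ time, and locating all points of $B$ in the cells takes $O(n\log r)$ time, or $O(nr)$ time if we walk down the levels.

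First we produce the certified bicliques. For every cell $\sigma$ of the hierarchy and each child $\sigma'$, let $A_{\sigma'}$ be the set of $a$ whose annulus $D_a$ fully contains $\sigma'$, while its boundary crosses the parent $\sigma$ (or $\sigma$ is the root). Add the biclique $A_{\sigma'}\times B(\sigma')$ to $\Gamma$. Every pair in it has distance in $I$, which gives condition 1. The bicliques are edge-disjoint, because each pair $(a,b)$ is recorded at the unique level where $D_a$ stops crossing the cell containing $b$ and starts containing it.

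Next we bound the size of $\Gamma$. Each annulus crossing $\sigma$ is charged to $O(\rho^2)$ children. Summing the crossings over a level gives $O(\rho^{2j}\cdot m/\rho^j)=O(m\rho^j)$ at level $j$. Summed over all levels this is $O(mr)$, or $O(mr\log r)$ if we charge more crudely. Each point $b$ appears in one cell per level, contributing $O(n\log r)$ in total. Together these give condition 1 of the lemma, and the construction time is the same.

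The remaining pairs are those where $D_a$'s boundary crosses the leaf cell $\tau$ containing $b$. There are $O(r^2)$ leaves, each crossed by at most $m/r$ annuli. We choose the cutting so that each leaf also contains $O(n/r^2)$ points of $B$, refining leaves by vertical splits if needed, which at most doubles the number of cells. So the unresolved pairs come in $O(r^2)$ blocks of size $(m/r)\times(n/r^2)$, for $O(mn/r^3)$ pairs in total. This is below the $O(mn/r^2)$ budget, but the blocks are too large for condition 2.

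The main obstacle is bringing the blocks down to $m/r^3\times n/r^3$ without paying more than $O(mr\log r+nr)$. I would split each leaf block $A_\tau\times B(\tau)$ into a grid of sub-blocks with parts of size at most $m/r^3$ and $n/r^3$, and put them into $\Pi$ as uncertain bicliques with no further work. Here $A_\tau$ has size $m/r$ and $B(\tau)$ has size $n/r^2$. Producing these sub-blocks is only bookkeeping: each is listed as index ranges, so there are $O(r^2\cdot r^2\cdot r)$ of them, which is at most $O(mr+nr)$ as long as $r^5\le m+n$. If that is too many, I would instead use a $(1/r^3)$-cutting on the $A$ side only at the bottom, using $r\le m^{1/3}$ and $r\le n^{1/3}$ to keep the leaf parts nonempty. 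The $O(mn/r^2)$ bound then holds with room to spare. What remains to verify is the edge-disjointness between the pairs in $\Gamma$ and those in $\Pi$, which follows because a pair lands in $\Pi$ exactly when it was never certified along the root-to-leaf path.
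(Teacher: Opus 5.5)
\textbf{There is a genuine gap: your construction does not satisfy condition~3.}

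Your count of the unresolved pairs after the primal cutting is wrong. There are $O(r^2)$ leaves, each crossed by at most $m/r$ annuli and containing $O(n/r^2)$ points of $B$. This gives $O(r^2\cdot\frac{m}{r}\cdot\frac{n}{r^2})=O(mn/r)$ pairs, not $O(mn/r^3)$. Directly: each $b\in B$ can be uncertain with respect to up to $m/r$ points of $A$. This is a factor $r$ \emph{above} the $O(mn/r^2)$ allowed by condition~3. Nothing in your construction prevents it from being attained, e.g.\ when the points of $B$ lie in leaves whose conflict lists have size $\Theta(m/r)$.

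Splitting each leaf block into an $r^2\times r$ grid of sub-blocks repairs condition~2. It does not change the total number of pairs placed in $\Pi$, so condition~3 still fails. Your fallback of refining further on the $A$ side alone is too expensive. A $(1/s)$-cutting of the at most $m/r$ annuli crossing a leaf costs $O(ms/r)$, hence $O(mrs)$ over all leaves, and gains only a factor $s$. Recovering the missing factor $r$ this way costs $O(mr^2)$, and a $(1/r^3)$-cutting of all of $\mathcal D$ would cost $O(mr^3)$. Both are beyond the $O(mr\log r+nr)$ budget.

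What is missing is the dual step, which is the core of the paper's argument (Recurrences~\eqref{formula:PrimalRecurrence}--\eqref{formula:balanced}). Treat each leaf block as a subproblem $(\hat A_\sigma,\hat B(\sigma))$ with $|\hat A_\sigma|\le m/r$ and $|\hat B(\sigma)|\le n/r^2$; your size-bounded pieces of $B(\sigma)$ are exactly the paper's standard subsets. Then swap the roles of the two sets:
\begin{enumerate}
\item Build a hierarchical $(1/r)$-cutting for the $n/r^2$ annuli centered at the points of $\hat B(\sigma)$. This is legal since $r\le n/r^2$, by $r\le n^{1/3}$.
\item Locate the $m/r$ points of $\hat A_\sigma$ in it.
\item Certify pairs exactly as in your primal stage.
\end{enumerate}
This costs $O(\frac{n}{r^2}\cdot r+\frac{m}{r}\log r)$ per subproblem, i.e., $O(nr+mr\log r)$ over all $O(r^2)$ subproblems. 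The $mr\log r$ term comes precisely from these point locations of $A$-points. It does not come from a cruder charging in the primal stage, which costs only $O(mr+n\log r)$.

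Each dual leaf is crossed by at most $n/r^3$ annuli. Split its $A$-points into groups of size at most $m/r^3$, which is possible since $r\le m^{1/3}$; this gives $O(r^2)$ blocks per subproblem. So $\Pi$ consists of $O(r^4)$ bicliques of size at most $m/r^3\times n/r^3$, and the number of uncertain pairs is $O(r^4\cdot\frac{mn}{r^6})=O(mn/r^2)$. This is how the paper obtains conditions~2 and~3 simultaneously within the stated time bound.
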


We briefly review the algorithm of
Lemma~\ref{lem:partial-brs}, since it will be used repeatedly
throughout the paper. Several concepts and notation introduced here
will also be needed later.

\paragraph{Cuttings.}

Recall that \(\mathcal D\) consists of \(m\) annuli centered at the
points of \(A\), each bounded by two circles. For any compact region
\(\sigma\) in the plane, let \(\mathcal D_\sigma\subseteq\mathcal D\)
denote the subset of annuli whose bounding circles cross the relative interior
of \(\sigma\).

For a parameter \(r\) with \(1\le r\le m\), a
\emph{\((1/r)\)-cutting} for \(\mathcal D\) is a collection \(\Xi\) of
constant-complexity cells with pairwise disjoint interiors whose union
covers the entire plane such that
\(|\mathcal D_\sigma|\le m/r\) for every cell \(\sigma\in\Xi\).
The set \(\mathcal D_\sigma\) is often called the
\emph{conflict list} of \(\sigma\).
The \emph{size} of \(\Xi\) is the number of its cells.

We say that a cutting \(\Xi'\) \emph{\(c\)-refines} a cutting \(\Xi\)
if every cell of \(\Xi'\) is contained in a cell of \(\Xi\) and every
cell of \(\Xi\) contains at most \(c\) cells of \(\Xi'\).

A \emph{hierarchical \((1/r)\)-cutting} for \(\mathcal D\) is a
sequence of cuttings
\(\Xi_0,\Xi_1,\ldots,\Xi_k\) such that:

\begin{enumerate}
\item \(\Xi_i\) \(c\)-refines \(\Xi_{i-1}\) for every
      \(1\le i\le k\), for some constant \(c\);

\item for each \(1\le i\le k\), \(\Xi_i\) is a
      \((1/\rho^i)\)-cutting of size \(O(\rho^{2i})\), for some
      constant \(\rho\);

\item \(\Xi_0\) consists of a single cell equal to the entire plane.
\end{enumerate}

Setting \(k=\lceil\log_\rho r\rceil\) makes the last cutting
\(\Xi_k\) a \((1/r)\)-cutting of size \(O(r^2)\), which is also the total size of all cuttings in the hierarchy.

If a cell \(\sigma\in\Xi_i\) is contained in a cell
\(\sigma'\in\Xi_{i-1}\), then \(\sigma'\) is called the
\emph{parent} of \(\sigma\), and \(\sigma\) is called a
\emph{child} of \(\sigma'\).
In this way, the cells form a tree, called the
\emph{cutting tree}, whose root is the unique cell of \(\Xi_0\).
We often use \(\Xi\) to denote the set of all cells in all
cuttings in the hierarchy. We also view \(\Xi\) as the corresponding cutting tree and use
the notions of ancestors and descendants. 

A hierarchical \((1/r)\)-cutting for \(\mathcal D\) can be computed in
\(O(mr)\) time for any \(1\le r\le m\), e.g., by the algorithm of
Wang~\cite{ref:WangUn23}, which adapts Chazelle's construction~\cite{ref:ChazelleCu93}. The algorithm also computes the conflict list
\(\mathcal D_\sigma\) for every cell \(\sigma\in\Xi\). Consequently,
the total size of all conflict lists is \(O(mr)\).


\paragraph{Algorithm for Lemma~\ref{lem:partial-brs}.}
We briefly review the algorithm; see \cite{ref:WangIm25} for details.

We first construct a hierarchical \((1/r)\)-cutting
\(\{\Xi_0,\Xi_1,\ldots,\Xi_k\}\) for \(\mathcal D\), which takes
\(O(mr)\) time. We use \(\Xi\) to denote the set of all cutting cells.

For each cell \(\sigma\in\Xi\), we compute a canonical subset
\(B(\sigma)=B\cap\sigma\). This can be done in \(O(n\log r)\) time by
processing the points of \(B\) in a top-down manner through the cutting
tree. Since \(k=\Theta(\log r)\), each point of \(B\) is stored in
\(O(\log r)\) canonical subsets. Hence, the total size of all canonical
subsets \(B(\sigma)\), \(\sigma\in\Xi\), is \(O(n\log r)\).

Next, for each cell \(\sigma\in\Xi\), we compute another canonical
subset \(A_\sigma\subseteq A\). Specifically, a point \(a\in A\) belongs
to \(A_\sigma\) if the annulus \(D_a\) contains \(\sigma\) but does not
contain the parent of \(\sigma\). Since each cell has \(O(1)\)
children, all subsets \(A_\sigma\) can be computed in \(O(mr)\) time in
a top-down manner. The total size of all subsets \(A_\sigma\) is also
\(O(mr)\).


By definition, for every cell \(\sigma\in\Xi\), every pair
\((a,b)\in A_\sigma\times B(\sigma)\) satisfies \(\|ab\|\in I\).
Therefore, we add the biclique \(A_\sigma\times B(\sigma)\) to
\(\Gamma\). The resulting bicliques are edge-disjoint. We have shown
that \(\sum_{\sigma\in\Xi}|A_\sigma|=O(mr)\) and
\(\sum_{\sigma\in\Xi}|B(\sigma)|=O(n\log r)\).

Consider now a cell \(\sigma\) of the last cutting \(\Xi_k\). Since
\(|\mathcal D_\sigma|\le m/r\), define
\(\hat A_\sigma=\{a\in A\mid D_a\in\mathcal D_\sigma\}\). We do not know
whether distances between points of \(\hat A_\sigma\) and points of
\(B(\sigma)\) lie in \(I\). If \(|B(\sigma)|>n/r^2\), then we partition
\(B(\sigma)\) into subsets of size between \(n/(2r^2)\) and \(n/r^2\),
called \emph{standard subsets}. Since \(|B|=n\) and \(\Xi_k\) contains
\(O(r^2)\) cells, the total number of standard subsets over all cells of
\(\Xi_k\) is \(O(r^2)\).

For each standard subset \(\hat B(\sigma)\subseteq B(\sigma)\), we form
a recursive subproblem \((\hat A_\sigma,\hat B(\sigma))\). There are
\(O(r^2)\) such subproblems. Since
\(|\hat A_\sigma|\le m/r\) and \(|\hat B(\sigma)|\le n/r^2\), the
running time satisfies the recurrence
\begin{equation}
\label{formula:PrimalRecurrence}
T(m,n)
=
O(mr+n\log r)
+
O(r^2)\cdot
T\!\left(\frac{m}{r},\frac{n}{r^2}\right).
\end{equation}

Throughout the paper, we use \(T(m,n)\) interchangeably to denote both
the running time and the corresponding subproblem involving two point
sets of sizes \(m\) and \(n\).

We next solve the problem in a dual setting by switching the roles of
\(A\) and \(B\), i.e., defining annuli centered at the points of \(B\).
Symmetrically, we obtain
\begin{equation}
\label{formula:DualRecurrence}
T(m,n)
=
O(nr+m\log r)
+
O(r^2)\cdot
T\!\left(\frac{m}{r^2},\frac{n}{r}\right).
\end{equation}

Applying \eqref{formula:DualRecurrence} to each subproblem of
\eqref{formula:PrimalRecurrence} with the same parameter \(r\) yields
\begin{equation}
\label{formula:balanced}
T(m,n)
=
O(mr\log r+nr)
+
O(r^4)\cdot
T\!\left(\frac{m}{r^3},\frac{n}{r^3}\right).
\end{equation}

The balanced recurrence implies that, in
\(O(mr\log r+nr)\) time, one can compute a collection of \(O(r^4)\)
edge-disjoint bicliques \(A_t\times B_t\) such that every pair
\((a,b)\in A_t\times B_t\) satisfies \(\|ab\|\in I\). Furthermore,
\(\sum_t(|A_t|+|B_t|)=O(mr\log r+nr)\). We return this collection as
\(\Gamma\).

In addition, we obtain \(O(r^4)\) bicliques
\(A'_s\times B'_s\) corresponding to the subproblems
\(T(m/r^3,n/r^3)\). For these bicliques, we do not know whether
\(\|ab\|\in I\) for pairs \((a,b)\in A'_s\times B'_s\). We return the
collection of all such bicliques as \(\Pi\). Consequently,
\(|A'_s|\le m/r^3\) and \(|B'_s|\le n/r^3\) for every
\(A'_s\times B'_s\in\Pi\), and the total number of uncertain pairs
recorded in \(\Pi\) is \(O(mn/r^2)\). This proves
Lemma~\ref{lem:partial-brs}.

We remark that Recurrence~\eqref{formula:balanced} is the key algorithmic consequence
of the partial biclique partition framework. It will serve as the
starting point for both the approximate counting algorithm in
Section~\ref{sec:quasidecision} and the distance representation
algorithm in Section~\ref{sec:representation}.

\section{Approximate counting and quasi-decision}
\label{sec:quasidecision}

In this section and the next two sections, we assume that \(|A|=|B|=n\). 

This section develops an approximate counting algorithm for
interval distances. The quasi-decision theorem then follows
immediately by comparing the approximate count with a given threshold.

For an interval \(I\), let
\(\kappa(I)=|S\cap I|\) denote the number of distances of \(S\) whose
values lie in \(I\).

Our main goal in this section is to prove the following approximate
counting theorem.

\begin{theorem}[Approximate Counting]
\label{thm:approx-count}
Let \(I\) be any interval. For every integer \(i\) with
\(1\le i\le O(\log^* n)\), one can compute an integer
\(\hat\kappa(I)\) such that
\[
|\kappa(I)-\hat\kappa(I)|
\le
c_0\cdot n^{4/3}(\log^{(i)}n)^4
\]
for some constant \(c_0>0\).

The runtime is \(O(n^{4/3}/\log^{(i)}n)\) for \(i=1,2\), and
\(O(n^{4/3}/(\log^{(i)}n)^2)\) for \(i\ge 3\).
\end{theorem}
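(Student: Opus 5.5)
We will obtain \(\hat\kappa(I)\) by running the partial biclique partition of Lemma~\ref{lem:partial-brs}, in its balanced recursive form~\eqref{formula:balanced}, for only a bounded depth, and then stopping. Every certified pair is counted exactly, and each unresolved biclique is either dropped or charged at a bounded cost. Concretely, \(\hat\kappa(I)\) is the sum of \(|A_t|\cdot|B_t|\) over the certified bicliques \(A_t\times B_t\in\Gamma\). Since the bicliques are edge-disjoint, this counts every certified pair exactly once. The uncertain pairs in \(\Pi\) are ignored. The error \(|\kappa(I)-\hat\kappa(I)|\) is therefore at most the number of uncertain pairs, which is \(O(n^2/r^2)\) for a single application with parameter \(r\). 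To get error \(O(n^{4/3}(\log^{(i)} n)^4)\) we need \(r\approx n^{1/3}/(\log^{(i)}n)^2\). A single application, however, costs \(O(nr\log r)=O(n^{4/3}\log n/(\log^{(i)}n)^2)\), which is too much because of the \(\log r\) factor. The work is therefore in replacing one big step by a recursion that avoids that logarithm.

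The plan is to apply recurrence~\eqref{formula:balanced} repeatedly with a constant (or slowly growing) parameter \(r_0\). After \(j\) levels the subproblems have size \(n/r_0^{3j}\) on each side, and there are \(r_0^{4j}\) of them. The cost of level \(j\) is
\[
r_0^{4j}\cdot O\!\left(\frac{n}{r_0^{3j}}\cdot r_0\log r_0\right)=O(n\,r_0^{j}\,r_0\log r_0),
\]
which is geometric in \(j\), so the total is dominated by the last level. The uncertain pairs left after \(j\) levels number \(r_0^{4j}(n/r_0^{3j})^2=n^2/r_0^{2j}\). We stop at the level \(J\) where \(R=r_0^J\approx n^{1/3}/(\log^{(i)}n)^2\). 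This gives error \(O(n^{4/3}(\log^{(i)}n)^4)\) and time \(O(nR\cdot r_0\log r_0)=O(n^{4/3}\,r_0\log r_0/(\log^{(i)}n)^2)\). With \(r_0\) constant this yields \(O(n^{4/3}/(\log^{(i)}n)^2)\), matching the bound for \(i\ge3\).

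The main obstacle is the \(O(n\log r)\) term in recurrence~\eqref{formula:PrimalRecurrence}, which comes from locating the points of \(B\) in the hierarchical cutting, together with the per-level overhead once subproblems become tiny. With constant \(r_0\) these costs are \(O(1)\) per point per level, so level \(j\) costs \(O(r_0^{4j}\cdot n/r_0^{3j})\), and this is still geometric. The real concern is that the hidden constants multiply by a constant factor \(\rho\) at each of the \(J=\Theta(\log n)\) levels. To handle this, I would use \(r_0\) large at the top levels and switch to the direct single-shot Lemma~\ref{lem:partial-brs} with \(r=R\) when that is cheaper.

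For \(i=1,2\), the extra \(\log R\) factor costs \(\log n\). The target \(n^{4/3}/\log^{(i)}n\) leaves room for one factor \(\log^{(i)}n\), since \(nR\log R=n^{4/3}\log n/(\log^{(i)}n)^2\). This is at most \(n^{4/3}/\log n\) when \(i=1\) and at most \(n^{4/3}\log n/(\log\log n)^2\) when \(i=2\). The \(i=2\) case is too large, so I would use a two-stage scheme. The first stage applies Lemma~\ref{lem:partial-brs} with a moderate \(r\) such as \(n^{1/3}/\log n\). The second stage takes the uncertain subproblems and either recurses with a constant parameter or applies the Chan--Zheng \(\Gamma\)-framework to make point location inside small subproblems cost \(O(1)\) amortized. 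Balancing the parameters in these two stages, so that the iterated-log bookkeeping gives exactly the stated runtimes, is where I expect most of the difficulty.
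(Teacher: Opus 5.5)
\textbf{Verdict: there is a genuine gap.} Your $i=1$ case matches the paper: one application of Lemma~\ref{lem:partial-brs} with $r=n^{1/3}/\log^2 n$. However, the case $i\ge 3$ does not go through, and $i=2$ is not actually carried out.

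\textbf{Why the constant-$r_0$ recursion fails.} It fails for exactly the reason you flag and then leave unresolved. Each level of Recurrence~\eqref{formula:balanced} creates $C r_0^4$ subproblems for some constant $C>1$. After $J$ levels, both the running time and the number of uncertain pairs therefore carry a factor $C^{J}$. With $r_0^J\approx n^{1/3}$, this factor is $n^{\Theta(1/\log r_0)}$, a polynomial loss.

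Your proposed fix (large $r_0$ near the top, single-shot Lemma~\ref{lem:partial-brs} where cheaper) does not escape this dilemma. A single big step brings back the $\log r$ factor, which comes from point location in the hierarchical cutting and from the canonical subsets (total size proportional to the number of points times $\log r$). Many small steps instead pay $C$ per level. Schedules built from Lemma~\ref{lem:partial-brs} alone (the paper's $i=1,2$ choices, iterated) give at best about $n^{4/3}/\log^{(i)}n$ time with a $2^{O(i)}$ loss. They never reach $1/(\log^{(i)}n)^2$, which is why the paper uses them only for $i=1,2$. A symptom that constants were dropped: your uniform scheme would give $n^{4/3}/\log^2 n$ even for $i=1$.

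\textbf{The case $i=2$.} Your example first stage $r=n^{1/3}/\log n$ already costs $\Theta(n^{4/3})$. The paper uses $r=n^{1/3}/\log^2 n$, then $r=\log^2 n/\log^2\log n$ on each $\Pi$-biclique (of size $\log^6 n$). This gives $O(n^{4/3}/\log\log n)$ time and $O(n^{4/3}(\log\log n)^4)$ error, with no $\Gamma$-framework needed.

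\textbf{What is missing for $i\ge 3$.} The paper proceeds as follows.
\begin{enumerate}
\item Run only those two big-parameter levels. This costs $O(n^{4/3}/\log\log n)$ and leaves $O(n^{4/3}/\log^8\log n)$ subproblems of size $q=\log^6\log n$. The constant-factor loss is then incurred only $O(1)$ times.
\item Set $g=\log^{(i-2)}q$. On each subproblem, apply the dual recurrence~\eqref{formula:DualRecurrence} once with $r=q^{1/3}/g^2$.
\item In each resulting subproblem with $m'\le q^{1/3}g^4$ and $n'\le q^{2/3}g^2$, build a single hierarchical $(1/r')$-cutting with $r'=m'/g^6$.
\end{enumerate}
Two ideas then remove the logarithm.
\begin{enumerate}
\item Count certified pairs implicitly as $\sum_{\sigma\in\Xi_k} m'_\sigma|B'(\sigma)|$, where $m'_\sigma$ is the number of annuli fully containing the leaf cell $\sigma$. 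These values are computable top-down in $O(m'r')$ time. This avoids materializing canonical subsets at all levels, whose total size is $n'\log r'$.
\item Locate the points of $B'$ in the leaf cells in $O(1)$ amortized time via the Chan--Zheng $\Gamma$-framework. This applies only because $q$ is so small that $2^{\mathrm{poly}(q)}=O(n)$.
\end{enumerate}

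\textbf{Resulting bounds.} Each final subproblem costs $O(m'r'+n')=O(q^{2/3}g^2)$ and leaves $O(m'n'/r')=O(q^{2/3}g^8)$ unresolved pairs. Per $T(q,q)$ this gives $O(q^{4/3}/g^2)$ time and $O(q^{4/3}g^4)$ error. Overall this gives $O(n^{4/3}/g^2)$ time and $O(n^{4/3}g^4)$ error, which are the claimed bounds since $g=\Theta(\log^{(i)}n)$.
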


\paragraph{Remark.}
Ideally, we would like to compute \(\kappa(I)\) exactly. However,
exact computation requires \(O(n^{4/3})\) time~\cite{ref:WangUn23,ref:ChanHo23}, which is optimal in a restricted computational model. The purpose of
Theorem~\ref{thm:approx-count} is to obtain a faster algorithm by
computing only an approximation of \(\kappa(I)\) while controlling the
approximation error. Furthermore, the theorem provides a trade-off
between running time and error: allowing a larger error leads to a
faster algorithm.
\medskip

Theorem~\ref{thm:approx-count} immediately yields the following
quasi-decision theorem, which will be used later in the distance
selection algorithm of Section~\ref{sec:selection}.

\begin{theorem}[Quasi-Decision]
\label{thm:quasidecision}
Given an interval \(I\) and an integer \(K\) with
\(1\le K\le n^2\), for every \(i\) with
\(1\le i\le O(\log^* n)\), there exists an algorithm whose output has
one of the following three outcomes:

\begin{enumerate}
\item the algorithm reports \(\kappa(I)<K\);

\item the algorithm reports \(\kappa(I)>K\);

\item the algorithm reports ``unsure'', in which case
\[
|\kappa(I)-K|
\le
c_0\cdot n^{4/3}(\log^{(i)}n)^4
\]
for some constant \(c_0>0\).
\end{enumerate}

The algorithm runs in \(O(n^{4/3}/\log^{(i)}n)\) time for \(i=1,2\),
and in \(O(n^{4/3}/(\log^{(i)}n)^2)\) time for \(i\ge 3\).
\end{theorem}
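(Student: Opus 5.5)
We derive Theorem~\ref{thm:quasidecision} directly from Theorem~\ref{thm:approx-count} by a threshold comparison. Fix \(i\), and let \(\Delta=c_0\cdot n^{4/3}(\log^{(i)}n)^4\), where \(c_0\) is the constant of Theorem~\ref{thm:approx-count}. We invoke Theorem~\ref{thm:approx-count} on \(I\) with the same \(i\). This returns an integer \(\hat\kappa(I)\) with \(|\kappa(I)-\hat\kappa(I)|\le\Delta\), within the stated time bounds: \(O(n^{4/3}/\log^{(i)}n)\) for \(i=1,2\) and \(O(n^{4/3}/(\log^{(i)}n)^2)\) for \(i\ge3\). We then compare \(\hat\kappa(I)\) with \(K\pm\Delta\), which takes \(O(1)\) time, so the overall running time is that of the approximate count.

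The decision rule has three cases. If \(\hat\kappa(I)+\Delta<K\), then \(\kappa(I)\le\hat\kappa(I)+\Delta<K\), and we report \(\kappa(I)<K\). If \(\hat\kappa(I)-\Delta>K\), then \(\kappa(I)\ge\hat\kappa(I)-\Delta>K\), and we report \(\kappa(I)>K\). Otherwise \(K-\Delta\le\hat\kappa(I)\le K+\Delta\), and we report ``unsure''.

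In the ``unsure'' case the triangle inequality gives
\[
|\kappa(I)-K|\le|\kappa(I)-\hat\kappa(I)|+|\hat\kappa(I)-K|\le 2\Delta=2c_0\cdot n^{4/3}(\log^{(i)}n)^4 .
\]
So the statement holds with the constant \(2c_0\) in place of \(c_0\). The theorem only asserts the existence of some constant \(c_0>0\), so we may rename \(2c_0\) as \(c_0\).

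There is no real obstacle here; all the difficulty lies in Theorem~\ref{thm:approx-count}. The only points to check are that the first two outcomes are always correct, which the rule guarantees, and that the constant in the third outcome is allowed to differ from the one in Theorem~\ref{thm:approx-count}.
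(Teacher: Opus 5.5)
Your proof is correct and is the same argument as the paper's: invoke Theorem~\ref{thm:approx-count} and compare \(\hat\kappa(I)\) against \(K\pm c_0 n^{4/3}(\log^{(i)}n)^4\). You also note that the ``unsure'' case only yields \(|\kappa(I)-K|\le 2c_0 n^{4/3}(\log^{(i)}n)^4\), so the constant has to be renamed; the paper's proof passes over this silently, and you handle it correctly.
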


\begin{proof}
Compute \(\hat\kappa(I)\) using
Theorem~\ref{thm:approx-count}. If
\(\hat\kappa(I)+c_0n^{4/3}(\log^{(i)}n)^4<K\), then
\(\kappa(I)<K\). If
\(\hat\kappa(I)-c_0n^{4/3}(\log^{(i)}n)^4>K\), then
\(\kappa(I)>K\). Otherwise, report ``unsure''.
The correctness follows immediately from the error guarantee of
Theorem~\ref{thm:approx-count}.
\end{proof}

In the remainder of this section, we prove
Theorem~\ref{thm:approx-count}. The proof naturally splits into two
cases. For \(i=1,2\), the algorithm follows directly from repeated
applications of Lemma~\ref{lem:partial-brs}. The case \(i\ge3\) is
substantially more difficult and new techniques are required. 

\subsection{The cases \texorpdfstring{$i=1,2$}{i=1,2}}
\label{subsec:quasi-small}

For \(i=1,2\), the approximate counting algorithm follows directly from
the balanced recurrence~\eqref{formula:balanced}. For convenience, we
restate the recurrence here in the symmetric case \(m=n\):
\begin{equation}
\label{equ:small}
T(n,n)
=
O(nr\log r)
+
O(r^4)\cdot
T\!\left(
\frac{n}{r^3},
\frac{n}{r^3}
\right).
\end{equation}

The algorithm computes an approximate count \(\hat\kappa(I)\) by
recursively applying Lemma~\ref{lem:partial-brs}. At each recursive
step, we add to \(\hat\kappa(I)\) the number of certified pairs
recorded by the bicliques in \(\Gamma\). Since every certified pair has
distance in \(I\), the only source of error comes from the uncertain
pairs represented by \(\Pi\). Consequently,
\(|\hat\kappa(I)-\kappa(I)|\) is bounded by the total number of
uncertain pairs in the remaining recursive subproblems.

For \(i=1\), we apply Lemma~\ref{lem:partial-brs} once with
\(r=n^{1/3}/\log^2 n\). The runtime is \(O(nr\log r)\), which is
\(O(n^{4/3}/\log n)\), and the number of uncertain pairs is
\(O(n^2/r^2)=O(n^{4/3}\log^4 n)\). This proves
Theorem~\ref{thm:approx-count} for \(i=1\).

For \(i=2\), we apply Lemma~\ref{lem:partial-brs} recursively twice.
The first application again uses
\(r=n^{1/3}/\log^2 n\). For each biclique
\(A'_s\times B'_s\in\Pi\), we apply
Lemma~\ref{lem:partial-brs} again with
\(r=\log^2 n/\log^2\log n\). The total runtime is
\(O(n^{4/3}/\log\log n)\), while the total number of uncertain pairs
after the second recursive step is
\(O(n^{4/3}/\log^4\log n)\). This proves
Theorem~\ref{thm:approx-count} for \(i=2\).

\subsection{The case \texorpdfstring{$i\geq 3$}{i>=3}}
\label{subsec:quasi-large}

For \(i\ge 3\), one might try to continue applying
Recurrence~\eqref{equ:small}. However, every application of the
recurrence introduces a constant-factor loss. After \(i\) recursive
levels, both the running time and the error bound would incur an extra
factor \(2^{O(i)}\). Consequently, this approach is sufficient only
when \(i=O(1)\). To obtain the bounds stated in
Theorem~\ref{thm:approx-count} for general \(i\), a different approach
is needed.

The key observation is that for approximate counting we do not need to
construct all bicliques explicitly. Instead, it is sufficient to know,
for each cell of the last cutting, how many annuli completely contain
the cell. This allows us to count certified pairs implicitly through
cutting cells rather than explicitly through bicliques.

The remaining difficulty is point location. To compute the contribution
of a cell, we need to determine, for every point of \(B\), the cell of
the last cutting that contains it. A straightforward implementation
would require logarithmic time per point. To avoid this overhead, we
use the \(\Gamma\)-framework of Chan and Zheng~\cite{ref:ChanHo23}. The framework becomes applicable because, after
two levels of recursion, the remaining subproblem size is already
extremely small. The details are given below.

We consider a general \(i\ge 3\).

We first apply Recurrence~\eqref{equ:small} twice with
\(r=n^{1/3}/\log^2 n\), exactly as in the case \(i=2\). This takes
\(O(n^{4/3}/\log\log n)\) time and produces
\(O(n^{4/3}/\log^8 n)\) certified bicliques together with
\(O(n^{4/3}/\log^8 n)\) uncertain bicliques. We add
\(\sum_t (|A_t|\cdot |B_t|)\) over all certified bicliques to
\(\hat\kappa(I)\) (initially, \(\hat\kappa(I)=0\)). Note that the total
size of all certified bicliques satisfies
\(\sum_t(|A_t|+|B_t|)=O(n^{4/3}/\log\log n)\).

After two recursive levels, each uncertain biclique corresponds to a
subproblem \(T(q,q)\), where
\(q=\log^6\log n\). Hence, the total running time satisfies
\begin{equation}
\label{equ:tiny00}
T(n,n)
=
O(n^{4/3}/\log\log n)
+
O(n^{4/3}/\log^8 \log n)\cdot T(q,q).
\end{equation}

We next solve the subproblem \(T(q,q)\). Since \(q\) is sufficiently
small (e.g., \(2^{\poly(q)}=O(n)\) for any polynomial function
\(\poly(\cdot)\)), we can exploit the \(\Gamma\)-framework of
Chan and Zheng~\cite{ref:ChanHo23}.

Let \(g=\log^{(i-2)}q\). We first apply
Recurrence~\eqref{formula:DualRecurrence} with \(m=n=q\) and
\(r=q^{1/3}/g^2\), obtaining
\begin{equation}
\label{equ:tiny10}
T(q,q)
=
O\!\left(\frac{q^{4/3}}{g^2}\right)
+
O\!\left(\frac{q^{2/3}}{g^4}\right)
\cdot
T\!\left(
q^{1/3}g^4,
q^{2/3}g^2
\right).
\end{equation}

As before, we add the number of certified pairs produced by the
recurrence to \(\hat\kappa(I)\).

\paragraph{Solving subproblems $\boldsymbol{T(q^{1/3}g^4,q^{2/3}g^2)}$.}
We now solve the subproblem
\(T(q^{1/3}g^4,q^{2/3}g^2)\), which corresponds to an uncertain
biclique \(A'\times B'\) with $A'\subseteq A$, $B'\subseteq B$, 
\(|A'|\le q^{1/3}g^4\), and \(|B'|\le q^{2/3}g^2\).
Let \(m'=|A'|\) and \(n'=|B'|\).

Let \(\mathcal D'\) be the set of annuli defined by the points of $A'$ with respect to the interval $I$. We compute a hierarchical
\((1/r')\)-cutting $\{\Xi_0, \Xi_1, ..., \Xi_k\}$ for \(\mathcal D'\), where
\(r'=m'/g^6\). This takes \(O(m'r')\) time. For every cell \(\sigma\)
of the last cutting $\Xi_k$, let \(m'_\sigma\) denote the number of annuli of
\(\mathcal D'\) that completely contain \(\sigma\). Since each cutting
cell has \(O(1)\) children, all values \(m'_\sigma\) can be computed in
\(O(m'r')\) time by traversing the cutting tree.

Next, we locate every point of \(B'\) in the last cutting $\Xi_k$.
A straightforward point location procedure would require
\(O(\log r')\) time per point, which is not efficient enough for our
purpose. To avoid this overhead, we apply the \(\Gamma\)-framework of
Chan and Zheng~\cite{ref:ChanHo23}. 

Since \(q=\log^6\log n\), the subproblem size is sufficiently small for
the framework to be effective. In a manner similar to
\cite[Lemma~4.4]{ref:ChanHo23}, each point location can be performed in
amortized \(O(1)\) time (say, after $O(n^{5/4})$-time preprocessing for $A$ and $B$). Note that the point
location problem in \cite[Lemma~4.4]{ref:ChanHo23} is on a line
arrangement, whereas our point location problem is on a cutting
\(\Xi_k\). Nevertheless, the same idea applies by exploiting the
hierarchical structure of the cuttings; see Section~5.3 of the arXiv
version of \cite{ref:WangAn26} for a similar problem.

Consequently, all point locations for the points of \(B'\) can be
performed in \(O(n')\) amortized time. After that, we obtain the
canonical subsets \(B'(\sigma)=B'\cap \sigma\) for all cells
\(\sigma\in\Xi_k\).

For each cell \(\sigma\), we add
\(m'_\sigma\cdot |B'(\sigma)|\) to \(\hat\kappa(I)\). This is the key
difference from the cases \(i=1,2\): certified pairs are counted
implicitly through cutting cells rather than explicitly through
bicliques.

The unresolved pairs are represented by bicliques
\(\hat A'_\sigma\times B'(\sigma)\), where
\(\hat A'_\sigma=\{a\in A'\mid D_a\in\mathcal D'_\sigma\}\).
Since \(|\hat A'_\sigma|\le m'/r'\) and
\(\sum_\sigma |B'(\sigma)|=n'\), the total number of unresolved pairs
for the subproblem $T(m',n')$ is \(O(m'n'/r')\). The running time of the
subproblem is \(O(m'r'+n')\).

Since \(r'=m'/g^6\), \(m'\le q^{1/3}g^4\), and \(n'\le q^{2/3}g^2\), we obtain
\(T(m',n')=O(q^{2/3}g^2)\), while the number of unresolved pairs is
\(O(q^{2/3}g^8)\). Substituting these bounds into
\eqref{equ:tiny10} yields
\(T(q,q)=O(q^{4/3}/g^2)\), and the total number of unresolved pairs is
\(O(q^{4/3}g^4)\).

Substituting these bounds into \eqref{equ:tiny00} yields
\(T(n,n)=O(n^{4/3}/g^2)\), and the total number of unresolved pairs is
\(O(n^{4/3}g^4)\). Recalling that
\(g=\log^{(i-2)}q\), we conclude that the total running time is
\(O(n^{4/3}/(\log^{(i)}n)^2)\), while
\(
|\hat\kappa(I)-\kappa(I)|
=
O(n^{4/3}(\log^{(i)}n)^4).
\)

This proves Theorem~\ref{thm:approx-count} for all \(i\ge 3\).

\section{Distance representation}
\label{sec:representation}

The approximate counting algorithm of
Section~\ref{sec:quasidecision} is sufficient for quasi-decision.
However, it does not directly yield an \(O(n^{4/3})\)-time distance
selection algorithm, because distance selection ultimately requires a
compact representation of the candidate distances themselves.
This section develops such a representation. The resulting theorem is
the main structural result of the paper and constitutes the principal
new ingredient that enables the deterministic \(O(n^{4/3})\)-time
algorithm.

Given an interval \(I\), our goal is to represent the distances of
\(S\cap I\) by a collection of bicliques. Ideally, we would like to
represent all distances of \(S\cap I\). However, it is unlikely 
this can be achieved in time \(o(n^{4/3})\). Instead, our strategy is
to represent only a constant fraction of the distances in \(S\cap I\), which is sufficient for our prune-and-search distance selection algorithm as will be shown in Section~\ref{sec:selection}.

This idea is motivated by Lemma~\ref{lem:partial-brs}. The lemma
already produces a compact representation of most distance pairs while
leaving a controlled collection of unresolved pairs. However, the
resulting algorithm is not efficient enough for our purposes. Our new
algorithm goes one step further: among the certified pairs, it records
only a constant fraction of them. This additional flexibility allows us
to obtain a substantially faster representation algorithm.

Like the approximate counting theorem, the distance representation
theorem provides a trade-off between running time and representation
error. Roughly speaking, the more unresolved pairs we allow, the more
efficient the algorithm becomes. For example, if no unresolved pairs
are allowed, our algorithm runs in \(O(n^{4/3})\) time. By comparison,
the technique of Wang and Zhao~\cite[Theorem~1]{ref:WangIm25} requires
\(n^{4/3}\cdot 2^{O(\log^*n)}\) time for the same task.%
\footnote{One may wonder whether the \(\Gamma\)-framework~\cite{ref:ChanHo23} can be used to remove the
\(2^{O(\log^*n)}\) factor. It is not clear to us whether this is
possible. See the discussion in \cite{ref:WangIm25} about this challenge following Theorem~1.}

The following theorem summarizes the main result of this section.

\begin{theorem}[Distance Representation]
\label{thm:distance-representation}
Given an interval \(I\), for every integer \(i\) with
\(1\le i\le O(\log^*n)\) such that
\[
|S\cap I|
\ge
c_1\cdot n^{4/3}(\log^{(i)}n)^4
\]
for a sufficiently large constant \(c_1>0\), one can compute a
collection \(\Gamma=\{A_t\times B_t\}\) of edge-disjoint bicliques such
that:
\begin{enumerate}
\item every pair recorded by a biclique of \(\Gamma\) has distance in
      \(I\);

\item
      \(
      \sum_t |A_t||B_t|
      \ge
      0.4\cdot |S\cap I|;
      \)

\item \(\sum_t(|A_t|+|B_t|)\) is
      \(O(n^{4/3}/\log^{(i)}n)\) for \(i=1,2\), and
      \(O(n^{4/3}/(\log^{(i)}n)^2)\) for \(i\ge 3\).
\end{enumerate}

The construction time is
\(O(n^{4/3}/\log^{(i)}n)\) for \(i=1,2\), and
\(O(n^{4/3}/(\log^{(i)}n)^2)\) for \(i\ge 3\).
\end{theorem}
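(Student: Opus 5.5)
Our plan mirrors the structure of the proof of Theorem~\ref{thm:approx-count}, but we must now output bicliques rather than counts. The error from unresolved pairs is controlled by the hypothesis $|S\cap I|\ge c_1 n^{4/3}(\log^{(i)}n)^4$.

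For $i=1,2$ we run the same recursion as in Section~\ref{subsec:quasi-small}. We apply Lemma~\ref{lem:partial-brs} once with $r=n^{1/3}/\log^2 n$ for $i=1$, and then again with $r=\log^2 n/\log^2\log n$ on every biclique of $\Pi$ for $i=2$. We return the union $\Gamma$ of all certified bicliques. By Lemma~\ref{lem:partial-brs}, condition~1 holds, the total size and the time are $O(n^{4/3}/\log^{(i)}n)$, and the uncertain pairs number at most $c_0n^{4/3}(\log^{(i)}n)^4$. Every pair of $S\cap I$ is either certified or uncertain, so the certified bicliques record at least $|S\cap I|-c_0n^{4/3}(\log^{(i)}n)^4\ge(1-c_0/c_1)|S\cap I|\ge 0.4|S\cap I|$ pairs once $c_1$ is large. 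Here no capture cells are needed.

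For $i\ge3$ we follow Section~\ref{subsec:quasi-large}. The first two levels and Recurrence~\eqref{formula:DualRecurrence} on each $T(q,q)$ with $r=q^{1/3}/g^2$ produce explicit certified bicliques of total size $O(n^{4/3}/g^2)$, which we keep. The difficulty is the bottom subproblems $T(m',n')$ with $m'\le q^{1/3}g^4$ and $n'\le q^{2/3}g^2$. There the approximate counter only used the numbers $m'_\sigma$, whereas explicitly writing the bicliques $A'_\sigma\times B'(\sigma)$ over all cells of the hierarchical cutting would cost $\sum_\sigma|A'_\sigma|=O(m'r')$ plus $n'\log r'$ for the canonical subsets $B'(\sigma)$ at every level. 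The latter term is the $\log$ factor we cannot afford.

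Our plan for this step is the \emph{capture cell} idea. Each point $b\in B'$ lies in a root-to-leaf path of cells $\sigma_0\supset\sigma_1\supset\cdots\supset\sigma_k$. The certified pairs of $b$ are partitioned among the sets $A'_{\sigma_j}$. We pick for $b$ only one cell on the path, its capture cell, and record $A'_{\sigma}\times\{b\}$ only there. The biclique we output at a cell $\sigma$ is $A'_\sigma\times C(\sigma)$, where $C(\sigma)$ is the set of points captured by $\sigma$. The $B$-side size is then $O(n')$ in total, the $A$-side size is $O(m'r')$, and edge-disjointness is immediate. To guarantee a constant fraction, we choose as capture cell the $j$ maximizing $|A'_{\sigma_j}|$, or more cheaply a cell achieving a constant fraction of $\sum_j|A'_{\sigma_j}|$.

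This choice alone is not enough, so we refine it. Because $|A'_{\sigma_j}|$ can be spread over $k=\Theta(\log r')$ levels, a single cell captures only a $1/k$ fraction in the worst case. We therefore charge globally. At each level $j$ we record the cells $\sigma\in\Xi_j$ carrying a large share of certified pairs weighted by $|A'_\sigma|\cdot|B'(\sigma)|$. We also exploit the fact that the total weight at depth $j$ decays geometrically, since the conflict lists shrink by $\rho$ per level while the cell counts grow by $\rho^2$. From this we argue that the top $O(1)$ levels below each cell, or the cells where a point's contribution is concentrated, capture at least half of all certified pairs summed over $B'$.

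The capture cells themselves are computed with the $\Gamma$-framework point location in amortized $O(1)$ time per point and only $O(1)$ ancestor steps. Combining with the $1-c_0/c_1$ loss from unresolved pairs gives the $0.4$ bound. The size and time then become $O(m'r'+n')=O(q^{2/3}g^2)$ per subproblem, which after substitution into \eqref{equ:tiny10} and \eqref{equ:tiny00} totals $O(n^{4/3}/(\log^{(i)}n)^2)$. Proving the constant-fraction capture guarantee is the step I expect to be the main obstacle.
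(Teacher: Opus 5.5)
Your treatment of $i=1,2$ and of the upper recursion levels for $i\ge 3$ matches the paper. You also correctly identify the bottleneck in the bottom subproblems $T(m',n')$: the $n'\log r'$ cost of listing each point of $B'$ at every level of the cutting. The gap is the constant-fraction guarantee. Your sketch does not establish it, and it cannot be established along the route you describe. With the non-cumulative canonical sets $A'_\sigma$, one selected cell per point records only the annuli that \emph{first} contain the cell at that level. As you note, nothing prevents a point's certified pairs from being spread evenly over the $\Theta(\log r')$ levels of its path. In that case any choice of $O(1)$ cells per point records only an $O(1/\log r')$ fraction, and recording more levels per point brings back the $n'\log r'$ cost. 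The geometric-decay remark does not rescue this. The bound $\sum_{\sigma\in\Xi_j}|A'_\sigma||B'(\sigma)|\le m'n'/\rho^{j-1}$ is only an upper bound per level; it gives no lower bound on what the top levels actually carry. The hypothesis on $|S\cap I|$ is also global, so it gives nothing per subproblem.

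The missing idea has two parts. First, at a selected cell $\sigma$, output the \emph{cumulative} set $\mathcal{A}'(\sigma)=\bigcup_{\sigma'\text{ ancestor of }\sigma}A'_{\sigma'}$ of all annuli containing $\sigma$, paired with all of $B'(\sigma)$. Second, select cells by a threshold rather than a maximum. A cell $\sigma$ is selected if $|\mathcal{A}'(\sigma)|\ge|\hat A'_\sigma|$ while its parent fails this, and a leaf is selected if no cell on its path qualifies. The proof then rests on three facts.
\begin{enumerate}
\item Down the tree, $|\mathcal{A}'|$ is nondecreasing and $|\hat A'|$ is nonincreasing. Hence each root-to-leaf path has exactly one selected cell, the selected cells have disjoint interiors, and $\sum|B'(\sigma)|= n'$.
\item Let $\sigma'$ be the parent of a selected cell $\sigma$. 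Since $\sigma'$ fails the condition, $|\mathcal{A}'(\sigma)|\le|\mathcal{A}'(\sigma')|+|\hat A'_{\sigma'}|<2|\hat A'_{\sigma'}|$. The conflict-list bound then gives $\sum|\mathcal{A}'(\sigma)|=O(m'r')$.
\item Consider a certified pair $(a,b)$ with $b\in B'(\sigma)$ that is missed. Then $D_a$ contains $b$'s leaf but not $\sigma$, so $a\in\hat A'_\sigma$. The misses charged to $\sigma$ number at most $|\hat A'_\sigma||B'(\sigma)|\le|\mathcal{A}'(\sigma)||B'(\sigma)|$, which is exactly what $\sigma$ records. Selected leaves miss no certified pairs.
\end{enumerate}
This local charging gives the factor $1/2$ on the certified pairs of every bottom subproblem. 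It then yields $M\ge(|S\cap I|-N)/2\ge\frac12(1-c/c_1)|S\cap I|\ge 0.4|S\cap I|$ once $c_1\ge 5c$.
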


\paragraph{Remark.}
The second statement implies that the bicliques of \(\Gamma\) record at
least forty percent of the distances in \(S\cap I\). Although the
theorem does not explicitly mention unresolved pairs, they are an
essential part of the construction. 

\medskip

In the remainder of this section, we prove
Theorem~\ref{thm:distance-representation}. The proof follows the same
high-level framework as the approximate counting algorithm. However,
there is a crucial difference. Theorem~\ref{thm:approx-count} only
requires counting target distances, whereas here we must organize them
into bicliques. Consequently, the representation problem is
substantially more challenging.

We begin with the easier cases \(i=1,2\), where the algorithm follows
directly from Lemma~\ref{lem:partial-brs}. The case \(i\ge3\) requires
new ideas and constitutes the main technical contribution of this
section.

\subsection{The cases \texorpdfstring{$i=1,2$}{i=1,2}}

For \(i=1,2\), the distance representation theorem follows directly from
the recursive construction underlying
Theorem~\ref{thm:approx-count}. 

As in Section~\ref{subsec:quasi-small}, every certified biclique
produced by Recurrence~\eqref{equ:small} is added to \(\Gamma\), while the bicliques
corresponding to the remaining recursive subproblems form \(\Pi\).

For \(i=1\), we apply Lemma~\ref{lem:partial-brs} once with
\(r=n^{1/3}/\log^2 n\). The running time is
\(O(n^{4/3}/\log n)\), and
\(\sum_t(|A_t|+|B_t|)=O(n^{4/3}/\log n)\).
The total number of uncertain pairs recorded in \(\Pi\) is
\(O(n^{4/3}\log^4 n)\).

Let \(N(\Pi)\) denote the number of uncertain pairs recorded in
\(\Pi\). Since all certified pairs are represented in \(\Gamma\), we have 
\[
\sum_{A_t\times B_t\in\Gamma}|A_t||B_t|
=
|S\cap I|-N(\Pi).
\]

Since \(N(\Pi)\le c\,n^{4/3}\log^4 n\) for some constant \(c>0\), and
Theorem~\ref{thm:distance-representation} assumes
\(|S\cap I|\ge c_1n^{4/3}\log^4 n\), we obtain
\[
\sum_{A_t\times B_t\in\Gamma}|A_t||B_t|
\ge (1-\frac{c}{c_1})|S\cap I|.
\]

Therefore, by choosing \(c_1\) sufficiently large, we have

\[
\sum_{A_t\times B_t\in\Gamma}|A_t||B_t|
\ge
0.4\,|S\cap I|.
\]

This proves the theorem for \(i=1\).

For \(i=2\), we apply Recurrence~\eqref{equ:small} twice.
As shown in Section~\ref{subsec:quasi-small}, the running time becomes
\(O(n^{4/3}/\log\log n)\), while the total number of uncertain pairs is
\(O(n^{4/3}\log^4\log n)\).
Again, all certified pairs are recorded in \(\Gamma\), and therefore
\(
\sum_{A_t\times B_t\in\Gamma}|A_t||B_t|
=
|S\cap I|-N(\Pi).
\)
Using the assumption
\(|S\cap I|\ge c_1n^{4/3}(\log\log n)^4\) and choosing \(c_1\)
sufficiently large yields
$\sum_{A_t\times B_t\in\Gamma}|A_t||B_t| \ge 0.4\,|S\cap I|$.

The size bound
\(\sum_t(|A_t|+|B_t|)=O(n^{4/3}/\log\log n)\)
follows directly from the recurrence. This proves the theorem for
\(i=2\).

\subsection{The case \texorpdfstring{$i\geq 3$}{i>=3}}
\label{subsec:disrep-large}

For \(i\ge 3\), one might try to continue applying
Recurrence~\eqref{equ:small}. As discussed in
Section~\ref{subsec:quasi-large}, this approach incurs a constant-factor
loss at every recursive level and therefore does not yield the bounds
of Theorem~\ref{thm:distance-representation} for general \(i\).
Consequently, a different approach is required.

The high-level framework remains the same as in
Section~\ref{subsec:quasi-large}. We again apply the recurrence twice
and reduce the problem to a collection of tiny subproblems
\(T(q,q)\), where \(q=\log^6\log n\). However, the way we solve these
subproblems is fundamentally different.

For approximate counting, it suffices to know, for each cell of the
last cutting, how many annuli completely contain the cell. This allows
us to count certified pairs without explicitly constructing bicliques.
For distance representation, however, this information is no longer
sufficient. We must explicitly organize the target pairs into
bicliques, and therefore a new idea is needed.

We follow the same notation as in
Section~\ref{subsec:quasi-large} and obtain
Recurrence~\eqref{equ:tiny00}. We add all certified bicliques produced
during the first two recursive levels to \(\Gamma\). This takes
\(O(n^{4/3}/\log\log n)\) time, and the total contribution to
\(\sum_t(|A_t|+|B_t|)\) is bounded by the same bound.

Next, for every subproblem \(T(q,q)\), we apply
Recurrence~\eqref{equ:tiny10} and again add all certified bicliques to
\(\Gamma\). This contributes \(O(q^{4/3}/g^2)\) time and the same
asymptotic amount to the total representation size.

We now solve the subproblem
\(T(q^{1/3}g^4,q^{2/3}g^2)\) in Recurrence~\eqref{equ:tiny10}, which corresponds to an unresolved
biclique \(A'\times B'\) with \(A'\subseteq A\),
\(B'\subseteq B\), \(|A'|\leq q^{1/3}g^4\), and
\(|B'|\leq q^{2/3}g^2\). Let \(m'=|A'|\) and \(n'=|B'|\).

Let \(\mathcal D'\) be the set of annuli defined by the points of $A'$ with respect to the interval $I$.  We compute a hierarchical
\((1/r')\)-cutting \(\{\Xi_0,\Xi_1,\ldots,\Xi_k\}\) for
\(\mathcal D'\), where \(r'=m'/g^6\). This takes \(O(m'r')\) time.
Let \(\Xi\) denote the set of all cutting cells.

In Section~\ref{subsec:quasi-large}, the subproblem
\(T(q^{1/3}g^4,q^{2/3}g^2)\) was solved by counting, for each cell of
the last cutting, how many annuli completely contain the cell. That
approach is sufficient for approximate counting because the goal is
merely to compute a number.

For distance representation, the situation is different. We must
organize the target pairs into bicliques, and therefore the numbers
computed in Section~\ref{subsec:quasi-large} are no longer useful by
themselves. To overcome this difficulty, we introduce a new notion
called \emph{capture cells}. Capture cells allow us to represent a
constant fraction of the certified pairs while maintaining the desired
running-time bound.

In what follows, we first define capture cells and then show how they
lead to the distance representation theorem.


\subsubsection{Capture cells}

For a cell \(\sigma\in\Xi\), define \(\mathcal A'(\sigma)\) to be the
set of points \(a\in A'\) whose annuli \(D_a\) completely contain
\(\sigma\). We further define \(\hat A'_\sigma\) to be the set of
points \(a\in A'\) whose annuli belong to the conflict list
\(\mathcal D'_\sigma\).

\begin{definition}
    The inequality $|\mathcal A'(\sigma)| \ge |\hat A'_\sigma|$ is called the \emph{capture condition}.
\end{definition}


A cell \(\sigma\) is called a \emph{Type-I capture cell} if it satisfies
the capture condition but its parent does not. More precisely,
\(\sigma\) is a Type-I capture cell if
\(|\mathcal A'(\sigma)|\ge |\hat A'_\sigma|\) and
\(|\mathcal A'(\sigma')|<|\hat A'_{\sigma'}|\), where \(\sigma'\) is
the parent of \(\sigma\).
Note that a cell of \(\Xi_0\) can never satisfy the capture condition.
Indeed, \(\Xi_0\) consists of a single cell equal to the entire plane,
and thus \(|\mathcal A'(\sigma)|=0\) while
\(|\hat A'_\sigma|=m'\).

Recall that the cells of \(\Xi\) form a tree and the cells of the last
cutting \(\Xi_k\) are {\em leaf cells}. We consider a cell to be both an
ancestor and a descendant of itself.

A leaf cell whose root-to-leaf path contains no Type-I capture cell is
called a \emph{Type-II capture cell}.

We refer to both Type-I and Type-II cells as \emph{capture cells}, and
use \(\Sigma\) to denote the set of all capture cells.

\begin{lemma}
\label{lem:unique-type1}
Every root-to-leaf path in \(\Xi\) contains exactly one capture cell.
\end{lemma}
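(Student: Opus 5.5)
The plan is to argue directly from the definitions by fixing a root-to-leaf path \(\sigma_0,\sigma_1,\ldots,\sigma_k\), where \(\sigma_0\) is the root (the whole plane), \(\sigma_k\in\Xi_k\) is a leaf cell and \(\sigma_{j-1}\) is the parent of \(\sigma_j\). The path will be shown to contain at least one capture cell and then at most one.

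The key structural fact, to be verified first, is that the capture condition is \emph{monotone} along the path: if \(\sigma_{j-1}\) satisfies it, so does \(\sigma_j\). Since \(\sigma_j\subseteq\sigma_{j-1}\), every annulus containing \(\sigma_{j-1}\) also contains \(\sigma_j\), so \(\mathcal A'(\sigma_{j-1})\subseteq\mathcal A'(\sigma_j)\). Moreover, an annulus whose boundary circles cross the interior of \(\sigma_j\) also crosses the interior of \(\sigma_{j-1}\), so \(\hat A'_{\sigma_j}\subseteq\hat A'_{\sigma_{j-1}}\). Hence \(|\mathcal A'(\sigma_j)|\ge|\mathcal A'(\sigma_{j-1})|\ge|\hat A'_{\sigma_{j-1}}|\ge|\hat A'_{\sigma_j}|\). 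This containment of conflict lists under refinement is the only point needing care, and it follows immediately from the definition of \(\mathcal D'_\sigma\) via relative interiors together with the fact that children lie inside their parents.

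By monotonicity, the set of indices \(j\) for which \(\sigma_j\) satisfies the capture condition is a suffix \(\{j^*,\ldots,k\}\), or it is empty. Index \(0\) is never in it, as noted in the text, so \(j^*\ge1\) whenever the suffix is nonempty.

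The two cases then settle the count.
\begin{enumerate}
\item If the suffix is nonempty, \(\sigma_{j^*}\) satisfies the condition while its parent \(\sigma_{j^*-1}\) does not, so \(\sigma_{j^*}\) is a Type-I capture cell. For any other \(j\), either \(\sigma_j\) fails the condition (\(j<j^*\)) or its parent satisfies it (\(j>j^*\)), so no other cell on the path is Type-I. The leaf is not Type-II, because its root-to-leaf path contains a Type-I cell. The path therefore contains exactly one capture cell.
\item If the suffix is empty, the path has no Type-I cell. The leaf \(\sigma_k\) is then Type-II by definition. The non-leaf cells on the path cannot be Type-II, since Type-II cells are leaves by definition. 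So again exactly one capture cell lies on the path.
\end{enumerate}

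One subtlety remains. Capture cells are defined per cell, but a Type-II leaf must also be excluded from lying on \emph{another} path as a non-leaf. This is automatic, because leaves are only endpoints of paths.
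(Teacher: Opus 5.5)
Your proof is correct and takes essentially the same route as the paper. The paper proves the contrapositive of your monotonicity claim: if a cell fails the capture condition, so does every ancestor, by the same two containments $\mathcal A'(\sigma'')\subseteq\mathcal A'(\sigma')$ and $\hat A'_{\sigma'}\subseteq\hat A'_{\sigma''}$. This rules out two Type-I cells on one path, Type-II leaves supply existence, and your suffix formulation just spells out that case split more explicitly.
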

\begin{proof}
The Type-II capture cells guarantee that every root-to-leaf path
contains at least one capture cell. Since only leaf cells can be
Type-II capture cells, it suffices to show that every root-to-leaf path
contains at most one Type-I capture cell.

Let \(\sigma\) be a Type-I capture cell and let \(\sigma'\) be its
parent. By definition,
\(|\mathcal A'(\sigma')|<|\hat A'_{\sigma'}|\).

Consider any ancestor \(\sigma''\) of \(\sigma'\). Since
\(\sigma''\) contains \(\sigma'\), we have
\(\hat A'_{\sigma'}\subseteq \hat A'_{\sigma''}\) and therefore
\(|\hat A'_{\sigma'}|\le |\hat A'_{\sigma''}|\). On the other hand,
\(\mathcal A'(\sigma'')\subseteq \mathcal A'(\sigma')\), and thus
\(|\mathcal A'(\sigma'')|\le |\mathcal A'(\sigma')|\).

Since
\(|\mathcal A'(\sigma')|<|\hat A'_{\sigma'}|\),
it follows that
$|\mathcal A'(\sigma'')|
<
|\hat A'_{\sigma''}|$.
Hence, \(\sigma''\) cannot satisfy the capture condition. Therefore no
ancestor of \(\sigma\) (other than $\sigma$ itself) can be a Type-I capture cell, proving the
lemma.
\end{proof}

Lemma~\ref{lem:unique-type1} immediately implies the following.

\begin{corollary}
\label{coro:disjoint}
Every two capture cells have disjoint interiors.
\end{corollary}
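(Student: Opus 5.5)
We argue by contradiction, using the cutting-tree structure together with Lemma~\ref{lem:unique-type1}. Let \(\sigma_1,\sigma_2\in\Sigma\) be two distinct capture cells. There are two cases, according to whether one of them is an ancestor of the other in \(\Xi\).

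\emph{Case 1: one cell is an ancestor of the other.} Say \(\sigma_1\) is an ancestor of \(\sigma_2\). Pick any leaf cell \(\lambda\) that is a descendant of \(\sigma_2\); it exists because every cell of \(\Xi_i\) with \(i<k\) has at least one child (the cells of \(\Xi_{i+1}\) cover the plane and each lies in a cell of \(\Xi_i\)). The root-to-\(\lambda\) path then passes through both \(\sigma_1\) and \(\sigma_2\). This contradicts Lemma~\ref{lem:unique-type1}. So this case cannot occur.

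\emph{Case 2: neither is an ancestor of the other.} Let \(\tau\) be their lowest common ancestor, and let \(\tau_1\ne\tau_2\) be the children of \(\tau\) on the paths down to \(\sigma_1\) and \(\sigma_2\), respectively. Both \(\tau_1\) and \(\tau_2\) are cells of the same cutting \(\Xi_j\). By the definition of a cutting, distinct cells of \(\Xi_j\) have pairwise disjoint interiors. Every cell of the hierarchy is contained in its parent, so \(\sigma_1\subseteq\tau_1\) and \(\sigma_2\subseteq\tau_2\). Hence \(\mathrm{int}\,\sigma_1\cap\mathrm{int}\,\sigma_2\subseteq \mathrm{int}\,\tau_1\cap\mathrm{int}\,\tau_2=\emptyset\).

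The only step that needs any care is the tree-structure fact used in Case~2: two cells neither of which is an ancestor of the other lie in distinct, interior-disjoint cells of a common level. This follows directly from the \(c\)-refinement property, which gives each cell a unique parent, together with the disjointness of cells within each \(\Xi_j\). Beyond that, Case~1 is exactly Lemma~\ref{lem:unique-type1}.
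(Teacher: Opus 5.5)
Your proof is correct and follows essentially the same route as the paper: Lemma~\ref{lem:unique-type1} rules out an ancestor--descendant relation between two capture cells, and the tree structure of the hierarchical cutting then forces their interiors to be disjoint. Your Case~2 argument via the lowest common ancestor spells out what the paper takes directly from the definition of hierarchical cuttings, and your extension to a leaf descendant makes explicit why the two cells would lie on a common root-to-leaf path.
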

\begin{proof}
By Lemma~\ref{lem:unique-type1}, two capture cells cannot belong to the
same root-to-leaf path. According to the definition of the hierarchical cuttings, 
two cells of $\Xi$ have overlapped interiors only if one cell is the ancestor of the other. Hence, the two capture cells cannot have overlapped interiors.
\end{proof}

For every capture cell \(\sigma\), every pair recorded by the biclique
\(\mathcal A'(\sigma)\times B'(\sigma)\) has distance in \(I\).
Let \(\mathcal C\) denote the collection of all such bicliques.
By Corollary~\ref{coro:disjoint}, no point of \(B'\) belongs to two
 sets \(B'(\sigma)\) for two capture cells. Consequently, the bicliques of
\(\mathcal C\) are edge-disjoint.

We add all bicliques of \(\mathcal C\) to \(\Gamma\). 

Lemma~\ref{lem:bicliquesize} shows that the total size of all bicliques of $\calC$ is $O(m'r'+n')$. 

\begin{lemma}
\label{lem:bicliquesize}
$\sum_{\sigma\in \Sigma}|\calA'(\sigma)|=O(m'r')$ and $\sum_{\sigma\in \Sigma}(|\calA'(\sigma)|+ |B'(\sigma)|)=O(m'r'+n')$.
\end{lemma}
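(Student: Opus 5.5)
The second claim follows from the first together with Corollary~\ref{coro:disjoint}. Every point of \(B'\) lies in the interior of at most one capture cell; boundary points are assigned to a single cell, as in the construction of canonical subsets. Hence \(\sum_{\sigma\in\Sigma}|B'(\sigma)|\le n'\). So the main work is bounding \(\sum_{\sigma\in\Sigma}|\calA'(\sigma)|\) by \(O(m'r')\), which I plan to do by charging each \(|\calA'(\sigma)|\) to conflict-list sizes, whose total over the hierarchy is \(O(m'r')\).

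\emph{Type-I capture cells.} Let \(\sigma\) be a Type-I capture cell with parent \(\sigma'\). Every annulus completely containing \(\sigma\) either completely contains \(\sigma'\) or has a bounding circle crossing the interior of \(\sigma'\), i.e.\ belongs to \(\mathcal D'_{\sigma'}\). Therefore
\[
|\calA'(\sigma)|\le |\calA'(\sigma')|+|\hat A'_{\sigma'}|<2|\hat A'_{\sigma'}|,
\]
where the last inequality uses that \(\sigma'\) fails the capture condition. Each cell has \(O(1)\) children, so each parent is charged by only \(O(1)\) Type-I children. The total over Type-I cells is thus \(O(\sum_{\sigma'\in\Xi}|\mathcal D'_{\sigma'}|)=O(m'r')\), since the total conflict-list size of the hierarchical cutting is \(O(m'r')\).

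\emph{Type-II capture cells.} A Type-II cell \(\sigma\) is a leaf whose root-to-leaf path contains no Type-I cell. I will argue that \(\sigma\) itself fails the capture condition. If it satisfied the condition, consider the first cell on its root path that satisfies it. That cell is not the root (the root never satisfies the condition), so its parent fails the condition, which makes it a Type-I cell on the path, a contradiction. Hence \(|\calA'(\sigma)|<|\hat A'_\sigma|\le |\mathcal D'_\sigma|\), and summing over leaves again gives \(O(m'r')\).

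Adding the two cases yields \(\sum_{\sigma\in\Sigma}|\calA'(\sigma)|=O(m'r')\), and combining this with the \(B'\) bound gives \(O(m'r'+n')\). The only delicate step is the Type-I bound: the capture condition controls only the parent, so one must relate \(\calA'(\sigma)\) to quantities at \(\sigma'\) via the containment and crossing dichotomy, and use bounded degree to avoid overcharging the parent.
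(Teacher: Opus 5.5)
Your proof is correct and follows the paper's argument. Disjointness of capture cells gives $\sum_{\sigma\in\Sigma}|B'(\sigma)|\le n'$, and each $|\mathcal A'(\sigma)|$ is charged to conflict-list sizes through the containment-or-crossing dichotomy and a failed capture condition, with bounded degree preventing overcharging. The only difference is that the paper treats Type-II cells uniformly, noting their parents also fail the capture condition so the same $2|\hat A'_{\sigma'}|$ bound applies, whereas you charge a Type-II leaf to its own conflict list; this is equally valid and also covers the degenerate case of a leaf with no parent.
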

\begin{proof}
First, since the sets \(B'(\sigma)\) are pairwise disjoint for all
capture cells \(\sigma\in\Sigma\), we have 
$\sum_{\sigma\in\Sigma}
|B'(\sigma)|
=
O(n')$.
It therefore suffices to bound
\(\sum_{\sigma\in\Sigma}|\mathcal A'(\sigma)|\).

For every capture cell \(\sigma\), regardless of whether it is Type-I
or Type-II, its parent \(\sigma'\) does not satisfy the capture
condition. Hence,
\(
|\mathcal A'(\sigma')|
<
|\hat A'_{\sigma'}|.
\)

Every point of \(\mathcal A'(\sigma)\) belongs either to
\(\mathcal A'(\sigma')\) or to \(\hat A'_{\sigma'}\) (but not both). Therefore
\[
|\mathcal A'(\sigma)|
\le
|\mathcal A'(\sigma')|
+
|\hat A'_{\sigma'}|
<
2|\hat A'_{\sigma'}|.
\]

Summing over all capture cells and using the fact that every cell has
\(O(1)\) children gives
\[
\sum_{\sigma\in\Sigma}
|\mathcal A'(\sigma)|
=
O\!\left(
\sum_{\sigma'\in\Xi}
|\hat A'_{\sigma'}|
\right).
\]

Since the total size of the conflict lists of all cells of $\Xi$ is \(O(m'r')\), we have
$\sum_{\sigma'\in\Xi}
|\hat A'_{\sigma'}|
=
O(m'r')$.

The lemma follows.
\end{proof}

\begin{lemma}
\label{lem:capture-construct}
All bicliques of \(\mathcal C\) can be computed in
\(O(m'r'+n')\) time.
\end{lemma}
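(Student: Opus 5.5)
We compute all bicliques of \(\mathcal C\) with one top-down traversal of the cutting tree \(\Xi\), followed by a point-location step for \(B'\). The traversal maintains the sets \(\mathcal A'(\sigma)\) explicitly, but only down to the first cell on each root-to-leaf path that satisfies the capture condition. The conflict lists \(\mathcal D'_\sigma\), and hence the sets \(\hat A'_\sigma\), are already available from the construction of the hierarchical cutting, whose total size is \(O(m'r')\).

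\textbf{Traversal.} Start at the root, where \(\mathcal A'(\sigma)=\emptyset\). For a cell \(\sigma'\) that does not satisfy the capture condition, and each child \(\sigma\) of \(\sigma'\), we have \(\mathcal A'(\sigma)=\mathcal A'(\sigma')\cup\{a\in\hat A'_{\sigma'}\mid D_a\supseteq\sigma\}\). Testing whether \(D_a\) contains the constant-complexity cell \(\sigma\) takes \(O(1)\) time. Hence \(\mathcal A'(\sigma)\) is built in \(O(|\mathcal A'(\sigma')|+|\hat A'_{\sigma'}|)=O(|\hat A'_{\sigma'}|)\) time, using \(|\mathcal A'(\sigma')|<|\hat A'_{\sigma'}|\). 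We also read off \(\hat A'_\sigma\) from the conflict list and check the capture condition in \(O(|\mathcal A'(\sigma)|+|\hat A'_\sigma|)\) time.

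\textbf{Stopping rule and cost.} If \(\sigma\) satisfies the capture condition, or \(\sigma\) is a leaf, we mark \(\sigma\) as a capture cell (Type-I or Type-II respectively) and do not descend further. Otherwise we recurse into its children. By Lemma~\ref{lem:unique-type1}, this marks exactly the cells of \(\Sigma\). Every visited cell other than the root has a parent that fails the capture condition, so the same charging as in Lemma~\ref{lem:bicliquesize} applies. Since each cell has \(O(1)\) children, the total work is \(O(\sum_{\sigma'\in\Xi}|\hat A'_{\sigma'}|)=O(m'r')\).

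\textbf{Computing the sets \(B'(\sigma)\).} Because we never copy \(\mathcal A'\) below a capture cell, no work is wasted on descendants. For each \(b\in B'\) we must find the unique capture cell containing it; uniqueness follows from Corollary~\ref{coro:disjoint}, since the capture cells cover the plane and have disjoint interiors. As in Section~\ref{subsec:quasi-large}, we locate every \(b\in B'\) in the last cutting \(\Xi_k\) in amortized \(O(1)\) time with the \(\Gamma\)-framework of Chan and Zheng. From the leaf found, we need its unique capture ancestor. The plan is to store, for every leaf, a pointer to that ancestor by pushing the identity of the capture cell down the tree: one extra pass over all \(O(r'^2)=O(m'r')\) cells, noting \(r'\le m'\). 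Each point then reaches its capture cell in \(O(1)\) time, and bucketing gives all \(B'(\sigma)\) in \(O(n')\) time. Outputting the bicliques \(\mathcal A'(\sigma)\times B'(\sigma)\) costs their total size, which is \(O(m'r'+n')\) by Lemma~\ref{lem:bicliquesize}.

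\textbf{Main obstacle.} The point-location step is the delicate part. Naive descent through the tree would cost \(O(\log r')\) per point. We must rely on the amortized \(O(1)\) location in \(\Xi_k\) and on the fact that the leaf-to-capture-ancestor map is precomputed within the \(O(m'r')\) budget. For Type-II capture cells the map is trivial, since such a cell is its own leaf. For Type-I capture cells it is the propagated pointer. This reduces the step to the same point-location primitive already used in Section~\ref{subsec:quasi-large}.
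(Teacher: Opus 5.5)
Your argument is correct. Its outer structure matches the paper's: mark the capture cells top-down using the conflict lists, build the $\mathcal A'$-sides, then locate $B'$ in $\Xi_k$ with the $\Gamma$-framework and send each point to the capture ancestor of its leaf. The genuine difference is in how the sets $\mathcal A'(\sigma)$ are produced.

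The paper first computes only the counts $|\mathcal A'(\sigma)|$ for all cells by a top-down traversal, and uses them to identify $\Sigma$. It then forms each $\mathcal A'(\sigma)$, $\sigma\in\Sigma$, as the union of the canonical subsets $A'_{\sigma'}$ of Lemma~\ref{lem:partial-brs} over the ancestors $\sigma'$ of $\sigma$, and pays for this with the output bound of Lemma~\ref{lem:bicliquesize}. You instead materialize $\mathcal A'(\sigma)$ at every visited cell by copying the parent's set and filtering the parent's conflict list, stopping at capture cells. You charge each copy to the parent's conflict list through the failed capture condition. In effect you apply the inequality $|\mathcal A'(\sigma)|<2|\hat A'_{\sigma'}|$ from Lemma~\ref{lem:bicliquesize} to all visited cells, not only to capture cells.

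The paper's route reuses the canonical-subset machinery that also underlies Theorem~\ref{thm:implicitbiclique}. However, its union step tacitly needs the walk over the $\Theta(\log r')$ ancestors to be paid for by the output, e.g., by skipping empty $A'_{\sigma'}$ via ancestor links or by carrying the nonempty ones down a DFS. A naive upward walk from each of up to $\Theta(r'^2)$ capture cells costs $\Theta(r'^2\log r')$, which can exceed $m'r'=r'^2g^6$ when $\log r'\gg g^6$. Your version makes the accounting fully explicit and needs only the conflict lists and $O(1)$-time containment tests.

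Your leaf-to-capture-ancestor pointers are simply the dual of the paper's collection of leaf descendants. Both cost $O(r'^2+n')=O(m'r'+n')$.
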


\begin{proof}
We first identify all capture cells. After the hierarchical cutting is
constructed, the conflict list \(\mathcal D'_\sigma\) is available for
every cell \(\sigma\in\Xi\), and therefore
\(|\hat A'_\sigma|\) is also known. We traverse the cutting tree
\(\Xi\) in a top-down manner while maintaining
\(|\mathcal A'(\sigma)|\) for every cell \(\sigma\).
Since each cell has \(O(1)\) children, all values
\(|\mathcal A'(\sigma)|\) can be computed in \(O(m'r')\) time.
Consequently, all capture cells can be identified in \(O(m'r')\) time.

Next, we compute \(\mathcal A'(\sigma)\) for every capture cell
\(\sigma\). For each cell \(\sigma\in\Xi\), let \(A'_\sigma\) denote
the set of points \(a\in A'\) such that the annulus \(D_a\) contains
\(\sigma\) but not the parent of \(\sigma\). As discussed in the proof
of Lemma~\ref{lem:partial-brs}, all sets \(A'_\sigma\) can be computed
in \(O(m'r')\) time.

Observe that \(\mathcal A'(\sigma)\) is precisely the union of
\(A'_{\sigma'}\) over all ancestors \(\sigma'\) of \(\sigma\) (where a
cell is considered an ancestor of itself). Thus, for every capture
cell \(\sigma\), we compute \(\mathcal A'(\sigma)\) by taking the union
of the nonempty sets \(A'_{\sigma'}\) over all ancestors
\(\sigma'\) of \(\sigma\). By Lemma~\ref{lem:bicliquesize}, the total
size of all such sets is \(O(m'r')\), so all sets
\(\mathcal A'(\sigma)\) can be computed within the same bound.

It remains to compute \(B'(\sigma)\) for all capture cells
\(\sigma\). We first locate every point \(b\in B'\) in the last
cutting. As in Section~\ref{subsec:quasi-large}, we use the
\(\Gamma\)-framework and perform each point location in amortized
\(O(1)\) time. Hence, all point locations take \(O(n')\) amortized
time.

If a capture cell \(\sigma\) is a leaf cell, then \(B'(\sigma)\) is
already available. Otherwise, \(B'(\sigma)\) is the union of
\(B'(\sigma'')\) over all leaf descendants \(\sigma''\) of
\(\sigma\). These descendants can be found by traversing the cutting
tree. Therefore, all sets \(B'(\sigma)\) can be computed in
\(O(n')\) time (or \(O(n'+m'r')\) if the tree traversal is included).

Combining the above steps, all bicliques of \(\mathcal C\) can be
computed in \(O(m'r'+n')\) time.
\end{proof}

Combining Lemmas~\ref{lem:bicliquesize} and
\ref{lem:capture-construct}, the subproblem \(T(m',n')\) can be solved
in \(O(m'r'+n')\) time, and the total size of all bicliques produced is
also \(O(m'r'+n')\).

Since \(r'=m'/g^6\), \(m'\le q^{1/3}g^4\), and \(n'\le q^{2/3}g^2\), we obtain \(T(m',n')=O(q^{2/3}g^2)\). Substituting this bound into \eqref{equ:tiny10} yields $T(q,q)=O(q^{4/3}/g^2)$.
Substituting again into \eqref{equ:tiny00} gives $T(n,n)=O(n^{4/3}/g^2)$.
Replacing \(g\) by \(\log^{(i-2)}q\) with $q=\log^6\log n$, we conclude that the total
running time is \(O(n^{4/3}/(\log^{(i)}n)^2)\). By the same analysis,
the total size of all bicliques in \(\Gamma\) is also
\(O(n^{4/3}/(\log^{(i)}n)^2)\).

The running-time and representation-size bounds of
Theorem~\ref{thm:distance-representation} are now established. It
remains to prove the constant-fraction guarantee, namely that the
bicliques of \(\Gamma\) record at least forty percent of the distances
in \(S\cap I\). We prove this in the next subsection.

\subsubsection{Proving the constant-fraction guarantee}

It remains to prove the second statement of
Theorem~\ref{thm:distance-representation}, namely that
\begin{equation}\label{equ:40bound}
\sum_{A_t\times B_t\in\Gamma}
|A_t||B_t|
\ge
0.4\,|S\cap I|.    
\end{equation}

We begin with a single subproblem \(T(m',n')\) discussed above.
Let \(S'\) denote the set of all point pairs between \(A'\) and \(B'\)
whose distances lie in \(I\).

For each leaf cell \(\sigma\) of $\Xi$, the pairs recorded in the biclique
\(\hat A'_\sigma\times B'(\sigma)\) are considered
\emph{uncertain pairs}. All other pairs of \(S'\) are called
\emph{certified pairs}. Since \(|\hat A'_\sigma|\le m'/r'\) and $\sum_{\sigma\in \Xi_k}B'(\sigma)=n'$, the total
number of uncertain pairs in the subproblem is \(O(m'n'/r')\).

The following lemma is the key ingredient. Its proof relies crucially
on the capture condition.

\begin{lemma}
\label{lem:half}
The number of pairs recorded in the bicliques of \(\mathcal C\) is at
least one half of all certified pairs of \(S'\).
\end{lemma}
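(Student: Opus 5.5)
We charge certified pairs to capture cells, one point $b\in B'$ at a time. Fix $b\in B'$ and let $\sigma_b$ be the leaf cell of $\Xi_k$ containing $b$. A pair $(a,b)$ with $a\in A'$ is certified exactly when $D_a$ contains $\sigma_b$, because $b\in D_a$ and $a\notin\hat A'_{\sigma_b}$. So the number of certified pairs involving $b$ is $|\mathcal A'(\sigma_b)|$. By Lemma~\ref{lem:unique-type1}, the root-to-leaf path to $\sigma_b$ contains exactly one capture cell $\sigma$, and $b\in B'(\sigma)$. Hence the number of pairs of $\mathcal C$ involving $b$ is $|\mathcal A'(\sigma)|$. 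It therefore suffices to show
\[
|\mathcal A'(\sigma)|\ge \tfrac12\,|\mathcal A'(\sigma_b)|
\]
for every $b$, and then sum over $b\in B'$. The disjointness from Corollary~\ref{coro:disjoint} ensures that each pair of $\mathcal C$ is counted exactly once in this sum.

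If $\sigma$ is a Type-II capture cell, then $\sigma=\sigma_b$ and the inequality is trivial. Now suppose $\sigma$ is a Type-I capture cell, an ancestor of $\sigma_b$. Each $a\in\mathcal A'(\sigma_b)$ falls into one of two cases. Either $D_a$ already contains $\sigma$, so $a\in\mathcal A'(\sigma)$. Or some bounding circle of $D_a$ crosses the interior of $\sigma$, so $a\in\hat A'_\sigma$. These are the only options: $\sigma_b\subseteq\sigma$ and $D_a\supseteq\sigma_b$, so if $D_a$ did not contain $\sigma$, its boundary would have to cross the interior of $\sigma$. Consequently
\[
|\mathcal A'(\sigma_b)|\le|\mathcal A'(\sigma)|+|\hat A'_\sigma|\le 2|\mathcal A'(\sigma)|,
\]
where the last step is exactly the capture condition for $\sigma$. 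This gives the desired factor $\tfrac12$.

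Summing over $b\in B'$ then gives that $\mathcal C$ records at least half of all certified pairs of $S'$.

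The main point to check carefully is the containment step. We need that an annulus containing the leaf cell but not the capture ancestor must have a bounding circle crossing the relative interior of that ancestor. This holds because cells are connected regions: if no bounding circle crossed the interior of $\sigma$, then $\sigma$ would lie entirely inside or entirely outside $D_a$, and since $D_a\supseteq\sigma_b\subseteq\sigma$ it would lie inside. Boundary degeneracies, such as a circle passing only along the boundary of $\sigma$, are handled by the convention on relative interiors and closed/open boundaries.
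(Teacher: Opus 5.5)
Your proof is correct and is essentially the paper's argument. Both rest on the inclusion $\mathcal A'(\sigma_b)\subseteq\mathcal A'(\sigma)\cup\hat A'_\sigma$ for the unique Type-I capture ancestor $\sigma$ of $\sigma_b$, combined with the capture condition $|\hat A'_\sigma|\le|\mathcal A'(\sigma)|$; the paper charges the unrecorded certified pairs to $\sigma$ in aggregate (at most $|\hat A'_\sigma|\,|B'(\sigma)|\le|\mathcal A'(\sigma)|\,|B'(\sigma)|$ of them), whereas you compare the counts for each $b\in B'$ and sum.
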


\begin{proof}
Consider a certified pair \((a,b)\in S'\). Let \(\sigma_b\) be the leaf
cell of $\Xi$ containing \(b\). Since \((a,b)\) is certified, the annulus
\(D_a\) contains \(\sigma_b\).

Assume that \((a,b)\) is not recorded by any biclique of
\(\mathcal C\). Then \(\sigma_b\) cannot itself be a capture cell.
By Lemma~\ref{lem:unique-type1}, there is a unique capture cell
\(\sigma\) on the root-to-leaf path from the root to \(\sigma_b\).
Since \(\sigma_b\) is not a capture cell, \(\sigma\) must be a Type-I
capture cell. Because \(b\in \sigma_b\subseteq \sigma\), we have
\(b\in B'(\sigma)\).

Because \((a,b)\) is not recorded by the biclique
\(\mathcal A'(\sigma)\times B'(\sigma)\), the annulus \(D_a\) does not
contain \(\sigma\). Since \(D_a\) contains
\(\sigma_b\subseteq \sigma\), the boundary of \(D_a\) must cross
\(\sigma\). Hence \(a\in \hat A'_\sigma\).

We charge \((a,b)\) to the Type-I capture cell \(\sigma\).

The number of charged pairs associated with \(\sigma\) is at most
\(|\hat A'_\sigma|\cdot |B'(\sigma)|\).
On the other hand, the number of pairs recorded by the biclique
\(\mathcal A'(\sigma)\times B'(\sigma)\) is
\(|\mathcal A'(\sigma)|\cdot |B'(\sigma)|\).
Since \(\sigma\) is a Type-I capture cell,
\(|\mathcal A'(\sigma)|\ge |\hat A'_\sigma|\). Therefore,
\(
|\mathcal A'(\sigma)|\cdot |B'(\sigma)|
\ge
|\hat A'_\sigma|\cdot |B'(\sigma)|.
\)
Thus every charged pair associated with \(\sigma\) can be injectively
assigned to a pair recorded by
\(\mathcal A'(\sigma)\times B'(\sigma)\).

Summing over all Type-I capture cells, the total number of recorded
pairs in \(\mathcal C\) is at least the total number of charged pairs.
Consequently, the recorded pairs in \(\mathcal C\) account for at least
one half of all certified pairs of \(S'\).
\end{proof}

We now prove \eqref{equ:40bound}.

Let \(N\) denote the total number of uncertain pairs produced by the
algorithm. As shown above, the number of uncertain pairs for a
subproblem \(T(m',n')\) is \(O(m'n'/r')\), which is
\(O(q^{2/3}g^8)\) because \(r'=m'/g^6\) and
\(n'\le q^{2/3}g^2\). It follows that the number of uncertain pairs
associated with a subproblem \(T(q,q)\) in
Recurrence~\eqref{equ:tiny10} is \(O(q^{4/3}g^4)\). Since
\(g=\log^{(i-2)}q\) and \(q=\log^6\log n\), the total number of
uncertain pairs in the entire algorithm is
$N=O(
n^{4/3}(\log^{(i)}n)^4
)$. Hence, for some constant \(c>0\),
\(
N
\le
c\cdot n^{4/3}(\log^{(i)}n)^4.
\)
By the assumption of
Theorem~\ref{thm:distance-representation},
\(
|S\cap I|
\ge
c_1\cdot n^{4/3}(\log^{(i)}n)^4.
\)
Therefore,
\[
N
\le
\frac{c}{c_1}\,|S\cap I|.
\]

Let \(M\) denote the number of pairs recorded in \(\Gamma\), i.e.,
$M
=
\sum_{A_t\times B_t\in\Gamma}
|A_t||B_t|$.

Partition \(M\) into two parts:
\begin{itemize}
\item \(M_1\): the number of pairs recorded by bicliques produced
before the subproblems \(T(m',n')\);

\item \(M_2\): the number of pairs recorded by the capture-cell
bicliques produced inside the subproblems \(T(m',n')\).
\end{itemize}

Hence \(M=M_1+M_2\).

Every pair of \(S\cap I\) belongs to exactly one of the following
categories:

\begin{enumerate}
\item it is already recorded by the bicliques contributing to \(M_1\);
\item it belongs to a certified pair inside some subproblem
\(T(m',n')\);
\item it is an uncertain pair.
\end{enumerate}

By Lemma~\ref{lem:half}, the bicliques contributing to \(M_2\) record
at least one half of all certified pairs inside the subproblems.
Therefore,
\[
M_2
\ge
\frac{|S\cap I|-M_1-N}{2}.
\]

Using \(M=M_1+M_2\), we obtain
\[
M
=
M_1+M_2
\ge
M_1+\frac{|S\cap I|-M_1-N}{2}
=
\frac{M_1+|S\cap I|-N}{2}
\ge
\frac{|S\cap I|-N}{2}.
\]

Since \(N\le (c/c_1)|S\cap I|\),
\[
M
\ge
\frac12
\left(
1-\frac{c}{c_1}
\right)
|S\cap I|.
\]

Choosing \(c_1\ge 5c\) yields
\(
M
\ge
0.4\,|S\cap I|.
\)
This proves \eqref{equ:40bound}. Together with the running-time
and representation-size bounds established earlier, this completes the
proof of Theorem~\ref{thm:distance-representation}.

\section{Distance selection}
\label{sec:selection}

In this section, we combine the quasi-decision theorem
(Theorem~\ref{thm:quasidecision}) and the distance representation theorem
(Theorem~\ref{thm:distance-representation}) to obtain a deterministic
\(O(n^{4/3})\)-time algorithm for distance selection.

Recall that
\(S=\{\|ab\|:a\in A,\ b\in B\}\).
Given an integer \(K\), our goal is to compute the \(K\)-th smallest
distance in \(S\), denoted by \(\delta^*\).

The algorithm proceeds in phases. Throughout the execution, we maintain
an interval \(I\) containing \(\delta^*\). 
For \(i=1,2,\ldots,O(\log^*n)\), the goal of the \(i\)-th phase is to
transform an interval \(I\) satisfying

\[
|S\cap I|
=
O\!\left(
n^{4/3}(\log^{(i-1)}n)^4
\right)
\]
(and \(O(n^2)\) when \(i=1\)) and containing \(\delta^*\) into a smaller interval \(I'\subseteq I\)
such that

\[
|S\cap I'|
=
O\!\left(
n^{4/3}(\log^{(i)}n)^4
\right)
\]
and \(\delta^*\in I'\).

Initially, we have \(I=(0,+\infty)\), and hence
\(|S\cap I|=O(n^2)\). After \(O(\log^*n)\) phases, we obtain an interval
\(I\) satisfying \(|S\cap I|=O(n^{4/3})\) and containing \(\delta^*\).
A final {\em distance-reporting} step then computes \(\delta^*\) from \(S\cap I\)
in additional \(O(n^{4/3})\) time.

We will show that the \(i\)-th phase consists of
\(O(\log^{(i)}n)\) iterations. Each iteration, which reduces the remaining candidate distances by a constant fraction, takes
\(O(n^{4/3}/\log^{(i)}n)\) time for \(i=1,2\), and
\(O(n^{4/3}/(\log^{(i)}n)^2)\) time for \(i\ge 3\).
Consequently, each of the first two phases runs in
\(O(n^{4/3})\) time, while the \(i\)-th phase for \(i\ge 3\) runs in
\(O(n^{4/3}/\log^{(i)}n)\) time.
Therefore, the total time spent over all phases is
\[
O(n^{4/3})\cdot 
\left(
1+
\sum_{i=3}^{O(\log^*n)}
\frac{1}{\log^{(i)}n}
\right),
\]
which is \(O(n^{4/3})\). Including the final distance-reporting step, the
overall running time remains \(O(n^{4/3})\).

In what follows, we first describe the algorithm in
Section~\ref{subsec:algodes} and then prove its correctness in
Section~\ref{subsec:algocor}.

\subsection{Algorithm description}
\label{subsec:algodes}

We describe a general \(i\)-th phase. Before the phase starts, we are
given an interval \(I\) satisfying $|S\cap I| = O(n^{4/3}(\log^{(i-1)}n)^4)$ and containing \(\delta^*\). The goal of the phase is to produce a subinterval \(I'\subseteq I\) such that $|S\cap I'| = O(n^{4/3}(\log^{(i)}n)^4)$ and \(\delta^*\in I'\).

The phase proceeds in iterations. At the beginning of the \(j\)-th
iteration, we are given an interval \(I_{j-1}\) containing
\(\delta^*\). Initially, \(I_0=I\). Each iteration either:
\begin{enumerate}
\item produces a smaller interval \(I_j\subseteq I_{j-1}\) satisfying
      \(\delta^*\in I_j\) and
      \(|S\cap I_j|\le \rho\,|S\cap I_{j-1}|\) for some constant
      \(\rho<1\); or

\item certifies that
      \(|S\cap I_{j-1}|=O(n^{4/3}(\log^{(i)}n)^4)\), in which case
      the phase terminates and returns \(I_{j-1}\) as $I'$.
\end{enumerate}

Since the phase starts with
\(|S\cap I|=O(n^{4/3}(\log^{(i-1)}n)^4)\), the first outcome can occur
only \(O(\log^{(i)}n)\) times. Hence, the phase terminates after
\(O(\log^{(i)}n)\) iterations.

In the following, we describe a general $j$-th iteration. For simplicity, we
assume \(i\ge 3\). The cases \(i=1,2\) are identical except for the
running-time bounds. We will show that each iteration requires
\(O(n^{4/3}/(\log^{(i)}n)^2)\) time. 

We begin by applying
Theorem~\ref{thm:distance-representation} to the interval \(I_{j-1}\)
with parameter \(i\). This produces a collection of bicliques
$\Gamma=\{A_t\times B_t\}$ in \(O(n^{4/3}/(\log^{(i)}n)^2)\) time.
Let
\(N(\Gamma) = \sum_{A_t\times B_t\in\Gamma} |A_t||B_t|\), i.e., the number of distances recorded in $\Gamma$. 
Since
\(\sum_t(|A_t|+|B_t|)
=
O(n^{4/3}/(\log^{(i)}n)^2)\)
by Theorem~\ref{thm:distance-representation}, the value
\(N(\Gamma)\) can be computed within the same asymptotic running time.

In the remainder of the section, let \(c_1\) denote the constant from
Theorem~\ref{thm:distance-representation}.

\begin{lemma}\label{lem:bounded}
If $N(\Gamma) \le c_1\,n^{4/3}(\log^{(i)}n)^4$, 
then $|S\cap I_{j-1}| = O(n^{4/3}(\log^{(i)}n)^4)$.
\end{lemma}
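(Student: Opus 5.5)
The proof is a case analysis on whether the hypothesis of Theorem~\ref{thm:distance-representation} holds for the interval \(I_{j-1}\). We argue by contradiction. Suppose \(|S\cap I_{j-1}|\) is large, specifically
\[
|S\cap I_{j-1}|\ge \frac{c_1}{0.4}\, n^{4/3}(\log^{(i)}n)^4 .
\]
Since \(c_1/0.4>c_1\), the hypothesis \(|S\cap I_{j-1}|\ge c_1 n^{4/3}(\log^{(i)}n)^4\) of Theorem~\ref{thm:distance-representation} is met. The collection \(\Gamma\) we computed therefore satisfies the second guarantee of that theorem,
\[
N(\Gamma)\ge 0.4\,|S\cap I_{j-1}| .
\]

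Combining the two displays gives
\[
N(\Gamma)\ge 0.4\cdot \frac{c_1}{0.4}\, n^{4/3}(\log^{(i)}n)^4 = c_1 n^{4/3}(\log^{(i)}n)^4 .
\]
This is only a weak inequality, while the lemma's assumption is \(N(\Gamma)\le c_1 n^{4/3}(\log^{(i)}n)^4\), so equality is not yet excluded. To get a strict contradiction we strengthen the case split slightly: assume instead \(|S\cap I_{j-1}|> (c_1/0.4)\,n^{4/3}(\log^{(i)}n)^4\). Then \(N(\Gamma)> c_1 n^{4/3}(\log^{(i)}n)^4\), contradicting the assumption.

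It follows that \(|S\cap I_{j-1}|\le (c_1/0.4)\,n^{4/3}(\log^{(i)}n)^4 = 2.5\,c_1\, n^{4/3}(\log^{(i)}n)^4\). Because \(c_1\) is a constant, this is \(O(n^{4/3}(\log^{(i)}n)^4)\), as the lemma claims.

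The only point that needs care is whether Theorem~\ref{thm:distance-representation} may be applied at all. It was invoked on \(I_{j-1}\) without first checking its hypothesis, so its fraction guarantee cannot be trusted unconditionally. The argument handles this by relying on the guarantee only in the case where \(|S\cap I_{j-1}|\) exceeds the threshold, which already implies the hypothesis. In the other case the desired bound holds directly, and it does not matter what the algorithm output.
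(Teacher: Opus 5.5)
Your proof is correct and is essentially the paper's argument. The paper splits on whether \(|S\cap I_{j-1}|\ge c_1 n^{4/3}(\log^{(i)}n)^4\), and in that case uses \(N(\Gamma)\ge 0.4\,|S\cap I_{j-1}|\) to get \(|S\cap I_{j-1}|\le N(\Gamma)/0.4\le (c_1/0.4)\,n^{4/3}(\log^{(i)}n)^4\) directly, so your contradiction framing and the strict-versus-weak inequality adjustment are unnecessary but harmless, and you reach the same constant \(2.5c_1\).
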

\begin{proof}
If
\(|S\cap I_{j-1}|<
c_1n^{4/3}(\log^{(i)}n)^4\),
the conclusion is immediate.
Otherwise, Theorem~\ref{thm:distance-representation} implies that
$N(\Gamma) \ge 0.4\,|S\cap I_{j-1}|$.
Therefore,
\[
|S\cap I_{j-1}|
\le
\frac{1}{0.4}N(\Gamma)
\le
\frac{c_1}{0.4}
n^{4/3}(\log^{(i)}n)^4,
\]
which proves the lemma.
\end{proof}

If
\(N(\Gamma)\le c_1n^{4/3}(\log^{(i)}n)^4\),
Lemma~\ref{lem:bounded} allows us to terminate the current phase and return
\(I_{j-1}\) as $I'$.

Assume henceforth that
$N(\Gamma)
>
c_1n^{4/3}(\log^{(i)}n)^4$.
Since $|S\cap I_{j-1}|\geq N(\Gamma)$, we have $|S\cap I_{j-1}| > c_1n^{4/3}(\log^{(i)}n)^4$. Consequently, by Theorem~\ref{thm:distance-representation}, 
$N(\Gamma)
\ge
0.4\,|S\cap I_{j-1}|$.

We next construct expander graphs for the bicliques of \(\Gamma\) and
use them to compute a set \(E\) of weighted distances. This step is
essentially the same as in \cite{ref:WangIm25}, which in turn follows the method in \cite{ref:KatzAn97}. For
completeness, we briefly sketch the construction and refer the reader
to \cite[Section~3]{ref:WangIm25} or \cite{ref:KatzAn97} for details.

\paragraph{Computing a weighted distance set $\boldsymbol{E}$.}
For convenience, assume that \(|A_t|\ge |B_t|\) for every biclique \(A_t\times B_t\in\Gamma\) (otherwise we could switch the names of $A_t$ and $B_t$). Let
\(m_t=|A_t|\) and \(n_t=|B_t|\).

Partition each \(A_t\) into
\(g=\lfloor m_t/n_t\rfloor\) subsets
\(A_{t1},A_{t2},\ldots,A_{tg}\), each containing between
\(n_t\) and \(2n_t\) points. Each pair
\((A_{tf},B_t)\), \(1\le f\le g\), defines a biclique
\(A_{tf}\times B_t\).
For every such biclique, we construct a \(d\)-regular
LPS-expander graph \(G_{tf}\) on
\(A_{tf}\cup B_t\)~\cite{ref:KatzAn97,ref:LubotzkyEx86},
where \(d\) is a sufficiently large constant to be fixed later.
The graph \(G_{tf}\) has \(O(|A_{tf}|+|B_t|)\) edges and can be
constructed within the same time bound~\cite{ref:KatzAn97,ref:LubotzkyEx86}.
Let \(G_t\) be the union of all graphs \(G_{tf}\).
Computing all graphs \(G_t\) takes
\(O(\sum_t(|A_t|+|B_t|))
=
O(n^{4/3}/(\log^{(i)}n)^2)\) time.
The total number of edges in all graphs is bounded by the same
bound.

For every edge \((a,b)\) of \(G_t\), where
\(a\in A_t\) and \(b\in B_t\), we associate the distance
\(\|ab\|\). The resulting collection of distances is denoted by \(E\).
The size of \(E\) is therefore
\(O(n^{4/3}/(\log^{(i)}n)^2)\). Moreover, every distance of \(E\)
lies in the interval \(I_{j-1}\).

We next assign a weight to every value of \(E\).
The weight assignment is chosen so that the total weight of all
distances of $E$ is proportional to \(N(\Gamma)\).
Since every graph \(G_{tf}\) is \(d\)-regular, it has at most
\((|A_{tf}|+|B_t|)d/2\) edges.
We assign each distance contributed by \(G_{tf}\) the weight
\[
\frac{|A_{tf}||B_t|}
     {|A_{tf}|+|B_t|}.
\]

The total weight of all values in \(E\) is therefore at most
   \[
        \sum_{t, f} (|A_{tf}| + |B_t|) \cdot \frac{d}{2} \cdot \frac{|A_{tf}| |B_t|}{|A_{tf}| + |B_t|} = \frac{d}{2}\cdot  \sum_{t, f} |A_{tf}| |B_t| =\frac{d}{2}\cdot N(\Gamma).
    \]

Recall that 
\(N(\Gamma)>
c_1n^{4/3}(\log^{(i)}n)^4\).
We may assume \(N(\Gamma)\ge 16\).
Furthermore, every value of \(E\) has weight at most
\(|A_{tf}||B_t|/(|A_{tf}|+|B_t|)\le
\frac12\sqrt{|A_{tf}||B_t|}
\le \frac12\sqrt{N(\Gamma)}
\le N(\Gamma)/8\).

Using these weights, we choose values
\(\delta_1<\delta_2<\cdots<\delta_h\) from \(E\),
where \(1\le h\le 2d\), so that they partition
\(I_{j-1}\) into \(h+1\) intervals and the total weight of the values
of \(E\) in each interval lies between
\(N(\Gamma)/4\) and \(N(\Gamma)/2\).
The partition can be computed in \(O(|E|)\) time using the linear-time
selection algorithm~\cite{ref:BlumTi73}, and therefore requires
\(O(n^{4/3}/(\log^{(i)}n)^2)\) time.

\paragraph{Computing \texorpdfstring{$I_j$}{Ij} by testing the values of \texorpdfstring{$E$}{E}.}

We now use the values of \(E\) to compute the next interval \(I_j\).

Assume that \(I_{j-1}=(\alpha,\beta]\). Let
\(\delta_0=\alpha\) and \(\delta_{h+1}=\beta\). For
\(1\le u\le h+1\), define
\(I'_u=(\delta_{u-1},\delta_u]\). Let
\(\mathcal I=\{I'_u\mid 1\le u\le h+1\}\). Clearly, the intervals of
\(\mathcal I\) form a partition of \(I_{j-1}\).

Ideally, we would like to identify the interval of \(\mathcal I\)
containing \(\delta^*\) and return it as \(I_j\).
Previous deterministic algorithms accomplish this by repeatedly
invoking an exact decision procedure, which computes, for a given value
\(\delta\), the number \(K_\delta\) of distances in \(S\) that are at
most \(\delta\). Since \(\delta<\delta^*\) if and only if
\(K_\delta<K\), such a procedure can be used to locate the interval
containing \(\delta^*\).

The exact decision algorithm requires \(O(n^{4/3})\) time~\cite{ref:WangUn23,ref:ChanHo23}. Consequently, using it in every
iteration leads to the \(O(n^{4/3}\log n)\)-time algorithm of
Wang and Zhao~\cite{ref:WangIm25}. To achieve an
\(O(n^{4/3})\)-time algorithm, we instead use the quasi-decision
algorithm of Theorem~\ref{thm:quasidecision}.

For every \(u\), \(1\le u\le h\), we apply
Theorem~\ref{thm:quasidecision} to the interval
\(I''_u=[0,\delta_u]\). Note that $\kappa(I''_u)=K_{\delta_u}$.

Let \(u_1\) be the largest index such that the algorithm of Theorem~\ref{thm:quasidecision} reports
\(\kappa(I''_{u_1})<K\) (implying that $\delta_{u_1}<\delta^*$). If no such index exists, we set \(u_1=0\).
Since \(\delta_0=\alpha\) and
\(\delta^*\in I_{j-1}=(\alpha,\beta]\), we always have
\(\delta_{u_1}<\delta^*\).

Similarly, let \(u_2\) be the smallest index such that the algorithm
reports \(\kappa(I''_{u_2})>K\) (implying that $\delta^*\leq \delta_{u_2}$). If no such index exists, we set
\(u_2=h+1\). In either case, \(\delta^*\le \delta_{u_2}\).

Therefore, $\delta^* \in (\delta_{u_1},\delta_{u_2}]$. 

We terminate the current iteration and return $I_j=(\delta_{u_1},\delta_{u_2}]$.

Since \(h\le 2d\) and \(d=O(1)\), only \(O(1)\) calls to
Theorem~\ref{thm:quasidecision} are required. Each call takes
\(O(n^{4/3}/(\log^{(i)}n)^2)\) time. Therefore, the total running time
of the iteration remains
\(O(n^{4/3}/(\log^{(i)}n)^2)\).

The correctness of the algorithm is established by the following
lemma, whose proof is given in the next subsection.

\begin{lemma}
\label{lem:prune}
There exists a constant \(\rho\in(0,1)\) such that
$|S\cap I_j|
\le
\rho\,|S\cap I_{j-1}|$.
\end{lemma}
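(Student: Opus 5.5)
The plan is to bound $|S\cap I_j|$ by splitting $I_j=(\delta_{u_1},\delta_{u_2}]$ into at most two elementary intervals of $\mathcal I$ plus a middle part. The elementary intervals are controlled by the expander property of the graphs $G_{tf}$. The middle part is controlled by the error guarantee of Theorem~\ref{thm:quasidecision}.

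Let $X=n^{4/3}(\log^{(i)}n)^4$. If $u_2=u_1+1$, then $I_j$ is a single interval of $\mathcal I$. Otherwise every index $u$ with $u_1<u<u_2$ received the answer ``unsure'', because $u_1$ is the largest index reported $<K$ and $u_2$ the smallest reported $>K$. (I would first check that no index above $u_1$ is reported $<K$ and none below $u_2$ reported $>K$; this follows from monotonicity of $K_\delta$ together with the fact that reported answers are always true statements about $K_{\delta_u}$ versus $K$.) Put $a=u_1+1$ and $b=u_2-1$. Both $\delta_a$ and $\delta_b$ are unsure, so $|K_{\delta_a}-K|\le c_0X$ and $|K_{\delta_b}-K|\le c_0X$. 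Hence
\[
|S\cap(\delta_a,\delta_b]|=K_{\delta_b}-K_{\delta_a}\le 2c_0X .
\]
Therefore $I_j$ is contained in the union of $J=I'_{a}\cup I'_{u_2}$ (two elementary intervals) and a set containing at most $2c_0X$ distances. Since $|S\cap I_{j-1}|\ge N(\Gamma)>c_1X$, this middle contribution is at most $(2c_0/c_1)|S\cap I_{j-1}|$, a small constant fraction once $c_1$ is large relative to $c_0$.

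The main step is to show that $J$ misses a constant fraction of the pairs recorded in $\Gamma$. I would use the Katz--Sharir expander lemma in the following form. For a $d$-regular LPS expander on a biclique $A_{tf}\times B_t$ and an \emph{arbitrary} set $J$ of values, suppose at most a $\theta$-fraction of the expander edges have labels outside $J$. Then at most an $O(\theta+1/\sqrt d)$-fraction of the biclique pairs have distances outside $J$. Equivalently, a biclique with many complete pairs outside $J$ has proportionally many weighted $E$-values outside $J$. The argument never uses that $J$ is an interval, so it applies to the union of two elementary intervals.

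Summing over all $(t,f)$ with the chosen weights works as follows. The total $E$-weight is about $\tfrac d2 N(\Gamma)$, and the weight lying in $J$ is at most twice the per-interval maximum. So if the per-interval weight is a sufficiently small fraction of the total, the $E$-weight outside $J$ is at least a constant fraction of the total. The expander lemma then converts this into at least $\varepsilon N(\Gamma)$ pairs of $\Gamma$ with distance outside $J$, for some constant $\varepsilon>0$, once $d$ is large. As edge-disjoint pairs of $\Gamma$ all lie in $I_{j-1}$, we get $|S\cap J|\le |S\cap I_{j-1}|-\varepsilon N(\Gamma)\le(1-0.4\varepsilon)|S\cap I_{j-1}|$. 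Adding the middle part gives $\rho=1-0.4\varepsilon+2c_0/c_1<1$ for $c_1$ large enough.

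I expect the weight bookkeeping to be the main obstacle. With the stated window, each interval carries $E$-weight between $N(\Gamma)/4$ and $N(\Gamma)/2$ in units where the total is only about $\tfrac d2 N(\Gamma)$, so the two elementary pieces of $J$ hold only an $O(1/d)$-fraction of the total weight. I would verify this normalization carefully. If it were off, I would refine the partition: each single value has weight at most $N(\Gamma)/8$, so one can use slabs of weight between $N(\Gamma)/8$ and $N(\Gamma)/4$ with $h=O(d)$. The number of quasi-decision calls stays $O(1)$, and the complement of $J$ is guaranteed a constant share of the weight.
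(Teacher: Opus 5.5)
Your overall decomposition is the paper's. The middle piece $(\delta_{u_1+1},\delta_{u_2-1}]$ is bounded by $2c_0X\le(2c_0/c_1)N(\Gamma)$ via the two ``unsure'' answers, and what remains of $I_j$ is the two boundary basic intervals. Your final accounting $|S\cap J|\le|S\cap I_{j-1}|-\varepsilon N(\Gamma)$ is also sound, and slightly more economical than the paper's, which needs $\rho'<0.4$ to beat the $0.6\,n_{j-1}$ unrecorded pairs.

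The gap is the expander step, which carries essentially all the content of the lemma. Your transfer lemma is false for an arbitrary value set $J$, so the remark that the argument ``never uses that $J$ is an interval'' is exactly backwards. With distinct distances, let $J$ be the set of values on the edges of $G_{tf}$. Then no expander edge lies outside $J$, yet every non-edge does, and the non-edges are all but an $O(d/M)$ fraction of $A_{tf}\times B_t$, where $M=|A_{tf}|+|B_t|$. The Katz--Sharir bound controls only \emph{product} sets: if fewer than $3M$ edges join $X\subseteq A_{tf}$ to $Y\subseteq B_t$, then $|X||Y|\le 9M^2/d$. The missing idea is the geometric step that reduces ``pairs with distance in $I''=(\delta',\delta'']$'' to product sets. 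Take a $(1/r)$-cutting, with $r$ constant, of the annuli centred at $A_{tf}$ with radii $\delta',\delta''$. In each of the $O(r^2)$ cells $\sigma$, every pair of $A_{tf}(\sigma)\times B_t(\sigma)$ has distance in $I''$, so the product bound applies cell by cell. The pairs whose annulus boundary crosses $\sigma$ form only an $O(1/r)$ fraction. Using $|B_t|\le|A_{tf}|\le 2|B_t|$, this gives at most $O(1/r+r^2/d)\,|A_{tf}||B_t|$ pairs in $I''$ whenever $G_{tf}$ has fewer than $3M$ edges there. The same works for a union of $O(1)$ intervals, but only for this reason.

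Second, your lemma points the wrong way. It says that few edges outside $J$ force few pairs outside $J$. You then use the inverse: many weighted edges outside $J$ force many pairs outside $J$. What you need is the geometric statement for each boundary interval $I''$ itself: few edges inside forces few pairs inside. Because your weight information is aggregate, you also need a per-biclique split, as in the proof of Lemma~\ref{lem:prune20}:
\begin{itemize}
\item Graphs with fewer than $3M$ edges in $I''$ are handled by the cutting argument. Taking $r=d^{1/3}$ and $d$ large gives at most $0.01|A_{tf}||B_t|$ pairs in $I''$.
\item Graphs with at least $3M$ such edges each contribute weight at least $3|A_{tf}||B_t|$ to $E\cap I''$. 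That weight is at most $N(\Gamma)/2$, so their total $\sum|A_{tf}||B_t|$ is at most $N(\Gamma)/6$.
\end{itemize}
Summing over the two boundary intervals, at most $2(0.01+1/6)N(\Gamma)$ pairs of $\Gamma$ lie in $J$, which yields your $\varepsilon$.

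This route uses only the per-interval cap $N(\Gamma)/2$. It needs neither a lower bound on the total weight nor a refinement of the partition. Your ``constant fraction of the total lies outside $J$'' step would require such a lower bound, and the paper never proves one. So your normalization concern disappears.
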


\subsection{Algorithm correctness: Proving Lemma~\ref{lem:prune}}
\label{subsec:algocor}


Let \(n_j=|S\cap I_j|\) and \(n_{j-1}=|S\cap I_{j-1}|\). Our goal is
to show that \(n_j<\rho\cdot n_{j-1}\) for some constant
\(\rho\in(0,1)\).

For notational convenience, let \(m_{j-1}=N(\Gamma)\), i.e.,
\(m_{j-1}\) is the number of distances of \(S\cap I_{j-1}\) recorded
in \(\Gamma\). Define \(m'_{j-1}\) as the number of distances of
\(S\cap I_{j-1}\) not recorded in \(\Gamma\). Then
\(n_{j-1}=m_{j-1}+m'_{j-1}\).

By Theorem~\ref{thm:distance-representation},
\(m_{j-1}\ge 0.4\,n_{j-1}\), and therefore
\(m'_{j-1}\le 0.6\,n_{j-1}\).

Let \(S_j\) denote the set of distances of \(S\cap I_j\) not recorded
in \(\Gamma\), and \(S_{j-1}\) the set of distances of \(S\cap I_{j-1}\) not recorded
in \(\Gamma\). Since
\(I_j\subseteq I_{j-1}\), we have \(S_j\subseteq S_{j-1}\), and hence
\(|S_j|\le |S_{j-1}|=m'_{j-1}\).

Define \(q_j\) as the number of distances of \(S\cap I_j\) recorded in
\(\Gamma\). Then
\[
n_j=q_j+|S_j|
\le q_j+m'_{j-1}
\le q_j+0.6\,n_{j-1}.
\]

Since \(m_{j-1}\le n_{j-1}\), Lemma~\ref{lem:prune} follows
immediately from the following lemma.

\begin{lemma}
\label{lem:prune10}
There exists a constant \(\rho'\in(0,0.4)\) such that
\(q_j\le \rho' \cdot m_{j-1}\).
\end{lemma}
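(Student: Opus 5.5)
The plan is to use the expander-based argument of Katz and Sharir~\cite{ref:KatzAn97}, in the form used in \cite{ref:WangIm25}, combined with the structure of the quasi-decision outcomes. Let $I_j=(\delta_{u_1},\delta_{u_2}]$. First I will show that $I_j$ consists of at most a constant number of consecutive intervals of $\mathcal I$, and that the $E$-weight inside $I_j$ is at most a constant fraction of the total weight $W\le \frac d2 N(\Gamma)$. By the choice of $u_1,u_2$, every index $u$ with $u_1<u<u_2$ got the answer ``unsure''. Therefore
\[
|K_{\delta_u}-K|\le c_0n^{4/3}(\log^{(i)}n)^4 .
\]
For two such indices $u<u'$ this gives $|S\cap(\delta_u,\delta_{u'}]|\le 2c_0n^{4/3}(\log^{(i)}n)^4$. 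Since $N(\Gamma)>c_1n^{4/3}(\log^{(i)}n)^4$ with $c_1$ large compared to $c_0$, the number of recorded distances in $(\delta_u,\delta_{u'}]$ is at most $(2c_0/c_1)N(\Gamma)$, which is tiny.

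The key step is the expander property. For an interval $J$ with $E$-weight $w(J)$, let $q(J)$ be the number of $\Gamma$-recorded pairs with distance in $J$. I want a two-sided relationship: if $J$ contains $E$-weight at least $N(\Gamma)/4$, then $J$ contains many recorded pairs, say at least $\epsilon N(\Gamma)$. Conversely, an interval with few recorded pairs, or whose complement has few recorded pairs, has controlled weight. This is the standard LPS-expander lemma (Lemma~3.2-type in \cite{ref:KatzAn97}). For each $G_{tf}$, take the subsets $A''\subseteq A_{tf}$ and $B''\subseteq B_t$ that contain few edges of a given distance range. The expander mixing lemma then says that if only $\epsilon|A_{tf}||B_t|$ pairs fall outside $J$, the number of expander edges outside $J$ is at most $O(\sqrt{\epsilon}+1/\sqrt d)\cdot d(|A_{tf}|+|B_t|)$. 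After weighting and summing over $t,f$, I expect the following: if $q(J)\ge(1-\epsilon)N(\Gamma)$, then the weight of $E$ outside $J$ is at most $(C\sqrt\epsilon + C/\sqrt d)W$. I take this consequence from \cite{ref:KatzAn97,ref:WangIm25} and assume it rather than reproving it.

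Now apply it with $J=I_j$, arguing by contradiction. Suppose $q_j>\rho' N(\Gamma)$ with, e.g., $\rho'=0.35$. The interior unsure points $\delta_{u}$, $u_1<u<u_2$, bound the recorded mass between consecutive ones by $(2c_0/c_1)N(\Gamma)$. Hence almost all of $q_j$ lies in the two end intervals $(\delta_{u_1},\delta_{u_1+1}]$ and $(\delta_{u_2-1},\delta_{u_2}]$ (or in a single interval if $u_2=u_1+1$). Each interval of $\mathcal I$ has weight at most $N(\Gamma)/2$. The main obstacle is turning ``weight at most $N(\Gamma)/2\le W/d$-ish per interval'' into ``recorded count at most a constant fraction'', which is the reverse direction of the expander lemma. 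I would apply the lemma to the complement: if one interval contained recorded count $\ge(1-\epsilon)N(\Gamma)$, then the weight outside it would be at most $(C\sqrt\epsilon+C/\sqrt d)W$. But the weight outside it is at least $W-N(\Gamma)/2\ge W(1-1/d)$, so for $d$ a large constant this is a contradiction.

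That contradiction only yields a count below $(1-\epsilon)N(\Gamma)$ per interval, and two end intervals could still sum to more than $0.4N(\Gamma)$. To get a bound below $0.4$, I would tighten the argument by choosing $d$ large and using the finer partition (each interval has weight at most $N(\Gamma)/2$, while $W$ is about $\frac d2 N(\Gamma)$, so each interval carries at most a $1/d$ fraction of the weight). Applying the mixing bound to a single interval $J$ should then give $q(J)\le O(1/\sqrt d)\cdot N(\Gamma)$ plus lower-order terms. Hence $q_j\le (2\cdot O(1/\sqrt d)+2c_0h/c_1)N(\Gamma)$, which is below $0.4N(\Gamma)$ once $d$ and then $c_1$ are fixed large. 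This final calibration of constants is where I expect the real difficulty.
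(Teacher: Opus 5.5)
Your treatment of the middle part of $I_j$ is correct and matches the paper. Every index strictly between $u_1$ and $u_2$ returns ``unsure'', so $\hat I=(\delta_{u_1+1},\delta_{u_2-1}]$ contains at most $2\Delta\le(2c_0/c_1)m_{j-1}$ distances. (The factor $h$ in your final bound is unnecessary, since the endpoints $u_1+1$ and $u_2-1$ already control all of $\hat I$.)

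The gap is in the step that carries the whole lemma: showing that a single end interval $I''$ of $\mathcal I$ contains only a small constant fraction of the $\Gamma$-recorded distances. You do not prove this. You postulate an interval version of the mixing lemma and correctly notice that the complement/contradiction use of it only gives $(1-\epsilon)m_{j-1}$. That argument also needs $W\ge\frac d2N(\Gamma)$, whereas only $W\le\frac d2N(\Gamma)$ is available, since edges inside $A_{tf}$ or inside $B_t$ give no distance. You then assert that applying it to one interval ``should'' give $O(1/\sqrt d)\,N(\Gamma)$. But the mixing lemma and Katz--Sharir's Corollary~2.5 only speak about edges between two vertex sets $X\times Y$. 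The set of pairs of $A_{tf}\times B_t$ with distance in $I''$ is not a product set, so nothing in your proposal links the weight of $E\cap I''$ to the number of such pairs.

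Two ingredients close the gap; this is the paper's Lemma~\ref{lem:prune20}.

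\emph{Geometric decomposition.} For each $G_{tf}$, take a $(1/r)$-cutting, with $r$ a constant, for the annuli centered at $A_{tf}$ with radii the endpoints of $I''$. The pairs with distance in $I''$ are then covered by two groups:
\begin{itemize}
\item the $O(r^2)$ products $A_{tf}(\sigma)\times B_t(\sigma)$, where $A_{tf}(\sigma)$ are the points whose annuli fully contain $\sigma$;
\item at most $|A_{tf}||B_t|/r$ pairs involving crossing annuli.
\end{itemize}
Only these products can be fed to the expander property.

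\emph{Light/heavy split.} The bound $N(\Gamma)/2$ on the weight of $E\cap I''$ is global, not per graph, so you must split the graphs.
\begin{itemize}
\item If $G_{tf}$ has fewer than $3(|A_{tf}|+|B_t|)$ edges in $E\cap I''$, Corollary~2.5 gives $|A_{tf}(\sigma)||B_t(\sigma)|\le 9(|A_{tf}|+|B_t|)^2/d$ for every cell. Hence $G_{tf}$ has at most $O\bigl((1/r+r^2/d)|A_{tf}||B_t|\bigr)\le 0.01|A_{tf}||B_t|$ pairs in $I''$, taking $r=d^{1/3}$ and $d$ large.
\item Otherwise $G_{tf}$ contributes weight at least $3|A_{tf}||B_t|$ to $E\cap I''$, so all such graphs together have $\sum|A_{tf}||B_t|\le m_{j-1}/6$.
\end{itemize}

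Thus $|S(\Gamma)\cap I''|\le(0.01+1/6)m_{j-1}<0.18\,m_{j-1}$, and $q_j\le(2c_0/c_1+0.36)m_{j-1}\le0.37\,m_{j-1}$. A per-interval constant below $0.2$ is all you need, so your sharper $O(1/\sqrt d)$ target is unnecessary. The indispensable pieces are the cutting step and the light/heavy split, and neither appears in your argument.
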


\paragraph{Proof of Lemma~\ref{lem:prune10}.}

We refer to the intervals of \(\mathcal I\) as \emph{basic intervals}.
Recall that \(\delta^*\in I_j\), which implies
\(\delta_{u_1}<\delta_{u_2}\) and therefore \(u_1<u_2\).

If \(u_1=u_2-1\), then \(I_j\) consists of the single basic interval
\(I'_{u_1+1}\). If \(u_1=u_2-2\), then \(I_j\) consists of the two
basic intervals \(I'_{u_1+1}\) and \(I'_{u_2}\). Otherwise,
\(u_1<u_2-2\), and \(I_j\) can be partitioned into three intervals:
\(I'_{u_1+1}\), \((\delta_{u_1+1},\delta_{u_2-1}]\), and \(I'_{u_2}\).
In the following, we only analyze the third case (i.e., $u_1<u_2-2$) since it is the most general case and the other two cases can be analyzed similarly but in a much easier way. 

Let \(\hat I=(\delta_{u_1+1},\delta_{u_2-1}]\), and define
\(\Delta=c_0\cdot n^{4/3}(\log^{(i)}n)^4\), which is the term in Theorem~\ref{thm:quasidecision}(3).

\begin{lemma}[Middle interval bound]
\label{lem:middleinterval}
\( |S\cap \hat I|\le 2\Delta. \)
\end{lemma}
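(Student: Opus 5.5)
The plan is to exploit the choice of \(u_1\) and \(u_2\). Both \(\delta_{u_1+1}\) and \(\delta_{u_2-1}\) are strictly between \(\delta_{u_1}\) and \(\delta_{u_2}\), so every index \(u\) with \(u_1<u\le u_2-1\) is one for which Theorem~\ref{thm:quasidecision}, applied to \(I''_u=[0,\delta_u]\), returned neither outcome. To see this, note that \(u_1\) is the \emph{largest} index reporting \(\kappa(I''_u)<K\), so no \(u>u_1\) reports ``\(<K\)''. Likewise \(u_2\) is the \emph{smallest} index reporting \(\kappa(I''_u)>K\), so no \(u<u_2\) reports ``\(>K\)''. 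Hence every \(u\) with \(u_1<u<u_2\) produced ``unsure''. In particular this holds for \(u=u_1+1\) and for \(u=u_2-1\); both satisfy \(1\le u\le h\), since \(u_1\ge 0\), \(u_2\le h+1\), and \(u_1<u_2-2\).

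By Theorem~\ref{thm:quasidecision}(3), the two ``unsure'' answers give
\[
|K_{\delta_{u_1+1}}-K|\le\Delta \quad\text{and}\quad |K_{\delta_{u_2-1}}-K|\le\Delta,
\]
where \(K_\delta=\kappa([0,\delta])\) is the number of distances of \(S\) at most \(\delta\). The distances of \(S\) in \(\hat I=(\delta_{u_1+1},\delta_{u_2-1}]\) are exactly those at most \(\delta_{u_2-1}\) but not at most \(\delta_{u_1+1}\). Since \(\delta_{u_1+1}<\delta_{u_2-1}\), this gives
\[
|S\cap\hat I| = K_{\delta_{u_2-1}}-K_{\delta_{u_1+1}} \le (K+\Delta)-(K-\Delta)=2\Delta,
\]
by the triangle inequality.

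There is essentially no obstacle. The only care needed is bookkeeping: confirm the index range so that both endpoints were actually queried, and confirm that the half-open convention \((\alpha,\beta]\) makes the difference of prefix counts equal \(|S\cap\hat I|\) exactly. Multiplicities in \(S\) cause no problem, because everything is counted over point pairs.
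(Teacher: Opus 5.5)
Your proof is correct and follows the paper's argument. Both \(u_1+1\) and \(u_2-1\) lie strictly between \(u_1\) and \(u_2\), so the quasi-decision procedure answered ``unsure'' at both indices, and the difference of the two prefix counts \(K_{\delta_{u_2-1}}-K_{\delta_{u_1+1}}\) is therefore at most \(2\Delta\). You explicitly use the maximality of \(u_1\) and the minimality of \(u_2\) to rule out each definite answer, and you check the index range; the paper compresses this same bookkeeping into the chain \(u_1<u_1+1<u_2-1<u_2\).
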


\begin{proof}
Since \(u_1<u_1+1<u_2-1<u_2\), the definition of \(u_1\) implies that
the quasi-decision algorithm of Theorem~\ref{thm:quasidecision} reports ``unsure'' on
\(I''_{u_1+1}=[0,\delta_{u_1+1}]\). Therefore,
\(|\kappa(I''_{u_1+1})-K|\le \Delta\).

Similarly, the algorithm reports ``unsure'' on
\(I''_{u_2-1}=[0,\delta_{u_2-1}]\), and thus
\(|\kappa(I''_{u_2-1})-K|\le \Delta\).

Consequently,
\[
|S\cap \hat I|
=
\kappa(I''_{u_2-1})
-
\kappa(I''_{u_1+1})
\le
2\Delta.
\]

The lemma thus follows. 
\end{proof}

Let \(I^*\) be the union of the two boundary basic intervals
\(I'_{u_1+1}\) and \(I'_{u_2}\). Let \(S(\Gamma)\) denote the set of
distances recorded in \(\Gamma\). Since \(\hat I\) and \(I^*\) form a
partition of \(I_j\),
\begin{equation}
\label{equ:prune00}
q_j
=
|S(\Gamma)\cap \hat I|
+
|S(\Gamma)\cap I'|.
\end{equation}

The first term is controlled by Lemma~\ref{lem:middleinterval}, while
the second term will be bounded using the expander analysis.

By Lemma~\ref{lem:middleinterval},
\[
|S(\Gamma)\cap \hat I|
\le
|S\cap \hat I|
\le
2\Delta= 2c_0\cdot n^{4/3}(\log^{(i)}n)^4.
\]

Since \(m_{j-1}>c_1 n^{4/3}(\log^{(i)}n)^4\), we obtain

\begin{equation}
\label{equ:prune10}
|S(\Gamma)\cap \hat I|
<
\frac{2c_0}{c_1}\,m_{j-1}.
\end{equation}

To bound the second term of~\eqref{equ:prune00}, we use the following
lemma.

\begin{lemma}
\label{lem:prune20}
There exists a constant \(\rho''\in(0,0.18)\) such that
\(|S(\Gamma)\cap I''|<\rho''\,m_{j-1}\) for every basic interval
\(I''\) of \(I^*\).
\end{lemma}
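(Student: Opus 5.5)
The plan is to use the standard Katz--Sharir expander argument, applied separately to each expander graph \(G_{tf}\). Fix a basic interval \(I''=I'_u\) of \(I^*\). By construction of the cut points \(\delta_1,\ldots,\delta_h\), the total weight of the values of \(E\) lying in \(I''\) is at most \(N(\Gamma)/2=m_{j-1}/2\). Equivalently, the total weight of the values of \(E\) outside \(I''\) is at least \((d/2)m_{j-1}-m_{j-1}/2\), up to the exact edge count. The goal is to show that many recorded pairs outside \(I''\) force few recorded pairs inside \(I''\), with a constant strictly below \(0.18\).

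For a single biclique \(A_{tf}\times B_t\), I will use the key property of LPS expanders from \cite{ref:KatzAn97}. Order the points of \(B_t\) by distance from a point \(a\); the pairs of \(A_{tf}\times B_t\) whose distances fall in \(I''\) form, for each \(a\), an ``interval'' of \(B_t\) in that order. Katz and Sharir show that if a \(\lambda\)-fraction of the pairs of the biclique has distance in \(I''\), then at least roughly a \(\lambda-O(1/\sqrt d)\) fraction of the expander edges have distance in \(I''\). Here \(\lambda\) is of the form \(|S(\Gamma)\cap I''\cap(A_{tf}\times B_t)|/(|A_{tf}||B_t|)\), and the bound comes from the expander mixing lemma applied to the pairs below and above \(I''\). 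I will quote this as their lemma, restated for our weights:
\[
w_{tf}(I'')\ \ge\ \frac{d}{2}\,|S(\Gamma)\cap I''\cap (A_{tf}\times B_t)| - O(\sqrt d)\,|A_{tf}||B_t|,
\]
where \(w_{tf}(I'')\) is the total weight of edges of \(G_{tf}\) with distance in \(I''\). The per-edge weight \(|A_{tf}||B_t|/(|A_{tf}|+|B_t|)\) is exactly the normalization that turns the fraction of edges into a count of pairs.

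Summing over all \(t,f\), and using that the bicliques are edge-disjoint with \(\sum_{t,f}|A_{tf}||B_t|=m_{j-1}\), gives
\[
\frac{d}{2}\,|S(\Gamma)\cap I''| \le \frac{m_{j-1}}{2} + O(\sqrt d)\,m_{j-1}.
\]
Hence \(|S(\Gamma)\cap I''|\le (1/d+O(1/\sqrt d))\,m_{j-1}\). Choosing the constant \(d\) sufficiently large makes this at most \(\rho'' m_{j-1}\) for some \(\rho''<0.18\). This is where the freedom to ``fix \(d\) later'' is used; \(h\le 2d\) remains constant. The main obstacle is stating the expander lemma correctly. The set of pairs with distance in \(I''\) is not a product set, so the mixing lemma must be applied, as in \cite{ref:KatzAn97}, to the sets of pairs with distance \(\le\delta_{u-1}\) versus \(>\delta_u\) via the ordering structure. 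I would import that lemma verbatim from \cite[Section~3]{ref:WangIm25}, after checking that the additive error there is \(O(1/\sqrt d)\) times the biclique size with constants independent of \(n\).
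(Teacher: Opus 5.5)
Your summation skeleton is sound and differs genuinely from the paper's route. The paper splits the graphs $G_{tf}$ into $\mathcal G_1$ (fewer than $3(|A_{tf}|+|B_t|)$ edges in $I''$) and $\mathcal G_2$, and charges $\mathcal G_2$ to the weight bound $m_{j-1}/2$. You instead want a per-biclique proportional inequality, which would even give $\rho''\to 0$ as $d\to\infty$. The gap is that this inequality, the crux of your proof, is not established. The expander fact available in this line of work is \cite[Corollary~2.5]{ref:KatzAn97}, as the paper invokes it. It concerns \emph{product} sets $X\times Y$. The set of pairs of $A_{tf}\times B_t$ at distance in $I''$ is not a product set. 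Neither is the set of pairs at distance at most $\delta_{u-1}$, nor the set at distance greater than $\delta_u$. Ordering $B_t$ by distance from $a$ does not fix this, because that order changes with $a$. So the mixing lemma cannot be applied in the way you describe, and there is no ready-made proportional lemma with $O(1/\sqrt d)$ error to import verbatim.

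The missing idea is the constant-size cutting that the paper uses for $\mathcal G_1$. Take a $(1/r)$-cutting for the annuli $\mathcal D_{tf}$ of $A_{tf}$ with radii $\delta',\delta''$. Then $N_{tf}\le\sum_\sigma|A_{tf}(\sigma)||B_t(\sigma)|+|A_{tf}||B_t|/r$. Every pair of $A_{tf}(\sigma)\times B_t(\sigma)$ lies in $I''$, and these edge sets are disjoint over $\sigma$. Apply $e(X,Y)\ge d|X||Y|/M-2\sqrt d\sqrt{|X||Y|}$, with $M=|A_{tf}|+|B_t|$, to each cell. Then use Cauchy--Schwarz, $\sum_\sigma\sqrt{|A_{tf}(\sigma)||B_t(\sigma)|}=O(r)\sqrt{|A_{tf}||B_t|}$, together with $2/9\le |A_{tf}||B_t|/M^2\le 1/4$. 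This gives
\[ w_{tf}(I'')\ \ge\ \tfrac{2d}{9}\,N_{tf}-O\bigl(\tfrac{d}{r}+r\sqrt d\bigr)|A_{tf}||B_t|. \]
Summing yields $|S(\Gamma)\cap I''|=O(1/d+1/r+r/\sqrt d)\,m_{j-1}$, which is $O(d^{-1/4})\,m_{j-1}$ for $r=d^{1/4}$. So the error is not $O(1/\sqrt d)$, because the cutting costs a factor. Also, the leading coefficient is $d|A_{tf}||B_t|/M^2\le d/4$ rather than $d/2$. Neither matters, since any error tending to $0$ with $d$ suffices. Once repaired, your version yields an arbitrarily small $\rho''$, at the price of the quantitative mixing estimate and a Cauchy--Schwarz step. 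The paper's dichotomy needs only the threshold corollary and a trivial charging, and settles for $\rho''=1/6+0.01$.
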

\begin{proof}
The proof follows the previous work~\cite{ref:WangIm25,ref:KatzAn97}.

Consider a graph \(G_{tf}\) computed from \((A_{tf}, B_t)\) in our algorithm, where \(1 \leq f \leq g\). Some edges of \(G_{tf}\) contribute distances in \(E\). We partition all such graphs \(G_{tf}\) into two sets. Let \(\calG_1\) denote the set of graphs \(G_{tf}\) that contribute fewer than \(3(|A_{tf}| + |B_t|)\) distances in \(E \cap I''\), and let \(\calG_2\) denote the set of the remaining graphs.

\paragraph{The set \(\boldsymbol{\calG_1}\).}
Let \(I''=(\delta',\delta'']\). For a graph \(G_{tf} \in \calG_1\), built on the pair \((A_{tf}, B_t)\), let \(\calD_{tf}\) be the set of annuli centered at the points of \(A_{tf}\) with radii \(\delta'\) and \(\delta''\). Consider a \(1/r\)-cutting \(\Xi\) for \(\calD_{tf}\), where \(r\) is a constant to be chosen later. The cutting partitions the plane into \(O(r^2)\) cells so that, for every cell \(\sigma\), at most \(|\calD_{tf}|/r\) annuli in \(\calD_{tf}\) have a boundary circle intersecting the interior of \(\sigma\).

For each cell \(\sigma \in \Xi\), let \(B_t(\sigma)\) denote the points of \(B_t\) inside \(\sigma\), let \(A_{tf}(\sigma)\) be the points of \(A_{tf}\) whose annuli fully contain \(\sigma\), and let \(A_{tf}'(\sigma)\) be the points of \(A_{tf}\) whose annuli have at least one boundary circle intersecting \(\sigma\). Let \(N_{tf}\) denote the number of interpoint distances between \(A_{tf}\) and \(B_t\) that lie in \(I''\). Then
\[
N_{tf} \leq \sum_{\sigma \in \Xi} |A_{tf}(\sigma)|\,|B_t(\sigma)|
       + \sum_{\sigma \in \Xi} |A_{tf}'(\sigma)|\,|B_t(\sigma)|.
\]
Since \(|A_{tf}'(\sigma)| \leq |A_{tf}|/r\) and \(\sum_{\sigma \in \Xi} |B_t(\sigma)| = |B_t|\), we obtain
\[
\sum_{\sigma \in \Xi} |A_{tf}'(\sigma)|\,|B_t(\sigma)|
= O\!\left(\frac{|A_{tf}|\,|B_t|}{r}\right).
\]

Now consider \(\sum_{\sigma \in \Xi} |A_{tf}(\sigma)|\,|B_t(\sigma)|\). Note that for any \(a \in A_{tf}(\sigma)\) and \(b \in B_t(\sigma)\), we have \(\|ab\| \in I''\). If the edge \((a,b)\) exists in \(G_{tf}\), then \(\|ab\|\) must lie in \(E \cap I''\). Since \(G_{tf} \in \calG_1\), it has fewer than \(3(|A_{tf}| + |B_t|)\) such edges. By \cite[Corollary 2.5]{ref:KatzAn97}, if \(X\) and \(Y\) are two vertex subsets of a \(d\)-regular expander graph on \(M\) vertices and there are fewer than \(3M\) edges between \(X\) and \(Y\), then \(|X|\,|Y| \leq 9M^2/d\). Applying this with \(X=A_{tf}(\sigma)\), \(Y=B_t(\sigma)\), and \(M=|A_{tf}|+|B_t|\), we get
\[
\sum_{\sigma \in \Xi} |A_{tf}(\sigma)|\,|B_t(\sigma)|
\leq O(r^2)\cdot \frac{9(|A_{tf}|+|B_t|)^2}{d}
= O\!\left(\frac{r^2(|A_{tf}|+|B_t|)^2}{d}\right).
\]

Therefore,
\[
N_{tf}
= O\!\left(\frac{|A_{tf}|\,|B_t|}{r}\right)
  + O\!\left(\frac{r^2(|A_{tf}|+|B_t|)^2}{d}\right).
\]
Since \(|B_t| \leq |A_{tf}| \leq 2|B_t|\), we have \((|A_{tf}|+|B_t|)^2 \leq 5|A_{tf}|\,|B_t|\). Hence,
\[
N_{tf}
= O\!\left(\left[\frac1r + \frac{r^2}{d}\right]|A_{tf}|\,|B_t|\right).
\]
Choosing \(r=d^{1/3}\) and taking \(d\) sufficiently large, we obtain
\[
N_{tf} \leq 0.01\cdot |A_{tf}|\,|B_t|.
\]

Summing over all graphs \(G_{tf}\in\calG_1\) gives
\[
N(\calG_1) \leq 0.01 \cdot \sum_{G_{tf}\in\calG_1} |A_{tf}|\,|B_t|,
\]
 where $N(\calG_1)$ is the number of distances between points of $A_{tf}$ and points of $B_t$ that are in $I''$ for all graphs $G_{tf} \in \calG_1$.
 
Since
\[
\sum_{G_{tf}\in\calG_1} |A_{tf}|\,|B_t|
\leq \sum_t |A_t|\,|B_t|
= m_{j-1},
\]
we conclude that \(N(\calG_1)\leq 0.01\cdot m_{j-1}\).

\paragraph{The set \(\boldsymbol{\calG_2}\).}
For each \(G_{tf}\in\calG_2\), at least \(3(|A_{tf}|+|B_t|)\) distances in \(E\cap I''\) are contributed, and therefore \(G_{tf}\) contributes at least \(3|A_{tf}|\,|B_t|\) to the total weight of distances in \(E\cap I''\). Recall that the total weight of distances in \(E\cap I''\) is at most \(m_{j-1}/2\) by our algorithm. Hence,
\[
\sum_{G_{tf}\in\calG_2} |A_{tf}|\,|B_t| \leq \frac{m_{j-1}}{6}.
\]
Let \(N(\calG_2)\) denote the number of distances between points of \(A_{tf}\) and points of \(B_t\) that lie in \(I''\) for all graphs \(G_{tf}\in\calG_2\). Then
\[
N(\calG_2) \leq \sum_{G_{tf}\in\calG_2} |A_{tf}|\,|B_t|
\leq \frac{m_{j-1}}{6}.
\]

\paragraph{Summary.}
By definition, \(|S(\Gamma)\cap I''| = N(\calG_1) + N(\calG_2)\). Since
\(N(\calG_1)\leq 0.01\,m_{j-1}\) and \(N(\calG_2)\leq m_{j-1}/6\), we obtain
\[
|S(\Gamma)\cap I''| \leq \left(0.01 + \frac16\right)m_{j-1}.
\]
Setting \(\rho'' = 0.01 + 1/6\), we have \(\rho'' \in (0,0.18)\), and the lemma follows.
\end{proof}

By Lemma~\ref{lem:prune20}, and since \(I^*\) consists of two basic intervals, we have
\[
|S(\Gamma)\cap I'|
\leq 2\rho''\, m_{j-1}
< 0.36\, m_{j-1},
\]
where \(\rho''\in(0,0.18)\).

Combining this bound with \eqref{equ:prune00} and \eqref{equ:prune10}, we obtain
\[
q_j
\leq \frac{2c_0}{c_1}\,m_{j-1}
   + 0.36\,m_{j-1}.
\]

Choosing \(c_1\) sufficiently large so that \(2c_0/c_1 \leq 0.01\) yields
$q_j \leq 0.37\,m_{j-1}$,
which proves Lemma~\ref{lem:prune10}.

\subsection{The distance-reporting step}
\label{sec:post}

As discussed above, after \(O(\log^* n)\) phases, we are left with an interval \(I=(\alpha,\beta]\) such that \(|S\cap I|=O(n^{4/3})\) and \(\delta^*\in I\). In this distance-reporting step, we compute \(\delta^*\) in an additional \(O(n^{4/3})\) time.

We first compute the number of distances in \(S\) that are at most \(\alpha\), which can be done in \(O(n^{4/3})\) time~\cite{ref:WangUn23}. Let \(K_\alpha\) denote this number. Our goal is then to find the \(K'\)-th smallest distance in \(S\cap I\), where \(K' = K-K_\alpha\).
To this end, we first explicitly enumerate all distances in \(S\cap I\), and then apply the linear-time selection algorithm to compute \(\delta^*\). Since \(|S\cap I|=O(n^{4/3})\), the selection step takes \(O(n^{4/3})\) time. Therefore, the remaining task is to enumerate \(S\cap I\) within the same time bound.

A natural approach is to apply Lemma~\ref{lem:partial-brs} to compute \(\Gamma\) and \(\Pi\) with \(\Pi=\emptyset\). We refer to this as a \emph{full biclique partition} (called \emph{full batched range searching} in \cite{ref:WangIm25}). Since \(\Pi=\emptyset\), the bicliques in \(\Gamma\) record all distances in \(S\cap I\), and thus all such distances can be enumerated from \(\Gamma\) in additional \(O(|S\cap I|)\) time.
To compute a full biclique partition, the algorithm of~\cite{ref:WangIm25} recursively applies Lemma~\ref{lem:partial-brs}. However, the resulting running time is \(n^{4/3}\cdot 2^{O(\log^* n)}\) because of the constant-factor blowup in the recurrence~\cite{ref:WangIm25}. Moreover, the bound obtained in~\cite{ref:WangIm25} for
$\sum_{A_t\times B_t\in \Gamma} (|A_t|+|B_t|)$ is also \(n^{4/3}\cdot 2^{O(\log^* n)}\), again due to the constant-factor blowup. 

\paragraph{Implicit biclique partition.}
To obtain an \(O(n^{4/3})\)-time algorithm for enumerating \(S\cap I\), one would ideally compute a full biclique partition \(\Gamma=\{A_t\times B_t\}\) in \(O(n^{4/3})\) time such that
$\sum_t (|A_t|+|B_t|)=O(n^{4/3})$. We achieve a weaker form of this goal in which the sets \(A_t\) are represented only implicitly.

Specifically, our algorithm outputs the sets \(B_t\) explicitly, together with a collection \(\calA\) of explicitly represented subsets \(A_s'\subseteq A\). For each \(B_t\), the algorithm also outputs a subcollection \(\calA_t\subseteq \calA\) such that the sets in \(\calA_t\) form a partition of \(A_t\). In particular,
$A_t=\cup_{A_s'\in \calA_t} A_s'$.
Furthermore, given \(B_t\), the sets of \(\calA_t\) can be accessed one at a time in \(O(1)\) time per set. Consequently, the points of \(A_t\) can be enumerated in \(O(|A_t|)\) time.
We refer to the pair \((\Gamma,\calA)\) as an \emph{implicit full biclique partition} of \(S\cap I\). Our main result is stated below.

\begin{theorem}[Implicit biclique partition]
\label{thm:implicitbiclique}
Given an interval \(I\), one can compute an implicit full biclique partition of \(S\cap I\) in \(O(n^{4/3})\) time.

The output consists of a collection \(\calA\) of subsets \(A_s'\subseteq A\) and a collection \(\Gamma\) of edge-disjoint bicliques \(A_t\times B_t\) that together record all distances in \(S\cap I\). The sets \(B_t\subseteq B\) are represented explicitly, whereas each set \(A_t\subseteq A\) is represented implicitly as follows. For every biclique \(A_t\times B_t\), there is an associated subcollection \(\calA_t\subseteq \calA\) containing \(O(\log\log\log n)\) subsets \(A_s'\), and these subsets form a partition of \(A_t\). Moreover, the elements of \(\calA_t\) can be accessed one at a time in \(O(1)\) time per element.

Finally,
\[
\sum_{A_t\times B_t\in\Gamma}|B_t| = O(n^{4/3})
\qquad\text{and}\qquad
\sum_{A_s'\in\calA}|A_s'| = O(n^{4/3}).
\]
\end{theorem}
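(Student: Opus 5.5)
We follow the same three-stage skeleton as in Sections~\ref{subsec:quasi-large} and~\ref{subsec:disrep-large}, now with no unresolved pairs left at the end. The innermost subproblems are finished exhaustively, and the $A$-sides of the bicliques are kept as lists of explicitly stored pieces so that we never pay to copy them.

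\textbf{Stage 1.} We apply Recurrence~\eqref{equ:small} twice with $r=n^{1/3}/\log^2 n$, as in the case $i=2$. This yields explicit certified bicliques of total size $O(n^{4/3}/\log\log n)$. We put each of them into $\Gamma$, and its $A$-side becomes a single member of $\calA$. The remainder is $O(n^{4/3}/\log^8\log n)$ subproblems $T(q,q)$ with $q=\log^6\log n$.

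\textbf{Stage 2.} In each $T(q,q)$ we apply Recurrence~\eqref{formula:DualRecurrence} with $r=q^{1/3}/g^2$. Here $g$ is now a fixed parameter, say $g=\log q=\Theta(\log\log\log n)$, rather than $\log^{(i-2)}q$. This again gives explicit certified bicliques of size $O(q^{4/3}/g^2)$ and subproblems $T(m',n')$ with $m'\le q^{1/3}g^4$ and $n'\le q^{2/3}g^2$.

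\textbf{Stage 3.} In each subproblem we build a hierarchical $(1/r')$-cutting for $\calD'$. The value of $r'$ is chosen so that the leaf conflict lists have size $O(1)$, or at least $m'/r'=O(n'^{\,?})$ small enough that brute force is affordable. For every cell $\sigma$ we store the set $A'_\sigma$ (annuli containing $\sigma$ but not its parent) explicitly as a member of $\calA$. The total size of these sets is $O(m'r')$.

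We then locate each $b\in B'$ in the last cutting in amortized $O(1)$ time using the $\Gamma$-framework, as in Section~\ref{subsec:quasi-large}. For each leaf $\sigma$ we form the biclique whose $B$-side is $B'(\sigma)$ and whose $A$-side is $\mathcal A'(\sigma)=\bigcup_{\sigma''\text{ ancestor of }\sigma}A'_{\sigma''}$. This $A$-side is represented implicitly by the list $\calA_t$ of nonempty $A'_{\sigma''}$ on the root-to-leaf path. That path has length $k=O(\log r')=O(\log q)=O(\log\log\log n)$, which matches the claimed bound on $|\calA_t|$. Precomputing, for every cell, a pointer to its nearest ancestor with nonempty $A'$ gives $O(1)$ access per element. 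The bicliques are edge-disjoint because the leaves partition $B'$ and the ancestors' $A'_{\sigma''}$ are disjoint.

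The remaining pairs are those between $\hat A'_\sigma$ and $B'(\sigma)$. We handle them by brute force: each such pair is tested directly and recorded as a singleton biclique whose $A$-side is a singleton member of $\calA$. This costs $O(\sum_\sigma|\hat A'_\sigma||B'(\sigma)|)=O(m'n'/r')$.

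\textbf{Accounting.} Per subproblem $T(m',n')$ the cost and output size are $O(m'r'+n'+m'n'/r')$. With $r'=\sqrt{n'}$ (capped at $m'$) this is $O(m'\sqrt{n'}+n')$, which gives $O(q^{1/3}g^4\cdot q^{1/3}g)=O(q^{2/3}g^5)$. Over the $O(q^{2/3}/g^4)$ subproblems of one $T(q,q)$ this totals $O(q^{4/3}g)$. That is too much by a factor $g$, so the real choice must be tuned: use Recurrence~\eqref{formula:DualRecurrence} with $r=q^{1/3}/g^{c}$ for a larger constant $c$, or apply~\eqref{equ:small} once more inside $T(q,q)$ before the final stage, so that $m'n'/r'$ and $m'r'$ sum to $O(q^{4/3})$ per $T(q,q)$. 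Multiplying by the $O(n^{4/3}/q^{4/3})$ copies of $T(q,q)$ gives $O(n^{4/3})$. Both $\sum|B_t|$ and $\sum_{\calA}|A'_s|$ are bounded by these same cost terms, since in the implicit scheme each $A'_{\sigma''}$ is stored once while its elements are counted once per leaf below it only through pointers.

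\textbf{Main obstacle.} The hardest step is the parameter balancing in Stage 3. The brute-force cost $m'n'/r'$ for conflict pairs pulls against the cutting cost $m'r'$, and the implicit representation is what removes the extra $\log r'$ factor from $\sum|A_t|$. I would first try a dual-then-primal application inside $T(q,q)$ so that both sides shrink to sizes about $q^{1/3}$ up to $\mathrm{polylog}(q)$ factors, chosen so those factors cancel. If balancing alone does not succeed, the fallback is to recurse once more: treat the conflict bicliques $\hat A'_\sigma\times B'(\sigma)$ as subproblems of size $O(g^{O(1)})$ and solve them by table lookup under the $\Gamma$-framework (their size is $\le\log\log\log n$-ish, so all configurations can be precomputed in $o(n)$ time). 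That makes the final layer linear in output and closes the $O(n^{4/3})$ bound.
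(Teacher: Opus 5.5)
Your outer structure matches the paper: the explicit first two levels, the implicit $A$-sides built from the nonempty canonical sets $A'_{\sigma''}$ on a root-to-leaf path with ancestor links, $|\calA_t|=O(\log q)=O(\log\log\log n)$, edge-disjointness, and $\Gamma$-framework point location. However, the proof is not complete, because you never exhibit parameters that give $O(q^{4/3})$ per subproblem $T(q,q)$.

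Your own accounting overshoots by a factor $g$, and the first fix you suggest makes it worse. Take $r=q^{1/3}/g^{c}$ in \eqref{formula:DualRecurrence}. This gives $O(q^{2/3}/g^{2c})$ subproblems with $m'\le q^{1/3}g^{2c}$ and $n'\le q^{2/3}g^{c}$. Summed over one $T(q,q)$, your Stage~3 cost in the worst case is $O(qr'+q^{5/3}g^{c}/r'+q^{4/3}/g^{c})$. By AM--GM this is $\Omega(q^{4/3}g^{c/2})$ for every choice of $r'$, and with the conflict-free choice $r'=m'$ it is $O(q^{4/3}g^{2c})$. So any growing $g$, such as your $g=\log q$, is fatal, and a larger exponent $c$ only hurts. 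Your other suggestions are not worked out: one more application of \eqref{equ:small} only works if the final stage again avoids $g$. The table-lookup fallback is not justified, since planar point configurations are not finitely many, and it would need the full decision-tree machinery.

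The missing observation is that $g$ exists in Sections~\ref{subsec:quasi-large} and~\ref{subsec:disrep-large} only to trade unresolved pairs for running time below $n^{4/3}$. For a full partition within an $O(n^{4/3})$ budget there is nothing to trade, so one simply sets $g=1$. The paper applies \eqref{formula:DualRecurrence} with $r=q^{1/3}$, which gives $O(q^{2/3})$ subproblems with $m'\le q^{1/3}$ and $n'\le q^{2/3}$. It then builds the hierarchical cutting with $r'=m'$, so every leaf has an empty conflict list and no brute force is needed. Each $T(m',n')$ costs $O(m'^2+n')=O(q^{2/3})$, hence $T(q,q)=O(q^{4/3})$, and \eqref{equ:tiny00} gives $T(n,n)=O(n^{4/3})$. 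With $g=1$ your brute-force variant with $r'=\min\{\sqrt{n'},m'\}$ would also close, but it is unnecessary.
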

\begin{proof}
We run the algorithm of Lemma~\ref{lem:partial-brs} from Section~\ref{subsec:pbs} with \(m=n\). Whenever the algorithm reaches Recurrence~\eqref{formula:PrimalRecurrence}, for each cell \(\sigma\in\Xi\) we add the biclique \(A_\sigma\times B_\sigma\) to \(\Gamma\), add \(A_\sigma\) to \(\calA\) as a subset \(A_s'\), and define \(\calA_\sigma=\{A_\sigma\}\). Thus \(\calA_\sigma\) contains exactly one subset.

We then continue with the dual algorithm and obtain Recurrence~\eqref{formula:DualRecurrence}. We update \(\Gamma\) and \(\calA\) analogously for the dual subproblems. In this way we obtain Recurrence~\eqref{formula:balanced}. Note that these additional bookkeeping operations do not affect the asymptotic running time, so the recurrence remains unchanged. Setting \(r=n^{1/3}/\log^2 n\) yields Recurrence~\eqref{equ:tiny00}.

To solve \(T(q,q)\), unlike in Section~\ref{subsec:quasi-large}, we do not use the parameter \(g\) (or equivalently, we set \(g=1\)). Applying the algorithm corresponding to Recurrence~\eqref{formula:DualRecurrence} with \(m=n=q\) and \(r=q^{1/3}\), we obtain
\begin{equation}
\label{equ:tiny100}
T(q,q)
=
O(q^{4/3})
+
O(q^{2/3})\cdot T(q^{1/3},q^{2/3}).
\end{equation}

As before, every biclique generated during this step is inserted into \(\Gamma\), and its associated subset is inserted into \(\calA\). At this stage, each biclique \(A_t\times B_t\in\Gamma\) is associated with a single subset \(A_s'\in\calA\).

We next consider a subproblem \(T(m',n')\) arising from Recurrence~\eqref{equ:tiny100}, where \(m'\le q^{1/3}\) and \(n'\le q^{2/3}\). Let \(A'\subseteq A\) and \(B'\subseteq B\) denote the corresponding point sets.

Let \(\calD'\) be the set of annuli defined by the points of \(A'\) with respect to the interval \(I\). We construct a hierarchical \((1/r')\)-cutting
$\{\Xi_0,\Xi_1,\ldots,\Xi_k\}$
for \(\calD'\), with \(r'=m'\). This takes \(O(m'r')\) time. Since \(r'=m'\), we can assume that every cell in the last cutting \(\Xi_k\) has an empty conflict list. Equivalently, no annulus of \(\calD'\) has a boundary circle intersecting the interior of such a cell. Consequently, there are no uncertain pairs, which will allow us to obtain a full biclique partition.

Let \(\Xi\) denote the set of all cells in the hierarchical cutting. For each cell \(\sigma\in\Xi\), define the canonical subset \(A'_\sigma\) exactly as in Lemma~\ref{lem:partial-brs}: a point \(a\in A'\) belongs to \(A'_\sigma\) if the annulus \(D_a\) contains \(\sigma\) but does not contain the parent of \(\sigma\). We insert each nonempty subset \(A'_\sigma\) into \(\calA\). All canonical subsets can be computed in \(O(m'r')\) time.

In addition, for every cell \(\sigma\in\Xi\) with \(A'_\sigma\neq\emptyset\), for a reason to be clear later, we store an \emph{ancestor link} from \(\sigma\) to its lowest ancestor \(\sigma'\) satisfying $\sigma'\neq \sigma$ and \(A'_{\sigma'}\neq\emptyset\). These links can be established in \(O(m'r')\) time by traversing the cutting tree.

Next, for each point \(b\in B'\), we determine the cell of \(\Xi_k\) containing \(b\). Using the \(\Gamma\)-algorithm framework~\cite{ref:ChanHo23}, as discussed in Section~\ref{subsec:quasi-large}, this can be done in \(O(n')\) amortized time for all points $b\in B'$.

For each cell \(\sigma\in\Xi_k\), let \(B_t=B'(\sigma)\), and let \(A_t\) be the union of all canonical subsets \(A'_{\sigma'}\) over the ancestors \(\sigma'\) of \(\sigma\). We add the biclique \(A_t\times B_t\) to \(\Gamma\). We define \(\calA_t\) to be the collection of all nonempty canonical subsets \(A'_{\sigma'}\) corresponding to ancestors \(\sigma'\) of \(\sigma\). By following the ancestor links, the elements of \(\calA_t\) can be accessed one at a time in \(O(1)\) time per element.
Since the height of the cutting tree is \(O(\log r')\), we have
$|\calA_t| = O(\log r')$.
Further, since \(r'=m'\le q^{1/3}\) and \(q=O(\log^6\log n)\), it follows that
$|\calA_t| = O(\log\log\log n)$.

The bicliques generated above for the cells $\sigma$ of \(\Xi_k\) form a full biclique partition for all distances between points of \(A'\) and points of \(B'\). Therefore, the total running time for the subproblem $T(m',n')$ is
$O(m'r' + n')$. Since \(r'=m'\), \(m'\le q^{1/3}\), and \(n'\le q^{2/3}\), we have
\[T(m',n') = O(m'^2+n') = O(q^{2/3}).\]
Substituting this bound into \eqref{equ:tiny100} yields
$T(q,q)=O(q^{4/3})$. Finally, substituting this bound into Recurrence~\eqref{equ:tiny00} gives
$T(n,n)=O(n^{4/3})$, which proves Theorem~\ref{thm:implicitbiclique}.
\end{proof}

\begin{corollary}
\label{cor:outputdistance}
Given an interval \(I\), one can explicitly enumerate all distances in \(S\cap I\) in
$O(n^{4/3}+|S\cap I|)$ time.
\end{corollary}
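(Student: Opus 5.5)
The plan is to apply Theorem~\ref{thm:implicitbiclique} to \(I\) and then expand every biclique pair by pair. The theorem yields, in \(O(n^{4/3})\) time, a collection \(\Gamma\) of edge-disjoint bicliques \(A_t\times B_t\) together with the family \(\calA\). Together these record every distance of \(S\cap I\). Because the partition is full, every recorded pair has distance in \(I\), since all bicliques are certified. Edge-disjointness then gives \(\sum_t |A_t||B_t| = |S\cap I|\), with each pair \((a,b)\) with \(\|ab\|\in I\) appearing exactly once.

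For each biclique \(A_t\times B_t\), I would first enumerate \(A_t\). I would walk through the subsets of \(\calA_t\) one at a time, at \(O(1)\) time per subset via the ancestor links, and list the points of each subset. For each \(a\in A_t\) and each \(b\in B_t\) (which is stored explicitly), I would output \(\|ab\|\). The enumeration of \(A_t\) costs \(O(|\calA_t|+|A_t|)\), and the pair output costs \(O(|A_t||B_t|)\).

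It remains to bound the overhead summed over all bicliques.
\begin{itemize}
\item Subset access: the total \(\sum_t |\calA_t|\) is at most the total bookkeeping done in the construction. I would argue that each element of \(\calA_t\) is a nonempty set reached by a link, so this is absorbed in the \(O(n^{4/3})\) construction time. Alternatively, \(|\calA_t|\le |A_t|\) whenever we skip \(A_t=\emptyset\), because the subsets are nonempty and disjoint.
\item Pair output: when \(A_t\) and \(B_t\) are both nonempty, \(|A_t|\le |A_t||B_t|\). Hence \(\sum_t(|A_t|+|A_t||B_t|)=O(|S\cap I|)\), and the term \(\sum_t|B_t|=O(n^{4/3})\) is covered directly by the theorem.
\end{itemize}
Bicliques with \(A_t=\emptyset\) can be detected in \(O(1)\) time each, by checking whether \(\calA_t\) is empty, and there are at most \(O(n^{4/3})\) of them. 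Summing everything gives \(O(n^{4/3}+|S\cap I|)\).

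The main subtlety is exactly this accounting. Enumerating \(A_t\) must not be charged \(|A_t|\) per biclique without justification, because \(\sum_t|A_t|\) is not bounded by \(O(n^{4/3})\). The fix is to charge that cost to the output pairs whenever \(B_t\neq\emptyset\), and to skip bicliques with \(B_t=\emptyset\) entirely, at \(O(1)\) cost each.
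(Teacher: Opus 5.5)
Your argument is correct and follows the same outline as the paper: apply Theorem~\ref{thm:implicitbiclique}, expand each \(A_t\) through \(\calA_t\), and output every pair. The accounting for expanding the sets \(A_t\) is genuinely different, and yours is the sound one.

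The paper charges the recovery of all \(A_t\) to \(\sum_{A'_s\in\calA}|A'_s|=O(n^{4/3})\). Taken literally, that does not bound \(\sum_t|A_t|\). A canonical subset \(A'_{\sigma'}\) of an internal cell belongs to \(\calA_t\) for every leaf below \(\sigma'\), so it is rescanned once for each of them.

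Your scheme handles this cleanly:
\begin{itemize}
\item Each biclique costs \(O(1)\), and \(|\Gamma|=O(n^{4/3})\) because all bicliques are produced within that time.
\item When \(A_t,B_t\neq\emptyset\), you use \(|\calA_t|\le|A_t|\le|A_t||B_t|\) and charge the cost to the output pairs.
\item The total number of output pairs is exactly \(|S\cap I|\), by certification and edge-disjointness.
\end{itemize}
This also pays for walking the ancestor links, which the paper leaves implicit.

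One correction: discard your first justification that \(\sum_t|\calA_t|\) is ``absorbed in the construction time.'' The construction stores one link per cell but never walks the chain from every leaf. Its running time therefore says nothing about \(\sum_t|\calA_t|\). A priori this sum is only \(O(r'^2\log r')\) per last-level subproblem, which the \(O(m'r'+n')\) cost does not cover in general.

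Keep your alternative argument via \(|\calA_t|\le|A_t|\) instead. It is valid at the last level because only nonempty, pairwise disjoint subsets enter \(\calA_t\). At the upper levels \(|\calA_t|=1\), so the access cost there is \(O(1)\) anyway.
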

\begin{proof}
Apply Theorem~\ref{thm:implicitbiclique}. For each biclique \(A_t\times B_t\in\Gamma\), we first enumerate the points of \(A_t\) in \(O(|A_t|)\) time using its implicit representation. Then, for every \(a\in A_t\) and \(b\in B_t\), we output the distance \(\|ab\|\).
By Theorem~\ref{thm:implicitbiclique},
\[
\sum_{A_t\times B_t\in\Gamma}|B_t|=O(n^{4/3})
\qquad\text{and}\qquad
\sum_{A_s'\in\calA}|A_s'|=O(n^{4/3}).
\]
Hence the total overhead for recovering all sets \(A_t\) is \(O(n^{4/3})\), while the time spent outputting distances is proportional to the number of distances reported, namely \(|S\cap I|\). Therefore, the overall running time is \(O(n^{4/3}+|S\cap I|)\).
\end{proof}

By Corollary~\ref{cor:outputdistance},  our original problem of enumerating all distances of $S\cap I$ can be solved in  $O(n^{4/3})$ time, since $|S\cap I|=O(n^{4/3})$.


\paragraph{Wrapping things up.}
We have shown that \(\delta^*\) can be computed in \(O(n^{4/3})\) deterministic time. Note that the algorithm of Corollary~\ref{cor:outputdistance} can also report, together with each distance, the corresponding pair of points that realizes it. Therefore, a point pair whose distance is equal to \(\delta^*\) can be reported within the same time bound. 

The following theorem summarizes our main result. 

\begin{theorem}
Given two sets \(A\) and \(B\), each consisting of \(n\) points in the plane, and an integer \(K\) with \(1\le K\le n^2\), one can compute the \(K\)-th smallest distance among all interpoint distances between \(A\) and \(B\) in \(O(n^{4/3})\) deterministic time.
\end{theorem}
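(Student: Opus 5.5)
The theorem is essentially a summary statement, so the plan is to assemble the pieces already established in Sections~\ref{sec:quasidecision}--\ref{sec:selection} and verify that the accounting closes.

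\textbf{Phases.} Run the phase structure of Section~\ref{subsec:algodes}. Start with \(I=(0,+\infty)\), which contains \(\delta^*\) and has \(|S\cap I|\le n^2\). For phase \(i\), each iteration does three things.
\begin{enumerate}
\item Invoke Theorem~\ref{thm:distance-representation} with parameter \(i\) on the current interval \(I_{j-1}\).
\item If \(N(\Gamma)\le c_1 n^{4/3}(\log^{(i)}n)^4\), stop the phase, which is justified by Lemma~\ref{lem:bounded}.
\item Otherwise, build the expander-based weighted set \(E\), choose the \(O(1)\) splitting values \(\delta_u\), and resolve them with \(O(1)\) calls to Theorem~\ref{thm:quasidecision} on the intervals \([0,\delta_u]\).
\end{enumerate}
Correctness of the invariant \(\delta^*\in I_j\) follows from the definitions of \(u_1,u_2\). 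The geometric shrinkage \(|S\cap I_j|\le\rho|S\cap I_{j-1}|\) is Lemma~\ref{lem:prune}, which rests on Lemma~\ref{lem:middleinterval} and Lemma~\ref{lem:prune20}. Both lemmas require \(c_1\) to be chosen large relative to \(c_0\), consistent with the constraint imposed in Theorem~\ref{thm:distance-representation}. I would fix \(c_1\) once to satisfy all three requirements simultaneously.

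\textbf{Time accounting.} This is the main point to check carefully. Phase \(i\) starts with \(O(n^{4/3}(\log^{(i-1)}n)^4)\) candidates and ends once the count reaches \(O(n^{4/3}(\log^{(i)}n)^4)\). The ratio is \(((\log^{(i-1)}n)/\log^{(i)}n)^4\), so shrinking by the constant \(\rho\) needs only \(O(\log(\log^{(i-1)} n))=O(\log^{(i)}n)\) iterations. For phase \(1\) the ratio is \(n^{2/3}/\log^4 n\), giving \(O(\log n)\) iterations as claimed.

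Multiplying by the per-iteration costs gives the phase costs.
\begin{itemize}
\item For \(i=1,2\), the per-iteration cost is \(O(n^{4/3}/\log^{(i)}n)\), so each phase costs \(O(n^{4/3})\).
\item For \(i\ge3\), the per-iteration cost is \(O(n^{4/3}/(\log^{(i)}n)^2)\), so phase \(i\) costs \(O(n^{4/3}/\log^{(i)}n)\).
\end{itemize}
I would then bound \(\sum_{i\ge3}1/\log^{(i)}n\) by \(O(1)\). Reading the sum from the last phase backward, the terms \(\log^{(i)}n\) grow at least exponentially (each is roughly \(2\) raised to the next), so the sum is dominated by a geometric series.

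The number of phases must be chosen so that \(\log^{(i)}n\) is at least a fixed constant for every \(i\) used. Then the final interval has \(|S\cap I|=O(n^{4/3})\), which happens after \(O(\log^*n)\) phases. The one-time preprocessing for the \(\Gamma\)-framework (\(O(n^{5/4})\)) is absorbed.

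\textbf{Final step.} Compute \(K_\alpha\) exactly by the counting algorithm of~\cite{ref:WangUn23} in \(O(n^{4/3})\) time. Enumerate \(S\cap I\) via Corollary~\ref{cor:outputdistance} in \(O(n^{4/3}+|S\cap I|)=O(n^{4/3})\) time. Finally, select the \((K-K_\alpha)\)-th smallest value by linear-time selection~\cite{ref:BlumTi73}.

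The only real obstacle I anticipate is bookkeeping: the constants \(c_0,c_1,\rho\) must be uniform over all phases and independent of \(i\), so that the iteration counts above hold.
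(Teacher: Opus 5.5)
Your proposal is correct and follows the paper's argument essentially step for step: the $O(\log^* n)$-phase prune-and-search driven by Theorem~\ref{thm:distance-representation} and Theorem~\ref{thm:quasidecision}, the $O(\log^{(i)}n)$ iteration count per phase with the convergent sum $\sum_{i\ge 3} 1/\log^{(i)}n$, and the final step via exact counting of $K_\alpha$, Corollary~\ref{cor:outputdistance}, and linear-time selection. One small precision: the condition on $c_1$ relative to $c_0$ enters only when Lemma~\ref{lem:middleinterval}'s bound $2\Delta$ is converted into the fraction $2c_0/c_1$ of $m_{j-1}$, whereas Lemma~\ref{lem:prune20} depends on choosing the expander degree $d$ large, not on $c_1$.
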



\section{The asymmetric case}
\label{sec:asymmetric}

In this section, we consider the asymmetric version of the distance
selection problem, where the sizes of the two point sets are not
necessarily equal. Let \(m=|A|\) and \(n=|B|\).

Our main result is the following.

\begin{theorem}
\label{thm:asy}
Given a set \(A\) of \(m\) points and a set \(B\) of \(n\) points in
the plane, and an integer \(K\) with \(1\le K\le mn\), one can compute
the \(K\)-th smallest distance among all interpoint distances between
\(A\) and \(B\) in \(O(n\log^2 m + m\log^2 n + m^{2/3}n^{2/3})\) deterministic time.
\end{theorem}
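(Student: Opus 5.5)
Without loss of generality assume \(m\le n\); the case \(m\ge n\) is symmetric by exchanging the roles of \(A\) and \(B\) (annuli are then centered at points of \(B\)). The plan is to replay the prune-and-search framework of Section~\ref{sec:selection}, after first generalizing its three ingredients — approximate counting (Theorem~\ref{thm:approx-count}), distance representation (Theorem~\ref{thm:distance-representation}), and implicit biclique partition (Theorem~\ref{thm:implicitbiclique}) — to unbalanced sizes. We split into cases by how unbalanced the input is.

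\textbf{Extreme cases.} If \(n\ge m^2\), the term \(m^{2/3}n^{2/3}\) is at most \(n\), so we aim for \(O(n\log^2 m+m\log^2 n)\). We apply the dual recurrence~\eqref{formula:DualRecurrence} with \(r=\min\{m^{1/2},n/m\}\)-type choices until the subproblems become balanced or constant size. The simplest route is to invoke the \(O((m\log n+n\log m+m^{2/3}n^{2/3})\log(m+n))\) algorithm of~\cite{ref:WangIm25}, but with exact counting in the form of Wang's asymmetric counting algorithm~\cite{ref:WangUn23}, which runs in \(O(n\log m+m\log n+m^{2/3}n^{2/3})\) time. For \(m=O(1)\) we use plain linear-time selection.

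The \(\log^2\) terms come precisely from \(O(\log m)\) prune iterations, each costing a near-linear \(O(n\log m)\) term. So in the regime where the linear terms dominate, we accept \(O(\log(\cdot))\) iterations with an \(O(n\log m+m\log n)\) cost each. The improvement over~\cite{ref:WangIm25} is needed only in the \(m^{2/3}n^{2/3}\) part.

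\textbf{Main regime \(m\le n\le m^2\).} Apply one primal step~\eqref{formula:PrimalRecurrence} with parameter \(r_0=(n/m)^{1/2}\)-ish, chosen so that each of the \(O(r_0^2)\) subproblems has sizes \(m/r_0\) and \(n/r_0^2\) that are equal, i.e.\ \(r_0=n/m\). This step costs \(O(m r_0+n\log r_0)=O(n\log m)\). It produces certified bicliques of total size \(O(n\log m)\) and \((n/m)^2\) balanced subproblems of size \(s=m^2/n\).

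We then run the symmetric machinery (Sections~\ref{sec:quasidecision}--\ref{sec:representation}, and Theorem~\ref{thm:implicitbiclique}) inside each balanced subproblem. Per iteration at level \(i\), the cost is \((n/m)^2\cdot s^{4/3}/(\log^{(i)}s)^2=m^{2/3}n^{2/3}/(\log^{(i)}s)^2\), with error \(m^{2/3}n^{2/3}(\log^{(i)}s)^4\). The point-location preprocessing for the \(\Gamma\)-framework is shared across subproblems, since all tiny subproblems draw from the same global table.

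The key issue — which I expect to be the main obstacle, as the paper's remark on non-decomposability suggests — is that the top primal step cannot be redone in every iteration. Doing so costs \(O(n\log m)\) per iteration and \(O(\log(mn))\) iterations, giving \(O(n\log^2 m)\). That cost is acceptable and matches the target. However, it must not be multiplied by the phase structure, and both the quasi-decision and representation routines need it. So in each of the \(O(\log n)\) total iterations, we spend \(O(n\log m+m\log n)\) on the unbalanced top level plus the symmetric-style cost on the balanced part. The representation then merges the top-level certified bicliques with the capture-cell bicliques of the subproblems; Lemma~\ref{lem:half} applies verbatim per subproblem, giving the \(0.4\) fraction.

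Summing, the total is \(O((n\log m+m\log n)\log(mn))\) plus \(O(m^{2/3}n^{2/3})\) from the phase sum \(\sum_i 1/\log^{(i)}\). In the regime \(n\le m^2\) we have \(\log(mn)=O(\log m)\), so the first term is \(O(n\log^2 m+m\log^2 n)\), as claimed. The final distance-reporting step uses asymmetric exact counting~\cite{ref:WangUn23} together with the top-level bicliques and Theorem~\ref{thm:implicitbiclique} in each subproblem, in \(O(n\log m+m^{2/3}n^{2/3})\) time. The correctness analysis (Lemmas~\ref{lem:prune}--\ref{lem:prune20}) carries over with \(n^{4/3}\) replaced by \(m^{2/3}n^{2/3}\).
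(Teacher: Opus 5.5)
\textbf{There is a genuine gap in your cost accounting for the regime $m\le n\le m^2$.}

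You enter the phase structure directly from $I=(0,+\infty)$, where $|S\cap I|=\Theta(mn)$. In every iteration you then run the symmetric machinery on all $(n/m)^2$ balanced subproblems. At level $i=1$ this is charged $m^{2/3}n^{2/3}/\log s$ per iteration.

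Phase 1 must shrink the candidate count from $mn$ to $O(m^{2/3}n^{2/3}\log^4 s)$. That takes $\Theta(\log((mn)^{1/3}/\log^4 s))=\Theta(\log m)$ iterations, not $O(\log s)$. So your analysis only gives $O(m^{2/3}n^{2/3}\log m/\log s)$ for phase 1. Your appeal to the phase sum $\sum_i 1/\log^{(i)}s$ silently assumes phase 1 has $O(\log s)$ iterations.

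This excess is not absorbed by the other terms. Take $n=m^2/\log^6 m$, so $s=\log^6 m$:
\begin{itemize}
\item the target bound is $\Theta(n\log^2 m+m^{2/3}n^{2/3})=\Theta(m^2/\log^4 m)$;
\item phase 1 is charged $\Theta(m^2/(\log^3 m\log\log m))$.
\end{itemize}

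The missing idea is the paper's first main procedure. While $|S\cap I|$ exceeds $O(m^2)$, do not touch the subproblems at all. The standard subroutine by itself leaves only $O(r_0^2s^2)=O(m^2)$ uncertain pairs. Hence $N(\Gamma)$ is an approximate count with additive error $O(m^2)$, and $\Gamma$ already records all but $O(m^2)$ candidates. This gives both the quasi-decision and the representation in $O(n\log m)$ time per iteration. Then $O(\log(n/m))$ iterations bring the count to $O(m^2)$ in $O(n\log^2 m)$ total time.

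Only then should the phases start. Now phase 1 needs only $O(\log s)$ iterations, because $m^2/(m^{2/3}n^{2/3})=s^{2/3}$. This is exactly what makes your phase sum valid.

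Your global use of Lemma~\ref{lem:half} is fine. Summing it over all tiny subproblems against a global $O(m^{2/3}n^{2/3}(\log^{(i)}s)^4)$ bound on uncertain pairs works, and it even yields a termination test. However, it requires opening up the proof. Theorem~\ref{thm:distance-representation} cannot be applied as a black box to subproblems that violate its size hypothesis. That is why the paper splits subproblems into bounded and unbounded ones and runs each phase for a fixed number of iterations.

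\textbf{The case $n\ge m^2$ is also not established by what you wrote.}

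Invoking the algorithm of \cite{ref:WangIm25}, with its $O(\log(m+n))$ iterations, gives $O((n\log m+n)\log n)$. This exceeds the target whenever $\log m=o(\log n)$. For example, with $m=\log n$ you get $n\log n\log\log n$ instead of $n(\log\log n)^2$. A cheaper exact counter does not change the iteration count, and your linear-time selection only covers $m=O(1)$.

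To get $O(\log m)$ iterations, you must stop as soon as $|S\cap I|=O(n)$, which exact counting detects. You then need to enumerate the remaining candidates in $O(n\log m)$ time. Both are provided by a hierarchical cutting with $r=m$:
\begin{itemize}
\item since $n\ge m^2$, it costs $O(m^2+n\log m)=O(n\log m)$;
\item its leaf cells have empty conflict lists;
\item it therefore yields exact counting and a full biclique partition within that bound, serving both the expander step and the final enumeration.
\end{itemize}

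Your sentence about ``$O(\log m)$ prune iterations'' states the right goal. Neither the black-box route nor the vague dual-recurrence choice delivers it.
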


Most of the techniques developed for the symmetric case continue to
apply. We therefore only highlight the new ideas and refer to the
previous sections whenever the analysis is identical. However, certain care must be taken in order to achieve the claimed complexity. 
We use the same
notation as before, including \(S\), \(\mathcal D\), \(\delta^*\),
\(K\), \(K_\delta\), and \(\kappa(I)\).

Without loss of generality, assume \(m\le n\).
We distinguish two cases.

\begin{itemize}
\item If \(n\ge m^2\), we give an
      \(O(n\log^2 m)\)-time algorithm.

\item If \(n<m^2\), we give an algorithm running in
      \(O(n\log^2 m+m^{2/3}n^{2/3})\) time.
\end{itemize}

Combining the two cases yields
\(O(n\log^2 m+m\log^2 n+m^{2/3}n^{2/3})\) time for arbitrary
\(m\) and \(n\).

We begin with the easier case \(n\ge m^2\).

\subsection{The case \(\boldsymbol{n\ge m^2}\)}

Throughout this subsection, we assume \(n\ge m^2\).

We begin with the following exact counting lemma.

\begin{lemma}
\label{lem:exact-count}
Suppose that \(n\ge m^2\). Given an interval \(I\), one can compute
\(\kappa(I)\) in \(O(n\log m)\) time.
\end{lemma}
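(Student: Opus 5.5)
We plan to use the dual recurrence~\eqref{formula:DualRecurrence} with annuli centered at the points of \(B\), counting certified pairs implicitly as in Section~\ref{subsec:quasi-large}. The roles are swapped relative to that section: the large set \(B\) now defines the cutting, and the small set \(A\) is located in it. Concretely, we would instead take the primal view with annuli centered at the \(m\) points of \(A\). We build a hierarchical \((1/r)\)-cutting for \(\mathcal D\) with \(r=m\), which takes \(O(mr)=O(m^2)=O(n)\) time. Since \(r=m\), every cell of the last cutting \(\Xi_k\) has an empty conflict list. No annulus boundary then crosses the interior of a leaf cell, so every pair \((a,b)\) with \(b\) in a leaf cell \(\sigma\) is decided: \(\|ab\|\in I\) if and only if \(D_a\supseteq\sigma\).

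The steps are as follows. First, compute the hierarchical cutting and its conflict lists in \(O(m^2)\) time, as reviewed in Section~\ref{subsec:pbs}. Second, traverse the cutting tree top-down and compute, for every leaf cell \(\sigma\), the number \(m_\sigma\) of annuli containing \(\sigma\). This is done exactly as in the proof of Lemma~\ref{lem:capture-construct}: each cell has \(O(1)\) children, and the count at a child equals the parent's count plus the number of conflict-list annuli of the parent that contain the child. The traversal costs \(O(m^2)\) time. Third, locate each point of \(B\) in \(\Xi_k\) by descending the cutting tree, which has height \(O(\log r)=O(\log m)\) and constant branching. This costs \(O(\log m)\) per point, so \(O(n\log m)\) in total. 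Finally, output \(\kappa(I)=\sum_{\sigma\in\Xi_k} m_\sigma\,|B(\sigma)|\).

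Correctness follows because each \(b\in B\) lies in exactly one leaf cell; boundary points can be assigned by a fixed tie-breaking rule consistent with the closed/open convention of the annuli. Each \(a\) with \(\|ab\|\in I\) is then counted exactly once through \(m_\sigma\). The total time is \(O(m^2+n\log m)=O(n\log m)\), using \(n\ge m^2\).

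The main subtlety is the claim that a \((1/r)\)-cutting with \(r=m\) has empty conflict lists while keeping size \(O(r^2)\) and construction time \(O(mr)\). The threshold \(|\mathcal D_\sigma|\le m/r=1\) permits one crossing boundary per cell, not zero. To handle this, we would either take \(r=2m\), so that the bound \(m/r<1\) forces empty conflict lists at the same asymptotic cost, or handle the single leftover annulus per leaf cell directly. In the latter case we test each point \(b\in B(\sigma)\) against that one annulus in \(O(1)\) time, adding only \(O(n)\) time overall. Either fix preserves the \(O(n\log m)\) bound, so we do not expect a genuine obstacle; the \(\Gamma\)-framework is not needed here, since logarithmic point location already fits the budget.
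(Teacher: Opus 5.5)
Your proposal is correct and is essentially the paper's proof. Both build a hierarchical $(1/m)$-cutting for the annuli centered at $A$ in $O(m^2)=O(n)$ time, compute the counts $m_\sigma$ top-down, locate each point of $B$ in $O(\log m)$ time, and return $\kappa(I)=\sum_{\sigma\in\Xi_k} m_\sigma|B(\sigma)|$. Your handling of a possible single crossing annulus per leaf cell usefully tightens the paper's ``we can assume'' step. The $O(n)$-time direct test is the safer of your two fixes, since the cited cutting construction is stated only for $r\le m$. Delete the opening sentence about the dual recurrence with annuli centered at $B$, since it contradicts the primal plan you actually carry out.
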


\begin{proof}
Let \(\mathcal D\) denote the set of annuli defined by the points of
\(A\) with respect to the interval \(I\).

Construct a hierarchical \((1/r)\)-cutting
\(\{\Xi_0,\Xi_1,\ldots,\Xi_k\}\) for \(\mathcal D\) with \(r=m\).
The construction takes \(O(mr)\) time.
Since \(r=m\), we can assume that every cell of the last cutting \(\Xi_k\) has an empty
conflict list. Consequently, no annulus of \(\mathcal D\) has a
boundary circle intersecting the interior of any cell of \(\Xi_k\).

For every cell \(\sigma\in\Xi_k\), compute
\(B(\sigma)=B\cap\sigma\). This takes \(O(n\log r)\) time over all
cells.

Next, compute for every cell \(\sigma\in\Xi_k\) the number
\(m_\sigma\) of annuli of \(\mathcal D\) that completely contain
\(\sigma\). As in Section~\ref{subsec:quasi-large}, all values
\(m_\sigma\) can be computed in \(O(mr)\) time.

Notice that
\[
\kappa(I)
=
\sum_{\sigma\in\Xi_k}
m_\sigma\,|B(\sigma)|.
\]

Therefore, after computing \(m_\sigma\) and \(B(\sigma)\) for all cells $\sigma\in \Xi_k$, \(\kappa(I)\) can be readily obtained.
The total running time is
\(O(mr+n\log r)\), which is \(O(n\log m)\) since \(r=m\) and
\(n\ge m^2\).
\end{proof}

By Lemma~\ref{lem:exact-count}, given any value \(\delta\), we can
determine whether \(\delta<\delta^*\) in \(O(n\log m)\) time. 

We now describe the distance selection algorithm.

\paragraph{Overview.}
The algorithm maintains an interval \(I\) containing \(\delta^*\).
The distances in \(S\cap I\) are called the
\emph{candidate distances}. In each iteration, we shrink \(I\) so that
the number of candidate distances decreases by a constant factor.
Initially, \(I=(0,+\infty)\), and therefore
\(|S\cap I|=O(mn)\). After \(O(\log m)\) iterations, each taking
\(O(n\log m)\) time, we obtain an interval \(I\) satisfying
\(|S\cap I|=O(n)\) and containing \(\delta^*\).
The total time of all iterations is thus \(O(n\log^2 m)\).
Finally, we apply a distance-reporting step that enumerates all distances
in \(S\cap I\) and computes \(\delta^*\) in additional
\(O(n\log m)\) time.

\paragraph{A general \(\boldsymbol{j}\)-th iteration.}

Suppose we are given an interval \(I_{j-1}\).
The goal is to compute an interval \(I_j\) such that either

\begin{enumerate}
\item \(|S\cap I_j|=O(n)\), in which case we proceed to the
      distance-reporting step; or

\item \(|S\cap I_j|\le \rho\,|S\cap I_{j-1}|\) for some constant
      \(\rho<1\).
\end{enumerate}

We first compute \(\kappa(I_{j-1})\) using
Lemma~\ref{lem:exact-count}. If
\(\kappa(I_{j-1})\le c\, n\) for a sufficiently large constant \(c\),
then \(|S\cap I_{j-1}|=O(n)\), and we proceed directly to the
distance-reporting step.

Otherwise, we compute a full biclique partition by applying the
algorithm for Recurrence~\eqref{formula:PrimalRecurrence} with
\(r=m\). Since every cell of the last cutting has an empty conflict
list, the resulting collection
\(\Gamma=\{A_\sigma\times B(\sigma):\sigma\in\Xi\}\)
forms a full biclique partition of \(S\cap I_{j-1}\).
The construction takes \(O(mr+n\log r)\) time, which is
\(O(n\log m)\), and the total biclique size satisfies
$\sum_{A_t\times B_t\in\Gamma} (|A_t|+|B_t|) =O(n\log m)$.

We then construct the expander graphs for the bicliques of \(\Gamma\)
and form the weighted distance set \(E\) exactly as in
Section~\ref{subsec:algodes}. This also takes \(O(n\log m)\) time.
Next, we partition \(E\) into \(O(1)\) basic intervals and use
Lemma~\ref{lem:exact-count} to identify the unique interval containing
\(\delta^*\). This interval is returned as \(I_j\).
By the same analysis as in
Section~\ref{subsec:algodes} (see also \cite{ref:WangIm25,ref:KatzAn97}),
there exists a constant \(\rho<1\) such that
$|S\cap I_j| \le \rho\,|S\cap I_{j-1}|$.
Since Lemma~\ref{lem:exact-count} is invoked only \(O(1)\) times,
the total running time of the iteration is \(O(n\log m)\).

\paragraph{Distance-reporting.}
We are given an interval \(I\) containing \(\delta^*\) with
\(|S\cap I|=O(n)\). The goal is to finally compute $\delta^*$. 

Assume \(I=(\alpha,\beta]\). We first compute \(K_\alpha\) using
Lemma~\ref{lem:exact-count}. Then \(\delta^*\) is the
\((K-K_\alpha)\)-th smallest distance in \(S\cap I\).
To enumerate all distances in \(S\cap I\), we again compute a full
biclique partition \(\Gamma\) as above. This takes \(O(n\log m)\) time.
We then explicitly output all distances recorded in \(\Gamma\), which
requires \(O(|S\cap I|)=O(n)\) additional time since \(|S\cap I|=O(n)\).
Finally, we apply the linear-time selection algorithm to obtain
\(\delta^*\) in \(O(n)\) time.
The total running time of the distance-reporting step is therefore
\(O(n\log m)\).

\medskip

We conclude that if \(n\ge m^2\), then the distance selection problem
can be solved in \(O(n\log^2 m)\) time.

\paragraph{Remark.}
One may wonder whether the running time can be improved further to
\(O(n\log m)\), for example by incorporating approximate counting.
The main obstacle is point location. In each iteration, we must locate
all \(n\) points of \(B\) in a hierarchical cutting. In our symmetric case
algorithm, we were able to perform these point locations in constant
amortized time by using the \(\Gamma\)-framework on sufficiently small
subproblems. Here, however, the \(\Gamma\)-framework is no longer applicable. Consequently, each point location costs
\(O(\log m)\) time, forcing the running time of every iteration to be
\(O(n\log m)\).

\subsection{The case $\boldsymbol{n<m^2}$}

We now consider the case \(n<m^2\). As before, we follow a
prune-and-search framework and maintain an interval \(I\) containing
\(\delta^*\). Initially, \(I=(0,+\infty)\), and therefore
\(|S\cap I|=O(mn)\).

The algorithm consists of three main procedures.

\begin{enumerate}
\item Reduce the number of candidate distances from \(O(mn)\) to
      \(O(m^2)\).

\item Reduce the number of candidate distances from \(O(m^2)\) to
      \(O(m^{2/3}n^{2/3})\).

\item Enumerate the remaining candidate distances and compute
      \(\delta^*\).
\end{enumerate}

The first and third procedures are relatively straightforward. The main
technical challenge lies in the second procedure.

\paragraph{A standard subroutine.}

A recurring step in our algorithm is to apply Recurrence
\eqref{formula:PrimalRecurrence} with \(r=n/m\). We refer to this as
the \emph{standard subroutine}.
Substituting \(r=n/m\) into
Recurrence~\eqref{formula:PrimalRecurrence} yields
\begin{equation}
\label{equ:second10}
T(m,n)
=
O(n\log r)
+
O(r^2)\cdot T(q,q),
\qquad
q=\frac{m^2}{n}.
\end{equation}
The standard subroutine runs in \(O(n\log r)\) time and produces a
collection \(\Gamma=\{A_t\times B_t\}\) of bicliques whose total size
is \(O(n\log r)\).
Throughout the remainder of the section, we always use
$r=n/m$, $q=m^2/n$, and let \(N(\Gamma)\) denote the number of distances recorded by
\(\Gamma\).

Since \(m\le n<m^2\), we have \(r\le m\) and therefore
\(\log r=O(\log m)\). Consequently, the running time of the standard
subroutine is \(O(n\log m)\), which is the bound we will use in the following discussions.

The significance of Recurrence~\eqref{equ:second10} is that every recursive
subproblem has the form \(T(q,q)\). In other words, after one
application of the recurrence, the problem becomes symmetric.
Therefore, the tools developed in the previous sections (e.g., 
Theorems~\ref{thm:approx-count} and
\ref{thm:distance-representation}) can be applied directly to $T(q,q)$.

It is also useful to observe that
$r^2q^{4/3}
=
m^{2/3}n^{2/3}$,
which is precisely the complexity term appearing in
Theorem~\ref{thm:asy}.

\subsubsection{The first main procedure}

We begin with the following approximate counting lemma.

\begin{lemma}[Approximate Counting]
\label{lem:first-approx-count}
For any interval \(I\), one can compute in \(O(n\log m)\) time an
integer \(\hat\kappa(I)\) satisfying
$|\kappa(I)-\hat\kappa(I)|
\le
c_0\,m^2$
for some constant \(c_0>0\).
\end{lemma}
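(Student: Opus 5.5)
We apply the standard subroutine, i.e., Recurrence~\eqref{equ:second10} with \(r=n/m\), to \(A\), \(B\), and the interval \(I\). This builds a hierarchical \((1/r)\)-cutting for the \(m\) annuli \(\mathcal D\) centered at the points of \(A\). It computes the canonical subsets \(B(\sigma)\) and \(A_\sigma\) for all cells, and it outputs the certified bicliques \(\Gamma=\{A_\sigma\times B(\sigma)\}\) together with the uncertain subproblems. We set \(\hat\kappa(I)=N(\Gamma)=\sum_t|A_t||B_t|\). This sum can be evaluated in time linear in the total biclique size, which is \(O(n\log r)=O(n\log m)\). The whole computation therefore takes \(O(mr+n\log r)=O(n+n\log m)=O(n\log m)\) time, using \(mr=n\) and \(r\le m\) (the latter since \(n<m^2\)).

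For the error bound, condition 2 of Definition~\ref{prob:BRS} says every pair with distance in \(I\) lies either in a certified biclique or in an uncertain one. Condition 1 says every certified pair has distance in \(I\). Since the bicliques are edge-disjoint, we get
\[
0\le \kappa(I)-\hat\kappa(I)\le N(\Pi),
\]
where \(N(\Pi)\) is the number of pairs in the uncertain bicliques. Hence it suffices to show \(N(\Pi)=O(m^2)\).

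We bound \(N(\Pi)\) directly from the structure of the subroutine rather than through the recurrence's notation. Each leaf cell \(\sigma\in\Xi_k\) has \(|\hat A_\sigma|\le m/r\). Its set \(B(\sigma)\) is split into standard subsets, and the sets \(B(\sigma)\) over all leaf cells partition \(B\). The uncertain pairs are therefore at most
\[
\sum_{\sigma\in\Xi_k}|\hat A_\sigma|\,|B(\sigma)|\le \frac{m}{r}\cdot n=\frac{m}{n/m}\cdot n=m^2.
\]
The same bound follows from counting \(O(r^2)\) subproblems of size \(q\times q\), giving \(O(r^2q^2)=O(m^4/n\cdot n/m^2\cdot\ldots)=O(m^2)\). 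This yields the claimed bound with a suitable constant \(c_0\).

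The only delicate point is checking the parameter range: the cutting requires \(1\le r\le m\), which holds because \(m\le n<m^2\). The degenerate case \(r<\) some constant is harmless, since then \(n=O(m)\) and the trivial error bound \(mn=O(m^2)\) already suffices. I do not expect any real obstacle. Point location costs \(O(\log r)\) per point, but this is already within the \(O(n\log m)\) budget.
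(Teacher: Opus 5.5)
Your proposal is correct and follows the paper's proof: run the standard subroutine once, output $\hat\kappa(I)=N(\Gamma)$, and bound the error by the number of uncertain pairs. You count these directly as $\sum_{\sigma\in\Xi_k}|\hat A_\sigma||B(\sigma)|\le (m/r)\,n=m^2$, while the paper counts $O(r^2)$ subproblems of size $q\times q$; this is the same quantity, and your alternative ``same bound'' line is garbled, since the intended computation is simply $r^2q^2=(n/m)^2(m^2/n)^2=m^2$.
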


\begin{proof}
We invoke the standard subroutine and return
$\hat\kappa(I)
=
\sum_{A_t\times B_t\in\Gamma}
|A_t||B_t|$.
The running time is \(O(n\log m)\).

Clearly, $|\kappa(I)-\hat\kappa(I)|$ is bounded by the number of uncertain pairs recorded in the recursive subproblems of $T(q,q)$. 
There are \(O(r^2)\) such subproblems, each contributing at most \(O(q^2)\) pairs. Hence
$|\kappa(I)-\hat\kappa(I)|
=
O(r^2q^2)
=
O(m^2)$,
since \(r=n/m\) and \(q=m^2/n\).
\end{proof}

Lemma~\ref{lem:first-approx-count} immediately yields the following
quasi-decision lemma, whose proof is similar to that of
Theorem~\ref{thm:quasidecision}.

\begin{lemma}[Quasi-Decision]
\label{lem:first-quasidecision}
Given an interval \(I\) and an integer \(K\) with \(1\le K\le mn\),
there exists an \(O(n\log m)\)-time algorithm whose output has one of
the following three outcomes:
(1) the algorithm reports \(\kappa(I)<K\); (2) the algorithm reports \(\kappa(I)>K\); (3) the algorithm reports ``unsure'', in which case
      $|\kappa(I)-K|      \le       c_0\,m^2$           for some constant \(c_0>0\).
\end{lemma}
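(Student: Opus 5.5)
The plan is to mirror the proof of Theorem~\ref{thm:quasidecision} exactly, with Lemma~\ref{lem:first-approx-count} in place of Theorem~\ref{thm:approx-count}. First, invoke Lemma~\ref{lem:first-approx-count} on the given interval \(I\). This takes \(O(n\log m)\) time and returns an integer \(\hat\kappa(I)\) with \(|\kappa(I)-\hat\kappa(I)|\le c_0m^2\), where \(c_0\) is the constant from that lemma. We will use this same constant \(c_0\) in the statement of Lemma~\ref{lem:first-quasidecision}.

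Next, compare \(\hat\kappa(I)\) with \(K\) using the error window \(c_0m^2\):
\begin{itemize}
\item If \(\hat\kappa(I)+c_0m^2<K\), report \(\kappa(I)<K\). This is correct because \(\kappa(I)\le\hat\kappa(I)+c_0m^2<K\).
\item If \(\hat\kappa(I)-c_0m^2>K\), report \(\kappa(I)>K\). This is correct because \(\kappa(I)\ge\hat\kappa(I)-c_0m^2>K\).
\item Otherwise, report ``unsure''.
\end{itemize}

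In the ``unsure'' case we have \(|\hat\kappa(I)-K|\le c_0m^2\). The triangle inequality then gives \(|\kappa(I)-K|\le 2c_0m^2\). To match the stated form, rename \(2c_0\) as the constant. The statement only asks for some constant \(c_0>0\), so this is harmless. Alternatively, one could run the tests with window \(c_0m^2\) and state the ``unsure'' bound with \(2c_0\).

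The running time is one call to Lemma~\ref{lem:first-approx-count} plus \(O(1)\) comparisons, so it is \(O(n\log m)\) in total. There is no real obstacle here. The only care point is the constant bookkeeping just described. The same factor-two slack is silently absorbed in Theorem~\ref{thm:quasidecision}, and its later use in the middle-interval argument only needs the ``unsure'' bound up to a constant.
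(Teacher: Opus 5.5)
Your proposal is correct and is the paper's own argument: call Lemma~\ref{lem:first-approx-count}, then compare $\hat\kappa(I)$ against $K$ with a window of $c_0m^2$, exactly as in the proof of Theorem~\ref{thm:quasidecision}. You are also right that the ``unsure'' case only yields $|\kappa(I)-K|\le 2c_0m^2$, and renaming the constant is legitimate because the statement asserts the bound only for some constant $c_0>0$.
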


We now describe the first main procedure. Its goal is to reduce the
number of candidate distances from \(O(mn)\) to \(O(m^2)\) while
maintaining an interval \(I\) containing \(\delta^*\).

Initially, \(I=(0,+\infty)\), and therefore
\(|S\cap I|=O(mn)\). The algorithm proceeds in iterations. In each
iteration, which runs in $O(n\log m)$ time, the number of candidate distances is reduced by a constant
factor. After \(O(\log (n/m))\) iterations, we obtain an interval
\(I\) containing \(\delta^*\) such that \(|S\cap I|=O(m^2)\).

Since each iteration takes \(O(n\log m)\) time, the total running time
of the procedure is bounded by \(O(n\log m\log (n/m))\), which is $O(n\log^2 m)$ since $n<m^2$.

\paragraph{A general $\boldsymbol{j}$-th iteration.}

Suppose we are given an interval \(I_{j-1}\). The goal is to compute an
interval \(I_j\) such that either

\begin{enumerate}
\item \(|S\cap I_j|=O(m^2)\), in which case the first main procedure
      terminates; or

\item \(|S\cap I_j|\le \rho |S\cap I_{j-1}|\) for some constant
      \(\rho<1\).
\end{enumerate}

We begin by invoking the standard subroutine. The running time is
\(O(n\log m)\).
Recall that \(N(\Gamma)\) denote the number of pairs recorded by the bicliques
of \(\Gamma\). Let \(M\) denote the total number of uncertain pairs contained in the recursive subproblems \(T(q,q)\).
As shown in the proof of Lemma~\ref{lem:first-approx-count},
\(M=O(m^2)\). Thus we may assume that
$M\le c_1m^2$
for some constant \(c_1>0\).

If \(N(\Gamma)\le c_2m^2\) for a sufficiently large constant \(c_2\),
then $|S\cap I_{j-1}| = N(\Gamma)+M = O(m^2)$.
Hence, in this case, the first main procedure terminates and returns
\(I_{j-1}\).
Assume therefore that \(N(\Gamma)>c_2m^2\).

We next construct the expander graphs for the bicliques of \(\Gamma\)
and form a weighted distance set \(E\) exactly as in
Section~\ref{subsec:algodes}. Since the total size of the bicliques in
\(\Gamma\) is \(O(n\log m)\), this step takes \(O(n\log m)\) time.
We then partition \(E\) into \(O(1)\) basic intervals and use
Lemma~\ref{lem:first-quasidecision} in place of the quasi-decision
algorithm from Section~\ref{subsec:algodes}. This yields an interval
\(I_j\).

Using analysis similar to Section~\ref{subsec:algocor}, we can prove $|S\cap I_{j}|\leq \rho\cdot |S\cap I_{j-1}|$ for some constant $\rho<1$. Intuitively, we can make $c_2$ large enough so that $M$ and the term $c_0m^2$ in Lemma~\ref{lem:first-quasidecision} can be neglected compared to $c_2m^2$. For example, if we follow the same notation as in Section~\ref{subsec:algocor}, then we have $m_{j-1}=N(T)$ and $m'_{j-1}\leq M$. Since $M\leq c_1\,m^2$ and $N(\Gamma)>c_2\,m^2$, we have $m'_{j-1}\leq \frac{c_1}{c_2}\cdot m_{j-1}$. Since $n_{j-1}=m_{j-1}+m'_{j-1}$, it follows that  $m'_{j-1}\leq \frac{c_1}{c_1+c_2}\cdot n_{j-1}$. This inequality can be used to replace $m'_{j-1}\leq 0.6 n_{j-1}$ that was obtained by Theorem~\ref{thm:distance-representation} (note that here $\Gamma$ plays the role of Theorem~\ref{thm:distance-representation} and that is why we do not have an explicit distance representation theorem here). Similarly, we can prove $|S\cap \hat I|\leq 2c_0\,m^2$ as a counterpart to Lemma~\ref{lem:middleinterval}. 

Therefore, each iteration runs in \(O(n\log m)\) time and reduces the
number of candidate distances by a constant factor. After
\(O(\log(n/m))\) iterations, we obtain an interval \(I\) containing
\(\delta^*\) such that \(|S\cap I|=O(m^2)\).
The total running time of the first main procedure is
\(O(n\log^2 m)\).

\subsubsection{The second main procedure}

We now describe the second main procedure, which is the most
challenging part of the asymmetric distance selection algorithm.

Given an interval \(I\) containing \(\delta^*\) with
\(|S\cap I|=O(m^2)\), the goal is to shrink \(I\) so that
\(|S\cap I|=O(m^{2/3}n^{2/3})\) while maintaining
\(\delta^*\in I\).

We begin with the following approximate counting lemma.

\begin{lemma}[Approximate Counting]
\label{lem:second-approx-count}
Let \(I\) be an interval. For every integer \(i\) with
\(1\le i\le O(\log^* q)\), one can compute an integer
\(\hat\kappa(I)\) such that
\[
|\kappa(I)-\hat\kappa(I)|
\le
c_0^*\,m^{2/3}n^{2/3}(\log^{(i)}q)^4
\]
for some constant \(c_0^*>0\).
The running time is
\(O(n\log m+m^{2/3}n^{2/3}/\log^{(i)}q)\) for \(i=1,2\), and
\(O(n\log m+m^{2/3}n^{2/3}/(\log^{(i)}q)^2)\) for \(i\ge3\).
\end{lemma}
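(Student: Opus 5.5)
The plan is to reduce to the symmetric machinery via the standard subroutine. First apply Recurrence~\eqref{equ:second10} with \(r=n/m\). In \(O(n\log m)\) time this produces a collection \(\Gamma\) of certified bicliques and \(O(r^2)\) uncertain subproblems, each of the form \(T(q,q)\) with \(q=m^2/n\). We initialize \(\hat\kappa(I)=\sum_{A_t\times B_t\in\Gamma}|A_t||B_t|\), which counts exactly the certified pairs at this level. The remaining error comes only from the pairs with distance in \(I\) inside the uncertain subproblems. It therefore suffices to approximately count \(\kappa\) for each subproblem \(T(q,q)\) and add the estimates to \(\hat\kappa(I)\).

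For each subproblem \((A'_s,B'_s)\) with \(|A'_s|,|B'_s|\le q\), we run the algorithm of Theorem~\ref{thm:approx-count} with \(n\) replaced by \(q\) and the same index \(i\) (now \(1\le i\le O(\log^* q)\)). This returns an estimate with error at most \(c_0 q^{4/3}(\log^{(i)}q)^4\). The running time is \(O(q^{4/3}/\log^{(i)}q)\) for \(i=1,2\) and \(O(q^{4/3}/(\log^{(i)}q)^2)\) for \(i\ge 3\). Subproblems smaller than \(q\) can be padded conceptually, or handled by monotonicity of the bounds.

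Summing over the \(O(r^2)\) subproblems, and using the identity \(r^2q^{4/3}=m^{2/3}n^{2/3}\), gives total error \(O(m^{2/3}n^{2/3}(\log^{(i)}q)^4)\). The total time is \(O(n\log m)\) plus \(O(m^{2/3}n^{2/3}/\log^{(i)}q)\) or \(O(m^{2/3}n^{2/3}/(\log^{(i)}q)^2)\), respectively. This is exactly the claimed bound with a suitable constant \(c_0^*\).

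The main obstacle is the case \(i\ge3\), which inside Theorem~\ref{thm:approx-count} relies on the \(\Gamma\)-framework for \(O(1)\) amortized point location. That step requires a one-time preprocessing cost (stated as \(O(n^{5/4})\) for the symmetric instance) and a sufficiently small tiny-subproblem size \(\log^6\log q\). Here the framework must be invoked across all \(O(r^2)\) subproblems simultaneously. I would argue the preprocessing is shared: it depends only on the tiny size \(\log^6\log q\le\log^6\log m\), so the lookup tables have size \(2^{\mathrm{poly}(\log^6\log m)}=o(m)\) and are built once, in time absorbed by \(O(n\log m)\). The amortized constant-time bound then holds over all subproblems together. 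A secondary point is that Theorem~\ref{thm:approx-count}'s first two levels use \(r=q^{1/3}/\log^2 q\), so for very small \(q\) (say \(q=O(1)\)) we instead count each subproblem by brute force in \(O(q^2)\) time. This is within the bounds because then \(r^2q^2=O(m^2)\) and \(m^2=O(m^{2/3}n^{2/3})\) whenever \(q=O(1)\).
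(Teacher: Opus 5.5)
Your proposal is correct and follows the paper's proof essentially step for step. You run the standard subroutine to obtain the certified count $N(\Gamma)$, apply Theorem~\ref{thm:approx-count} with the same $i$ to each of the $O(r^2)$ subproblems $T(q,q)$, and sum the errors and running times using $r^2q^{4/3}=m^{2/3}n^{2/3}$. Your additional remarks (sharing the $\Gamma$-framework preprocessing across subproblems, and brute-forcing the case $q=O(1)$) cover details the paper leaves implicit, and they do not change the argument.
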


\begin{proof}
We only discuss the case \(i\ge3\); the cases \(i=1,2\) are analogous.

We first invoke the standard subroutine in \(O(n\log m)\) time and add
the number of pairs recorded by the resulting bicliques \(\Gamma\) to
\(\hat\kappa(I)\).

For each recursive subproblem \(T(q,q)\) in Recurrence~\eqref{equ:second10}, it is for a subset $A'\subseteq A$ and $B'\subseteq B$ with $|A'|\leq q$ and $|B'|\leq q$. Let \(\kappa'(I)\) denote the number of distances between
\(A'\) and \(B'\) lying in \(I\). 
Apply Theorem~\ref{thm:approx-count} to \((A',B')\) with $i$ and $I$ yields a number $\hat \kappa'(I)$ that approximates $\kappa'(I)$ with error bounded by $c_0\,q^{4/3}(\log^{(i)}q)^4$. The runtime is $O(q^{4/3}/(\log^{(i)}q)^2)$. 
We add $\hat\kappa'(I)$ to $\hat\kappa(I)$. 

Since there are \(O(r^2)\) subproblems, the total error is bounded by
\[
O(r^2)\cdot c_0q^{4/3}(\log^{(i)}q)^4
=
O(m^{2/3}n^{2/3}(\log^{(i)}q)^4),
\]

where we use the identity
\(r^2q^{4/3}=m^{2/3}n^{2/3}\).

Similarly, the total time spent on all subproblems is

\[
O(r^2)\cdot
O\!\left(
\frac{q^{4/3}}{(\log^{(i)}q)^2}
\right)
=
O\!\left(
\frac{m^{2/3}n^{2/3}}
     {(\log^{(i)}q)^2}
\right).
\]

Adding the \(O(n\log m)\) time of the standard subroutine completes the
proof.
\end{proof}

Lemma~\ref{lem:second-approx-count} immediately yields the following
quasi-decision lemma, whose proof is similar to that of
Theorem~\ref{thm:quasidecision}.

\begin{lemma}[Quasi-Decision]
\label{lem:second-quasidecision}
Given an interval \(I\) and an integer \(K\) with
\(1\le K\le mn\), for every \(i\) with
\(1\le i\le O(\log^*q)\), there exists an algorithm whose output has
one of the following three outcomes:
(1) the algorithm reports \(\kappa(I)<K\); (2) the algorithm reports \(\kappa(I)>K\); (3) the algorithm reports ``unsure'', in which case
      $|\kappa(I)-K|      \le      c_0^*\,m^{2/3}n^{2/3}(\log^{(i)}q)^4$    for some constant \(c_0^*>0\).
The running time is
\(O(n\log m+m^{2/3}n^{2/3}/\log^{(i)}q)\) for \(i=1,2\), and
\(O(n\log m+m^{2/3}n^{2/3}/(\log^{(i)}q)^2)\) for \(i\ge3\).
\end{lemma}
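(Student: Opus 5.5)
The proof mirrors that of Theorem~\ref{thm:quasidecision}, with Lemma~\ref{lem:second-approx-count} taking the place of Theorem~\ref{thm:approx-count}. Fix $i$ with $1\le i\le O(\log^* q)$ and set $\Delta^*=c_0^*\,m^{2/3}n^{2/3}(\log^{(i)}q)^4$, where $c_0^*$ is the constant of Lemma~\ref{lem:second-approx-count}.

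First, apply Lemma~\ref{lem:second-approx-count} to $I$ with parameter $i$ to obtain an integer $\hat\kappa(I)$ with $|\kappa(I)-\hat\kappa(I)|\le\Delta^*$. This costs $O(n\log m+m^{2/3}n^{2/3}/\log^{(i)}q)$ time for $i=1,2$ and $O(n\log m+m^{2/3}n^{2/3}/(\log^{(i)}q)^2)$ time for $i\ge 3$. These are exactly the claimed bounds.

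Second, compare $\hat\kappa(I)$ with $K$ in $O(1)$ additional time:
\begin{enumerate}
\item If $\hat\kappa(I)+\Delta^*<K$, report $\kappa(I)<K$. This is correct because $\kappa(I)\le\hat\kappa(I)+\Delta^*$.
\item If $\hat\kappa(I)-\Delta^*>K$, report $\kappa(I)>K$. This is correct because $\kappa(I)\ge\hat\kappa(I)-\Delta^*$.
\item Otherwise, report ``unsure''. In this case $|\hat\kappa(I)-K|\le\Delta^*$, so the triangle inequality gives $|\kappa(I)-K|\le 2\Delta^*$.
\end{enumerate}
Renaming the constant $c_0^*$ to $2c_0^*$ in the statement yields the required bound for the third outcome.

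There is no real obstacle here; all the work lies in Lemma~\ref{lem:second-approx-count}. The one point to watch is the factor $2$ in the ``unsure'' case. It must be absorbed into $c_0^*$, which is harmless because the lemma only asserts the existence of some constant $c_0^*>0$.
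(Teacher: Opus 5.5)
Your proof is correct and follows the paper's approach: the paper says only that this lemma follows from Lemma~\ref{lem:second-approx-count} in the same way Theorem~\ref{thm:quasidecision} follows from Theorem~\ref{thm:approx-count}, which is exactly your threshold comparison. You also absorb the factor $2$ from the triangle inequality in the ``unsure'' case into $c_0^*$, which the paper's version leaves implicit, and you handle it correctly.
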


We now describe the second main procedure. The framework is identical
to that of Section~\ref{sec:selection}.

Recall that \(|S\cap I|=O(m^2)\) and \(\delta^*\in I\).
The algorithm proceeds in phases.
For \(i=1,2,\ldots,O(\log^*q)\), the goal of the \(i\)-th phase is to
transform an interval \(I\) satisfying
\[
|S\cap I|
=
O(m^{2/3}n^{2/3}(\log^{(i-1)}q)^4)
\]
(and \(O(m^2)\) when \(i=1\)) into a smaller interval
\(I'\subseteq I\) satisfying
\[
|S\cap I'|
=
O(m^{2/3}n^{2/3}(\log^{(i)}q)^4)
\]
while maintaining \(\delta^*\in I'\).

Initially, \(|S\cap I|=O(m^2)\). After \(O(\log^*q)\) phases, we
obtain an interval \(I\) satisfying
\(|S\cap I|=O(m^{2/3}n^{2/3})\), which is precisely the goal of the
second main procedure.
Each phase consists of \(O(\log^{(i)}q)\) iterations. Each iteration
runs in
\(O(n\log m+m^{2/3}n^{2/3}/\log^{(i)}q)\) time for \(i=1,2\), and in
\(O(n\log m+m^{2/3}n^{2/3}/(\log^{(i)}q)^2)\) time for \(i\ge3\).
Consequently, the total running time over all phases is
\[
O(n\log m)\cdot 
\sum_{i=1}^{O(\log^*q)}
\log^{(i)}q
+
O(m^{2/3}n^{2/3})\cdot 
\left(
1+
\sum_{i=3}^{O(\log^*q)}
\frac1{\log^{(i)}q}
\right),
\]
which is
\[
O(n\log m\log q)
+
O(m^{2/3}n^{2/3})
=
O(n\log^2 m+m^{2/3}n^{2/3}),
\]
since \(q=m^2/n\le m\).

\paragraph{Algorithm description.}
We describe a general \(i\)-th phase.
Before the phase begins, we are given an interval \(I\) satisfying
$|S\cap I|
=
O(m^{2/3}n^{2/3}(\log^{(i-1)}q)^4)$
and containing \(\delta^*\). The goal is to compute a subinterval
\(I'\subseteq I\) such that
$|S\cap I'|
=
O(m^{2/3}n^{2/3}(\log^{(i)}q)^4)$
and \(\delta^*\in I'\).

As in Section~\ref{sec:selection}, the phase proceeds in iterations.
At the beginning of the \(j\)-th iteration, we are given an interval
\(I_{j-1}\) containing \(\delta^*\), with \(I_0=I\).
Each iteration either

\begin{enumerate}
\item produces a smaller interval \(I_j\subseteq I_{j-1}\) such that
      \(\delta^*\in I_j\) and
      \(|S\cap I_j|\le \rho |S\cap I_{j-1}|\) for some constant
      \(\rho<1\); or

\item produces a smaller interval \(I_j\subseteq I_{j-1}\) such that
      \(\delta^*\in I_j\) and
      \(|S\cap I_j|
      \le
      c^*m^{2/3}n^{2/3}(\log^{(i)}q)^4\)
      for a sufficiently large constant \(c^*\).
\end{enumerate}

Since the phase starts with
\(|S\cap I|
=
O(m^{2/3}n^{2/3}(\log^{(i-1)}q)^4)\),
the first outcome can occur only \(O(\log^{(i)}q)\) times before the
second outcome must occur.\footnote{Note that for \(i=1\), the number of
candidate distances decreases from \(O(m^2)\) to
\(O(m^{2/3}n^{2/3}(\log q)^4)\). Since
\(m^2/(m^{2/3}n^{2/3}(\log q)^4)
=
q^{2/3}/(\log q)^4\),
this requires only \(O(\log q)\) iterations.}
Unlike the symmetric case algorithm, we do not have an efficient test for
whether the second outcome has already occurred. Instead, we simply run
the phase for \(\hat c\log^{(i)}q\) iterations, where \(\hat c\) is a
sufficiently large constant depending on \(\rho\) and \(c^*\).

In the following, we describe a single iteration. For simplicity, we
assume \(i\ge3\). The cases \(i=1,2\) are identical except for the
running-time bounds.
We assume that 
\[
|S\cap I_{j-1}| \ge c^*m^{2/3}n^{2/3}(\log^{(i)}q)^4,
\] 
since otherwise the second outcome already holds.

We first invoke the standard subroutine in \(O(n\log m)\) time,
obtaining a collection \(\Gamma\) of bicliques.
We use index $p$ to refer to a recursive subproblem $T(q,q)$ in Recurrence~\eqref{equ:second10}. 
For each subproblem \(p\), it is associated with a subset $A_p\subseteq A$ and $B_p\subseteq B$ with $|A_p|\leq q$ and $|B_p|\leq q$. Let \(S_p\) denote the set of interpoint distances between
\(A_p\) and \(B_p\). Depending on the context, \(S_p\) may also refer
to the corresponding point pairs.

Let $S_0$ denote the set of distances of $S$ that are not in $S_p$ for any subproblem $p$. 
Then \(S_0\) and the sets \(S_p\) for all subproblems $p$ form a partition of
\(S\). Moreover, \(S_0\cap I_{j-1}\) is precisely the set of distances
recorded by \(\Gamma\).

For every subproblem \(p\), we apply Theorem~\ref{thm:distance-representation} to \((A_p,B_p)\) and \(I_{j-1}\) with parameter \(i\).
This produces a collection of bicliques \(\Gamma_p\) in \(O(q^{4/3}/(\log^{(i)}q)^2)\) time. Since there are \(O(r^2)\) subproblems, the total running time is $O(r^2)\cdot O({q^{4/3}}/{(\log^{(i)}q)^2})=O({m^{2/3}n^{2/3}}/{(\log^{(i)}q)^2})$.

Let \(N(\Gamma_p)\) denote the number of pairs recorded by
\(\Gamma_p\).

We call a subproblem \(p\) \emph{bounded} if
$N(\Gamma_p)
\le
c_1q^{4/3}(\log^{(i)}q)^4$,
where \(c_1\) is the constant from
Theorem~\ref{thm:distance-representation}; otherwise \(p\) is
\emph{unbounded}.
Let \(P_1\) and \(P_2\) denote the sets of bounded and unbounded
subproblems, respectively.
Define
\(S_1=\bigcup_{p\in P_1}S_p\),
\(S_2=\bigcup_{p\in P_2}S_p\), and
\(S_{02}=S_0\cup S_2\).
Then \(S_{02}\) and \(S_1\) form a partition of \(S\).

For every bounded subproblem \(p\in P_1\),
Lemma~\ref{lem:bounded} implies that
\(|S_p\cap I_{j-1}|
=
O(q^{4/3}(\log^{(i)}q)^4)\).
Since there are \(O(r^2)\) bounded subproblems,
$|S_1\cap I_{j-1}|
=
O(m^{2/3}n^{2/3}(\log^{(i)}q)^4)$.
Hence, for some constant \(c'>0\),
\[
|S_1\cap I_{j-1}|
\le
c'm^{2/3}n^{2/3}(\log^{(i)}q)^4.
\]
Combining this with the assumption on
\(|S\cap I_{j-1}|\) yields
\begin{equation}
\label{equ:second20}
|S_1\cap I_{j-1}|
\le
\frac{c'}{c^*}|S\cap I_{j-1}|.
\end{equation}

Observe that \(S_1\) plays exactly the role of the distances of the unresolved pairs in
the symmetric analysis.

For the bicliques in \(\Gamma\) and in all collections
\(\Gamma_p\) with \(p\in P_2\), we construct expander graphs and form a
weighted distance set \(E\) exactly as in
Section~\ref{subsec:algodes}.
The total size of the bicliques in \(\Gamma\) is \(O(n\log m)\), while
the total size of all bicliques in \(\Gamma_p\) over
all \(p\in P_2\) is
$O(r^2)\cdot O(q^{4/3}/(\log^{(i)}q)^2)=O(m^{2/3}n^{2/3}/(\log^{(i)}q)^2)$.
Hence, constructing all expander graphs and the set \(E\) takes
$O(n\log m + m^{2/3}n^{2/3}/(\log^{(i)}q)^2)$ time.

Finally, we test the values of \(E\) exactly as in
Section~\ref{subsec:algodes}, using
Lemma~\ref{lem:second-quasidecision} in place of
Theorem~\ref{thm:quasidecision}. This yields an interval \(I_j\).

Since only \(O(1)\) quasi-decision calls are required, the total
running time of the iteration remains
$O(n\log m+m^{2/3}n^{2/3}/(\log^{(i)}q)^2)$.


\paragraph{Algorithm Correctness.}

Using an argument analogous to the proof of
Lemma~\ref{lem:prune}, one can show that
\begin{equation}
\label{equ:second30}
|S_{02}\cap I_j|
\le
\rho'|S_{02}\cap I_{j-1}|
\end{equation}

for some constant \(\rho'<1\). The only additional ingredient is that,
for every unbounded subproblem \(p\in P_2\), by definition, $N(\Gamma_p)> c_1q^{4/3}(\log^{(i)}q)^4$. 
Since $|S_p\cap I_{j-1}|\geq N(\Gamma_p)$, we have $|S_p\cap I_{j-1}| > c_1q^{4/3}(\log^{(i)}q)^4$. Consequently, 
$N(\Gamma_p)
\ge
0.4\,|S_p\cap I_{j-1}|$
by Theorem~\ref{thm:distance-representation}.

We now prove that
\[
|S\cap I_j|
\le
\rho\,|S\cap I_{j-1}|
\]
for some constant \(\rho<1\).

Since \(S_{02}\) and \(S_1\) form a partition of \(S\),
\[
|S\cap I_j|
=
|S_1\cap I_j|
+
|S_{02}\cap I_j|.
\]

Because \(I_j\subseteq I_{j-1}\), we have
\(S_1\cap I_j\subseteq S_1\cap I_{j-1}\).
Therefore,
\[
|S\cap I_j|
\le
|S_1\cap I_{j-1}|
+
|S_{02}\cap I_j|.
\]

Combining \eqref{equ:second20} and \eqref{equ:second30} yields

\[
|S\cap I_j|
\le
\frac{c'}{c^*}|S\cap I_{j-1}|
+
\rho'|S_{02}\cap I_{j-1}|
\le
\left(
\frac{c'}{c^*}
+
\rho'
\right)
|S\cap I_{j-1}|.
\]

Since \(\rho'<1\), choosing \(c^*\) sufficiently large and defining
$\rho = \frac{c'}{c^*} +\rho'$
gives \(\rho<1\). This proves the first outcome of the iteration.

The above analysis assumes that
$|S\cap I_{j-1}|
\ge
c^*m^{2/3}n^{2/3}(\log^{(i)}q)^4$.
Otherwise, we simply run the same algorithm, which obtains an interval
\(I_j\subseteq I_{j-1}\) containing \(\delta^*\).
Since
$|S\cap I_j|
\le
|S\cap I_{j-1}|
\le
c^*m^{2/3}n^{2/3}(\log^{(i)}q)^4$,
the second outcome holds automatically.

Therefore, in each iteration, we simply run the same algorithm as above. Every iteration either reduces the number of candidate
distances by a constant factor or reaches the target bound
\(O(m^{2/3}n^{2/3}(\log^{(i)}q)^4)\).

\paragraph{Summary.}

We conclude that, within
\(O(n\log^2 m+m^{2/3}n^{2/3})\) time, we can compute an interval
\(I\) containing \(\delta^*\) such that
$|S\cap I|
=
O(m^{2/3}n^{2/3})$.

\subsubsection{The distance-reporting procedure}

We are given an interval \(I\) containing \(\delta^*\) such that
$|S\cap I|
=
O(m^{2/3}n^{2/3})$.
The goal of this procedure is to compute \(\delta^*\).

Assume that \(I=(\alpha,\beta]\). We first compute \(K_\alpha\), the
number of distances in \(S\) that are at most \(\alpha\). This can be
done in
\(O(m\log n+n\log m+m^{2/3}n^{2/3})\) time~\cite{ref:WangUn23}. Consequently, \(\delta^*\) is the
\((K-K_\alpha)\)-th smallest distance in \(S\cap I\).
Thus it suffices to enumerate all distances in \(S\cap I\) and then
apply the linear-time selection algorithm. The latter step takes
\(O(m^{2/3}n^{2/3})\) time since
\(|S\cap I|=O(m^{2/3}n^{2/3})\).

The remaining task is therefore to enumerate all distances in
\(S\cap I\). For this, we have the following more general result. 

\begin{lemma}
Given any interval \(I\), all distances in \(S\cap I\) can be
enumerated in
\(O(n\log m+m^{2/3}n^{2/3}+|S\cap I|)\) time.
\end{lemma}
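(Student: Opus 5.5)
We reduce to the symmetric case with the standard subroutine and then enumerate inside each symmetric subproblem using Corollary~\ref{cor:outputdistance}. Throughout we are in the case \(m\le n<m^2\), with \(r=n/m\) and \(q=m^2/n\).

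\emph{Step 1.} Apply the standard subroutine, i.e., Recurrence~\eqref{equ:second10} with \(r=n/m\), to the interval \(I\). This takes \(O(n\log m)\) time. It produces a collection \(\Gamma\) of certified bicliques with \(\sum_t(|A_t|+|B_t|)=O(n\log m)\), together with \(O(r^2)\) uncertain subproblems \((A_p,B_p)\) satisfying \(|A_p|,|B_p|\le q\). Every pair recorded by \(\Gamma\) has distance in \(I\). Every pair of \(A\times B\) with distance in \(I\) is either recorded by \(\Gamma\) or lies in exactly one \(A_p\times B_p\). Hence \(S\cap I\) is the disjoint union of the distances recorded by \(\Gamma\) and the sets \(S_p\cap I\).

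\emph{Step 2.} Enumerate the distances recorded by \(\Gamma\) by listing \(\|ab\|\) for all \(a\in A_t\) and \(b\in B_t\). The cost is \(O(\sum_t(|A_t|+|B_t|)+\sum_t|A_t||B_t|)\). The first sum is \(O(n\log m)\). The second sum is at most \(|S\cap I|\), because the bicliques are edge-disjoint and every recorded pair has distance in \(I\).

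\emph{Step 3.} For each subproblem \(p\), pad \(A_p\) or \(B_p\) with dummy points if needed so that both sets have size \(q\), or simply note that the symmetric machinery only needs both sizes to be at most \(q\). Then apply Corollary~\ref{cor:outputdistance} to \((A_p,B_p)\) and \(I\). This enumerates \(S_p\cap I\) in \(O(q^{4/3}+|S_p\cap I|)\) time. The main thing to check is that the \(\Gamma\)-framework preprocessing used inside Theorem~\ref{thm:implicitbiclique} does not charge a non-scaling cost to each subproblem; for instance, the \(O(n^{5/4})\) preprocessing mentioned in Section~\ref{subsec:quasi-large} must not be paid \(O(r^2)\) times. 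I would handle this by doing the table-lookup preprocessing once globally. The tiny subproblems at the bottom of the recursion have size \(\mathrm{poly}(\log\log q)\), so a single preprocessing step costing \(o(m^{2/3}n^{2/3})\) suffices for all of them. Each subproblem then costs \(O(q^{4/3})\) plus its output size, exactly as in the proof of Theorem~\ref{thm:implicitbiclique}.

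\emph{Summing up.} Over all subproblems, Step~3 costs
\[
O(r^2q^{4/3})+\sum_p|S_p\cap I|=O(m^{2/3}n^{2/3})+\sum_p|S_p\cap I|,
\]
using the identity \(r^2q^{4/3}=m^{2/3}n^{2/3}\). Adding Steps~1 and~2, and noting that the pieces are disjoint so the output terms add up to \(|S\cap I|\), the total running time is \(O(n\log m+m^{2/3}n^{2/3}+|S\cap I|)\). The only delicate point is the amortization of the \(\Gamma\)-framework preprocessing across the \(O(r^2)\) subproblems; the rest is bookkeeping.
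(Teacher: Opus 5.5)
Your proposal is correct and follows the paper's proof: run the standard subroutine, report the $\Gamma$-distances directly, and apply Corollary~\ref{cor:outputdistance} to each of the $O(r^2)$ symmetric subproblems, using $r^2q^{4/3}=m^{2/3}n^{2/3}$. The preprocessing issue you raise is harmless even without global sharing, because a per-subproblem cost of $O(q^{5/4})$ sums to $O(r^2q^{5/4})=O(m^{2/3}n^{2/3})$. The dummy-point padding option is unnecessary and, unless the dummies are placed carefully, would add spurious pairs in $I$; simply applying the symmetric machinery to sets of size at most $q$ suffices.
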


\begin{proof}
We first invoke the standard subroutine in \(O(n\log m)\) time,
obtaining a collection \(\Gamma\) of bicliques. All distances recorded
by \(\Gamma\) are then reported explicitly, requiring constant time per
output distance.

For each recursive subproblem \(T(q,q)\) in Recurrence~\eqref{equ:second10},
it is associated with a subset $A'\subseteq A$ and $B'\subseteq B$ with $|A'|\leq q$ and $|B'|\leq q$.
By Corollary~\ref{cor:outputdistance}, all distances between
\(A'\) and \(B'\) lying in \(I\) can be enumerated in
\(O(q^{4/3})\) time, in addition to constant time per output distance.
Since there are \(O(r^2)\) subproblems, the total time spent on all
subproblems is
$O(r^2q^{4/3}) = O(m^{2/3}n^{2/3})$,
again in addition to constant time per output distance.

Combining the contributions of \(\Gamma\) and all recursive
subproblems proves the lemma.
\end{proof}

Applying the above lemma and using the bound
\(|S\cap I|=O(m^{2/3}n^{2/3})\), all distances in \(S\cap I\) can be
enumerated in
\(O(n\log m+m^{2/3}n^{2/3})\) time. A final linear-time selection then
computes \(\delta^*\) within the same asymptotic bound.

\subsubsection{Wrapping things up}

We have shown that, for the case \(n<m^2\), the first main procedure
requires \(O(n\log^2 m)\) time, the second main procedure requires
\(O(n\log^2 m+m^{2/3}n^{2/3})\) time, and the distance-reporting
procedure requires \(O(n\log m+m^{2/3}n^{2/3})\) time.
Therefore, the overall running time is
$O(n\log^2 m+m^{2/3}n^{2/3})$.
This establishes Theorem~\ref{thm:asy} for the case \(m\le n<m^2\).

Combining this with the case \(n\ge m^2\), we conclude that if
\(m\le n\), then the distance selection problem can be solved in
$O(n\log^2 m+m^{2/3}n^{2/3})$
time.

If \(m>n\), the roles of \(A\) and \(B\) can be exchanged, yielding a
running time of
$O(m\log^2 n+m^{2/3}n^{2/3})$.

Combining the two bounds proves Theorem~\ref{thm:asy}.

\bibliographystyle{plainurl}
\bibliography{reference}

@STRING{STOC= {ACM Symposium on Theory of Computing}}

@STRING{ESA= {European Symposia on Algorithms}}

@STRING{ALGORITHMICA="Algorithmica"}

@inproceedings{ref:WangAn26,
    author= "Haitao Wang",
    title= "An Optimal Algorithm for Computing Many Faces in Line Arrangements",
    booktitle="Proceedings of the 42nd International Symposium on Computational Geometry (SoCG)",
    year= "2026",
    pages= "95:1--95:14",
    doi = "10.4230/LIPIcs.SoCG.2026.95",
}

@inproceedings{ref:ChanFa25,
  author       = {Timothy M. Chan and
                  Zhengcheng Huang},
  title        = {Faster Algorithms for Reverse Shortest Path in Unit-Disk Graphs and
                  Related Geometric Optimization Problems: Improving the Shrink-And-Bifurcate
                  Technique},
  booktitle    = {Proceedings of the 41st International Symposium on Computational Geometry (SoCG)},  
  pages        = {32:1--32:14},  
  year         = {2025},
  doi          = {10.4230/LIPICS.SOCG.2025.32},  
}

@article{ref:WangCo23,
    author= "Haitao Wang",
    title = "Constructing Many Faces in Arrangements of Lines and Segments",
    journal= "Journal of Computational Geometry",
    volume=  "14",
    number= "",
    pages= "287--308",
    year= "2023",
    doi = "10.20382/jocg.v14i1a11",
}

@article{ref:WangUn23,
    author= "Haitao Wang",
    title = "Unit-disk range searching and applications",
    journal= "Journal of Computational Geometry",
    volume=  "14",
    number= "",
    pages= "343--394",
    year= "2023",
    doi = "10.20382/jocg.v14i1a13",
}

@article{ref:WangRe23,
    author= "Haitao Wang and Yiming Zhao",
    title = "Reverse Shortest Path Problem for Unit-Disk Graphs",
    journal= "Journal of Computational Geometry",
    volume=  "14",
    number= "",
    pages= "14--47",
    year= "2023",
    doi = "10.20382/jocg.v14i1a2",
}

@inproceedings{ref:KaplanTh23,
    author= "Haim Kaplan and Matthew J. Katz and Rachel Saban and Micha Sharir",
    title= "The Unweighted and Weighted Reverse Shortest Path Problem for Disk Graphs",
    booktitle="Proceedings of the 31st Annual European Symposium on Algorithms (ESA)",
    year= "2023",
    pages= "67:1--67:14",
    doi = "10.4230/LIPIcs.ESA.2023.67",
}

@article{ref:BespamyatnikhCo03,
  author    = {Sergei Bespamyatnikh},
  title     = {Computing Closest Points for Segments},
  journal   = {International Journal of Computational Geometry and Application},
  volume    = {13},
  number    = {},
  pages     = {419--438},
  year      = {2003},
  doi = {10.1142/S0218195903001268},
}

@article{ref:ChanHo23,
    author= "Timothy M. Chan and Da Wei Zheng",
    title = "Hopcroft's Problem, Log-Star Shaving, {2D} Fractional Cascading, and Decision Trees",
    journal= "ACM Transactions on Algorithms",
    volume=  "20",
    number= "3",
    pages= "24:1--24:27",
    year= "2023",
    doi = "10.1145/3591357",
}

@article{ref:WangIm25,
  author    = {Haitao Wang and Yiming Zhao},
  title     = {Improved algorithms for distance selection and related problems},
  journal   = {Algorithmica},
  volume    = {87},
  number    = {},
  pages     = {908--929},
  year      = {2025},
  doi = {10.1007/s00453-025-01305-z},
}

@article{ref:AgarwalCo98,
    author= "Pankaj K. Agarwal and Ji\v{r}\'{i} Matou\v{s}ek and Otfried Schwarzkopf",
    title= "Computing Many Faces in Arrangements of Lines and Segments",
    journal="SIAM Journal on Computing",
    year= "1998",
    pages= "491--505",
    volume="27",
    doi= "10.1137/S009753979426616X",
}

@article{ref:EdelsbrunnerTh90,
    author= "Herbert Edelsbrunner and Leonidas J. Guibas and Micha Sharir",
    title= "The Complexity and Construction of Many Faces in Arrangement of Lines and of Segments",
    journal="Discrete and Computational Geometry",
    year= "1990",
    pages= "161--196",
    volume= "5",
    doi = "10.1007/BF02187784",
}

@article{ref:ClarksonCo90,
    author= "Kenneth L. Clarkson and Herbert Edelsbrunner and Leonidas J. Guibas and Micha Sharir and Emo Welzl",
    title= "Combinatorial Complexity Bounds for Arrangement of Curves and Spheres",
    journal="Discrete and Computational Geometry",
    year= "1990",
    pages= "99--160",
    volume="5",
    doi= "10.1007/BF02187783",
}

@Article{ref:AgarwalPa902,
    author= "Pankaj K. Agarwal",
    title = "Partitioning Arrangements of Lines {II}: Applications",
    journal= "Discrete and Computational Geometry",
    volume=  "5",
    number= "",
    pages= "533--573",
    year= "1990",
    doi="10.1007/BF02187809",
}

@article{ref:EricksonNe96,
    author= "Jeff Erickson",
    title= "New Lower Bounds for {Hopcroft's} Problem",
    journal="Discrete and Computational Geometry",
    volume="16",
    year= "1996",
    pages= "389--418",
    doi= "10.1007/BF02712875",
}

@article{ref:AgarwalSe93,
    author= "Pankaj K. Agarwal and Boris Aronov and Micha Sharir and Subhash Suri",
    title= "Selecting distances in the plane",
    journal="Algorithmica",
    volume="9",
    year= "1993",
    pages= "495--514",
    doi = "10.1007/BF01187037", 
}

@inproceedings{ref:GoodrichGe93,
    author= "Michael T. Goodrich",
    title= "Geometric partitioning made easier, even in parallel",
    booktitle="Proceedings of the 9st Annual Symposium on Computational Geometry (SoCG)",
    year= "1993",
    pages= "73--82",
    doi= "10.1145/160985.161002",
}

@inproceedings{ref:ChazelleNe85,
    author= "Bernard Chazelle",
    title= "New techniques for computing order statistics in {Euclidean} space",
    booktitle="Proceedings of the 1st Annual Symposium on Computational Geometry (SoCG)",
    year= "1985",
    pages= "125--134",
    doi= "10.1145/323233.323251"
}

@article{ref:ChanOn01,
    author= "Timothy M. Chan",
    title= "On Enumerating and Selecting Distances",
    journal="International Journal of Computational Geometry and Application",
    volume="11",
    year= "2001",
    pages= "291--304",
    doi= "10.1142/S0218195901000511"
}

@article{ref:FloydEx75,
    author= "Robert W. Floyd and Ronald L. Rivest",
    title= "Expected time bounds for selection",
    journal="Communications of the ACM",
    volume="18",
    year= "1975",
    pages= "165--172",
    doi= "10.1145/360680.360691",
}

@article{ref:AvrahamTh15,
  title={The discrete and semicontinuous {F}r{\'e}chet distance with shortcuts via approximate distance counting and selection},
  author={Rinat B. Avraham and Omrit Filtser and Haim Kaplan and Matthew J. Katz and Micha Sharir},
  journal={ACM Transactions on Algorithms},
  volume={11},
  number={4},
  pages={Article No. 29},
  year={2015},
  doi={10.1145/2700222},
}

@inproceedings{ref:KatzNe24,
  author       = {Matthew J. Katz and
                  Rachel Saban and
                  Micha Sharir},
  title        = {Near-Linear Algorithms for Visibility Graphs over a 1.5-Dimensional
                  Terrain},
  booktitle    = {Proceedings of the 32nd Annual European Symposium on Algorithms (ESA)},  
  pages        = {77:1--77:17},
  year         = {2024},
  doi          = {10.4230/LIPICS.ESA.2024.77},  
}

@article{ref:AgarwalOn24,
  author       = {Pankaj K. Agarwal and
                  Matthew J. Katz and
                  Micha Sharir},
  title        = {On reverse shortest paths in geometric proximity graphs},
  journal      = {Computational Geometry: Theory and Applications},
  volume       = {117},
  pages        = {102053},
  year         = {2024},  
  doi          = {10.1016/J.COMGEO.2023.102053},
}

@Article{ref:ChazelleCu93,
    author= "Bernard Chazelle",
    title = "Cutting hyperplanes for divide-and-conquer",
    journal= "Discrete and Computational Geometry",
    volume=  "9",
    number= "2",
    pages= "145--158",
    year= "1993",
    doi="10.1007/BF02189314",
}

@Article{ref:KatzAn97,
    author= "Matthew J. Katz and Micha Sharir",
    title = "An Expander-Based Approach to Geometric Optimization",
    journal= "SIAM Journal on Computing",
    volume=  "26",
    number= "5",
    pages= "1384--1408",
    year= "1997",
    doi = "10.1137/S0097539794268649",
}

@Inproceedings{ref:LubotzkyEx86,
    author= "Alexander Lubotzky and Ralph Phillips and Peter Sarnak:",
    title= "Explicit Expanders and the {Ramanujan} Conjectures",
    booktitle= "Proceedings of the 18th Annual ACM Symposium on Theory of
		Computing (STOC)",
    year= "1986",
    pages= "240--246",
    doi= "10.1145/12130.12154",
}

@article{ref:WangAl26,
    author= "Haitao Wang",
    title = "Algorithms for computing closest points for segments",
    journal= "Computational Geometry: Theory and Applications",
    volume=  "130",
    number= "102196",
    pages= "1--20",
    year= "2026",
    doi= "10.1016/j.comgeo.2025.102196",
}

@article{ref:MatousekRa91,
    author= "Ji\u{r}\'{i} Matou\v{s}ek",
    title = "Randomized optimal algorithm for slope selection",
    journal= "Information Processing Letters",
    volume=  "39",
    number= "",
    pages= "183--187",
    year= "1991",
    doi= "10.1016/0020-0190(91)90177-J",
}

@Article{ref:BlumTi73,
    author= "Manuel Blum and Robert W. Floyd and Vaughan Pratt and Ronald L. Rivest and Robert E. Tarjan",
    title = "Time bounds for selection",
    journal= "Journal of Computer and System Sciences",
    volume=  "7",
    number= "",
    pages= "448--461",
    year= "1973",
    doi= "10.1016/S0022-0000(73)80033-9"
}

\end{document}